\documentclass{article}
\usepackage{fullpage,amsmath,amsthm,amssymb,dsfont} 
\usepackage{algorithm2e}
\usepackage{hyperref}
\usepackage{booktabs}
\usepackage{multirow}
\usepackage{cleveref}
\usepackage{xcolor}
\usepackage{comment}
\usepackage{graphicx}
\usepackage{tikz}
\usepackage{thm-restate}
\usepackage{algorithm2e}
\usepackage{makecell}
\usetikzlibrary{positioning, arrows.meta, shapes.geometric}
\RestyleAlgo{ruled}

\newcommand{\eps}{\varepsilon}
\newcommand{\T}{\mathcal{T}}

\newcommand{\E}{\mathbb{E}}
\newcommand{\rz}{\mathbf{z}}
\newcommand{\rx}{\mathbf{x}}
\newcommand{\one}{\mathds{1}}

\newcommand{\rs}{\mathbf{s}}
\newcommand{\R}{\mathbb{R}}

\newcommand{\cJ}{\mathcal{J}}
\newcommand{\I}{\mathcal{I}}

\newcommand{\rZ}{\mathbf{Z}}

\newcommand{\rc}{\mathbf{c}}
\newcommand{\raq}{\mathbf{q}}

\newcommand{\bq}{\widehat{\mathbf{m}}}

\newcommand{\rr}{\mathbf{r}}

\newcommand{\rM}{\mathbf{M}}

\newcommand{\ra}{\mathbf{a}}
\newcommand{\Vecd}{\mathcal{V}}
\newcommand{\rbeta}{\mathbf{b}}
\DeclareMathOperator{\mx}{max}
\DeclareMathOperator{\mn}{min}
\newcommand{\cL}{\mathcal{L}}
\newcommand{\ri}{\mathbf{i}}

\DeclareMathOperator*{\median}{median}

\DeclareMathOperator{\err}{err}

\DeclareMathOperator{\herdisc}{herdisc}
\DeclareMathOperator{\disc}{disc}
\DeclareMathOperator{\Anc}{Anc}
\DeclareMathOperator{\pAnc}{pAnc}

\newtheorem{theorem}{Theorem}

\newtheorem{lemma}{Lemma}
\newtheorem{definition}{Definition}
\newtheorem{corollary}{Corollary}

\newtheorem{innerrestatethm}{Restatement of Theorem}
\newenvironment{restatethm}[1]
  {\renewcommand\theinnerrestatethm{#1}\innerrestatethm}
  {\endinnerrestatethm}

\renewcommand{\thefootnote}{\fnsymbol{footnote}}

\newcommand{\todo}[1]{\textcolor{red}{[TODO: #1]}}

\title{The Binary Tree Mechanism is Optimal for Approximate Differentially Private Continual Counting}

\author{Konstantina Bairaktari\footnotemark[1] \and Kasper Green Larsen\footnotemark[1]}

\date{}

\begin{document}

\maketitle

\footnotetext[1]{Department of Computer Science, Aarhus University. Supported by the European Union (ERC, TUCLA, 101125203). Views and opinions expressed are however those of the author(s) only and do not necessarily reflect those of the European Union or the European Research Council. Neither the European Union nor the granting authority can be held responsible for them.}
\renewcommand{\thefootnote}{\arabic{footnote}}

\begin{abstract}
Private continual counting is a fundamental problem in differential privacy: given a binary stream of length $n$, where each $1$ corresponds to the contribution of one individual, the goal is to release all running counts while protecting the privacy of each individual. The standard algorithm is the binary tree mechanism, whose Gaussian-noise variant achieves expected $\ell_\infty$ error proportional to $\log^{3/2} n$ for approximate differential privacy. Whether this dependence on the stream length is necessary has remained a central open problem.

In this work, we resolve the dependence on $n$ by proving that every differentially private mechanism for continual counting must incur expected $\ell_\infty$ error $\Omega(\log^{3/2} n)$. This shows that the binary tree mechanism is asymptotically optimal in the approximate-DP setting. 

As a consequence, we also obtain a largest-possible separation between hereditary discrepancy and private $\ell_\infty$ error for linear queries, showing that the known general upper bound in terms of hereditary discrepancy has the optimal dependence on the number of queries.
\end{abstract}

\section{Introduction}

How can an organization track the occurrence of sensitive events over time without compromising the privacy of individuals? Such scenarios arise naturally in many settings, including monitoring daily disease cases, counting user interactions on a platform, or measuring responses to a public poll. This motivates the study of \textit{private continual counting}, the focus of this paper. The standard formulation of the problem models time as discrete steps $1, \ldots, n$, where at each step a bit indicates whether an event occurred. Each $1$ corresponds to the contribution of one individual and each individual participates in at most one event. The goal is to release an accurate running count at every time step without revealing any individual's contribution.

The study of private continual counting was initiated by \cite{UpperBound1, UpperBound2}, who established the foundations for differentially private computations under continual observation. Differential privacy (DP) \cite{PureDPDefinition} is the standard framework for formal privacy guarantees, ensuring that the output of a mechanism reveals little about any individual's data. Private continual counting has found applications in a broad range of tasks including private online learning \cite{DPOnline, DPFTRL}, convex optimization \cite{PrivateSCOl1} and federated learning \cite{PrivateLearningwDPFTRL}. In many of these applications, it appears as a subroutine, and improving its accuracy directly translates to improvements in the downstream task. The appropriate notion of error, however, depends on the application: $\ell_\infty$ error (the expected worst-case error over all time steps) is a natural measure for monitoring tasks, while $\ell_2$ error (the expected root mean squared error over all time steps) is more relevant for learning applications \cite{l2LowerBound}. In this work, we focus on the $\ell_\infty$ error.

The most well-known method for private continual counting is the binary tree mechanism \cite{UpperBound1, UpperBound2}, which computes carefully structured subset counts in a differentially private manner and combines them to recover the running count at any given time step. 

For pure differential privacy, the binary tree mechanism uses Laplace noise to make the subset counts $\eps$-differentially private. Here, and throughout the paper, we focus on the regime where $\eps \in (0,1)$. While the original analysis achieves $\ell_\infty$ error $O(\log^{5/2} (n)/\varepsilon)$ \cite{UpperBound1,UpperBound2}, a more careful analysis of the same algorithm yields an improved bound of $O(\log^2 (n)/\eps)$.  The best known lower bound is $\Omega(\log (n)/\eps)$ \cite{UpperBound1}, leaving a multiplicative gap of $\log n$. 

For approximate DP, the binary tree mechanism uses Gaussian noise instead of Laplace, yielding an $(\varepsilon, \delta)$-differentially private mechanism with $\ell_\infty$ error $O(\log^{3/2} (n) \sqrt{\log(1/\delta)}/\varepsilon)$ \cite{UpperBoundApprox}. Prior to our work, the best known lower bound related to the $\ell_\infty$ error under approximate DP was an $\Omega(\log (n)/\eps)$ lower bound on the square root of the expected value of the \emph{square} of the worst-case error over all $n$ time steps ($\sqrt{\ell_{\infty}^2}$ error)~\cite{ConstantUB}. 

Whether the $\ell_\infty$ error upper bounds are tight in the dependency on the stream length $n$ has remained an important open problem. In this work, we resolve the approximate DP setting by proving a lower bound of $\Omega(\log^{3/2}(n)/\max\{\eps,\delta\})$. For $\delta = 0$, this applies to pure DP, tightening the gap there as well. \Cref{tab:dp_bounds} summarizes the best known upper and lower bounds for pure and approximate DP with $\eps \in (0,1)$.

\begin{table}[ht]
    \centering
    \begin{tabular}{llccc}
        \toprule
        \textbf{Setting} & \textbf{Error Notion} & \textbf{Upper Bound} & \textbf{Lower Bound} \\
        \midrule
        Pure DP
            & $\ell_\infty$ error
            & $\log^{2}(n)/\varepsilon$ & $\log^{3/2}(n)/\varepsilon$ [Our result] \\
        \midrule
        \multirow{2}{*}{Approx. DP}
            & $\ell_\infty$ error
            & $\log^{3/2}(n)\log^{1/2}(1/\delta)/\varepsilon$ \cite{UpperBoundApprox} & $\log^{3/2}(n)/\max\{\varepsilon,\delta\}$ [Our result]\\
        \cmidrule{2-4}
            & $\sqrt{\ell_\infty^2}$ error
            & & $\log(n)/\varepsilon$ \cite{ConstantUB} \\
        \bottomrule
    \end{tabular}
    \caption{Best known asymptotic lower and upper bounds under pure and approximate DP for $0 < \varepsilon < 1$ and $0 < \delta < C$ for a small constant $C > 0$.}
    \label{tab:dp_bounds}
\end{table}

Before stating our result, we formally define the problem of private continual counting and the privacy notion we work with. Let $x \in \{0,1\}^n$ be the binary input vector, where $x_i = 1$ indicates that an event occurred at time step $i$ and since the mechanism operates in an online fashion, at time step $i$ only the entries $x_1, \ldots, x_i$ are visible to it. The goal is to approximate, at each time step $i$, the running count $\sum_{j\leq i}x_j$ in a differentially private manner. These $n$ running counts can be written compactly as $Ax$, where $A$ is the $n \times n$ lower triangular matrix of all ones, known as the prefix-sum matrix, with $(Ax)_i=\sum_{j \leq i} x_j$. This problem can be seen as a specific instance of linear queries (mechanisms that answer queries of the form $Ax$ for a fixed matrix $A$), where the query matrix is the prefix-sum matrix $A$. A private mechanism $\rM$ must output a vector $\rM(x) \in \R^n$, where $\rM(x)_i$ is the estimate released at time step $i$, so that $\|\rM(x)-Ax\|_\infty$ is small while ensuring privacy of any individual's entry in $x$. With this notation, the $\ell_\infty$ error is $\E[\|\rM(x)-Ax\|_\infty]$, whereas the $\ell_2$ error is $n^{-1/2}\E[\|\rM(x)-Ax\|_2]$.

We say that two binary vectors $x,x' \in \{0,1\}^n$ are neighboring if they differ in exactly one coordinate. The mechanism $\rM$ is required to satisfy differential privacy, defined as follows.
\begin{definition}[Differential Privacy \cite{PureDPDefinition}]
Fix $\eps > 0$ and $\delta \in [0,1)$. A mechanism $\rM$ is $(\eps,\delta)$-DP if for any measurable set $S \subseteq \R^n$,
\[\Pr[\rM(x) \in S] \leq e^\eps \Pr[\rM(x') \in S]+\delta\] for all pairs of neighboring inputs $x,x' \in \{0,1\}^n$. 
\end{definition}
The case where $\delta = 0$ is referred to as pure DP and the case where $\delta > 0$ is referred to as approximate DP. 

Our main result establishes a fundamental lower bound on the $\ell_\infty$ error of any $(\eps, \delta)$-DP mechanism for continual counting.

\begin{theorem}
    \label{thm:main}
    For any $n^{-1+\Omega(1)}  < \eps < 1$, $0 < \delta < C$, it holds that any $(\eps,\delta)$-DP mechanism $\rM$ for continual counting has error
    \[\max_{x \in \{0,1\}^n }\E[\|\rM(x)-Ax\|_\infty] = \Omega\left(\frac{\log^{3/2} n }{\max\{\eps,\delta\}}\right),\]
    where $A$ is the $n \times n$ prefix-sum matrix. 
\end{theorem}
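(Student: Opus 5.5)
We will argue by contradiction, combining a packing/reconstruction argument with the recursive dyadic structure of the prefix-sum matrix. Suppose $\rM$ is $(\eps,\delta)$-DP with $\max_x \E[\|\rM(x)-Ax\|_\infty] \le \alpha$, where $\alpha = c\log^{3/2}(n)/\max\{\eps,\delta\}$ for a small constant $c$.

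\textbf{Step 1: group privacy and scaling.} Group privacy makes $\rM$ $(k\eps, k e^{k\eps}\delta)$-DP on inputs at Hamming distance $k$. Choose $k \approx 1/(C'\max\{\eps,\delta\})$. Then inputs differing in $k$ coordinates are $(O(1),O(1))$-indistinguishable with a small constant, and the statement reduces to the case $\eps,\delta$ constant with error $c\log^{3/2} n$. For this we use blocked inputs: positions are grouped into blocks of $k$ consecutive entries, each block all-$0$ or all-$1$. The condition $\eps > n^{-1+\Omega(1)}$ ensures there are still $n/k = n^{\Omega(1)}$ blocks, so $\log(n/k)=\Theta(\log n)$.

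\textbf{Step 2: hard distribution from a dyadic tree.} Let $m = n/k$ and $L=\log m$. We build a random input via a binary tree of depth $L$. At each internal node we place an independent random signed perturbation, chosen so that its effect on the prefix sums is localized to the node's interval and has magnitude about $\sqrt{L}$ in units of blocks. Because the $L$ levels combine, some prefix ends up with an accumulated deviation of order $L\cdot\sqrt{L}$ in the worst case over the output coordinates. The intended mechanism is then: an $\ell_\infty$-accurate mechanism at accuracy $c L^{3/2}$ lets one decode, level by level, a constant fraction of the $\Theta(L)$ bits of ``which subtree has larger count'' along a root-to-leaf path. Each level contributes $\sqrt{L}$ because we ask for accuracy simultaneously over $2^\ell$ disjoint intervals at level $\ell$, and a union-bound/max-of-Gaussians argument gives $\sqrt{\log(\#\text{intervals})}$.

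\textbf{Step 3: the lower bound at one level, then chaining.} For a single scale, we use the standard argument that an $(O(1),O(1))$-DP mechanism answering $N$ independent one-dimensional counting tasks, each with sensitivity one, incurs expected max error $\Omega(\sqrt{\log N})$. Concretely, the chance of being off by $t$ on a fixed task is at least $e^{-O(t^2)}$ for the Gaussian-like hard instance, and independence across tasks gives a maximum of order $\sqrt{\log N}$. The key and hardest step is showing that these errors \emph{add} across the $L$ levels instead of being absorbed by the worst level. This yields $L\cdot\sqrt{L}$ rather than $\sqrt{L}$ or $L$.

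Our plan for that step is an adaptive descent. At level $\ell$ we condition on the output and select the child interval whose prefix error has the ``right'' sign, using symmetry of the hard distribution: flipping a node's perturbation is a neighbor-type change at the block scale. Each such choice then adds an expected $\Omega(\sqrt{L})$ to the signed error at the endpoint. A martingale/telescoping argument over the $L$ levels then gives expected error $\Omega(L^{3/2})$ at the chosen leaf prefix. Rescaling by $k$ from Step 1 gives $\Omega(\log^{3/2}(n)/\max\{\eps,\delta\})$. We expect the main obstacle to be making the per-level error gains independent and nonnegative in conditional expectation despite the adaptive choice. We would handle this with a coupling argument that treats the mechanism's output restricted to a subtree as a DP mechanism on that subtree's inputs.
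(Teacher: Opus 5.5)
Your Step 1 (group privacy plus blocking) is a legitimate way to reduce to constant $\eps,\delta$ and accounts for the $1/\max\{\eps,\delta\}$ factor. The gap is in Step 3: the per-level bound is false as stated, and it is the only source of your $\sqrt{L}$. Take $N$ sensitivity-one counts on disjoint data with constant $(\eps_0,\delta_0)$. Add to each count i.i.d.\ discrete Laplace noise truncated to $\{-T,\dots,T\}$, with $T=\lceil \ln(1/\delta_0)/\eps_0\rceil=O(1)$. Neighbors differ in only one count, so this is $(\eps_0,\delta_0)$-DP, and its $\ell_\infty$ error is at most $T$ almost surely. Approximate DP therefore does not force $e^{-O(t^2)}$ tails, and no $\sqrt{\log N}$ can come from a union bound over the $2^\ell$ disjoint intervals of a level.

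In the paper the $\sqrt{L}$ has a different origin. A single bit $x_i$ shifts the output by $Ae_i=\sum_{u\in\I(i)}\chi^u$, where $\I(i)$ has at most one node per level and typically $\Theta(h)$ nodes. Privacy therefore imposes an $\ell_2$-type constraint \emph{across} levels. The paper exploits this with orthonormal Haar-like test vectors $\psi^i$, one per leaf of $\T^s$, for which $\sum_i\langle\psi^i,\mu^i\rangle^2=\Omega(h^2)$ while the noise contributes only $O(h\sigma^2)$. The Geng--Viswanath bound then forces $\sigma=\Omega(\sqrt{h}/\max\{\eps,\delta\})$. Your Step 2 also yields no contradiction: decoding $\Theta(L)$ signs of perturbations, each involving $\sqrt{L}$ blocks, is not something DP forbids.

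The step you flag as the main obstacle, making per-level contributions add, is where the proof actually lives, and your plan does not supply it. A mechanism may correlate its noise across levels so that contributions cancel. Nothing in ``condition on the output and pick the child with the right sign'' quantifies how much error must survive. The paper does the following:
\begin{itemize}
\item It reduces to oblivious noise $\rM(x)=Ax+\rz$ (Lemma~\ref{lem:oblivious}) and expands $\rz=\sum_u\rbeta_u\chi^u$.
\item It lower bounds $\E[\max_i\rz_i-\min_i\rz_i]$ using $\max\{a,b\}-\min\{c,d\}=\frac{a-c}{2}+\frac{b-d}{2}+\frac{|a-b|}{2}+\frac{|c-d|}{2}$. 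With this identity the range at $u$ grows, non-adaptively and with nonnegative increments, by the residual unpredictability of a node coefficient \emph{given the noise in the subtrees below} (Lemma~\ref{lem:noiseAcc}).
\item The privacy attack must then recover that subtree noise even though $x_i$ contaminates the coefficients indexed by $\I(i)$. This is why the paper conditions on grandchildren rather than children and restricts attention to $\T^s$.
\item It uses median-of-three to turn first-absolute-moment bounds into second-moment bounds.
\item It converts the weighted sum of residuals into the statement that a single residual is large, via per-level medians.
\end{itemize}
Your proposal contains none of these ingredients or substitutes for them, so the chaining step and hence the $L\cdot\sqrt{L}$ bound remain unproved.
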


This shows that the binary tree mechanism has an asymptotically optimal dependency on $n$ in the approximate DP setting.

We now make a few observations about the lower bound in~\Cref{thm:main}. The binary tree mechanism can operate in both the continual observation model, where the input stream arrives online, and the offline observation model \cite{SparseLowerBound}, where the mechanism receives the entire input stream at once. Our lower bound holds in the offline observation model and, since any mechanism for continual observation also yields a mechanism for offline observation, lower bounds in the offline model imply lower bounds for the continual observation model. Specifically, such lower bounds emphasize that the hardness comes from releasing $n$ outputs privately rather than one.

\paragraph{Maximum Separation Between Privacy and Hereditary Discrepancy.}
As an interesting corollary of Theorem~\ref{thm:main}, we also derive the largest possible separation between hereditary discrepancy and the error of $(\eps,\delta)$-DP mechanisms for \emph{linear queries}.

Recall that continual counting, or prefix sums, can be seen as a special case where $A$ is the $n\times n$ prefix-sum matrix. More generally, for an arbitrary query matrix $A \in \R^{m \times n}$, a well-studied task is to design a private mechanism $\rM(x)$ for releasing an approximation of $Ax$ for $x \in \{0,1\}^n$. Let $\err^{\eps,\delta}_\infty(A)$ denote the smallest achievable $\ell_\infty$ error of an $(\eps,\delta)$-DP mechanism $\rM$:
\[
\err^{\eps,\delta}_\infty(A) := \inf_{\rM : \rM\textrm{ is }(\eps,\delta)\textrm{-DP}} \max_{x \in \{0,1\}^n} \E[\|\rM(x)-Ax\|_\infty].
\]

The quantity $\err^{\eps,\delta}_\infty(A)$ is known to be closely related to the notion of \emph{hereditary discrepancy}. The discrepancy of an $m \times n$ matrix $A$, denoted $\disc_\infty(A)$, is defined as $\disc_\infty(A) = \min_{x \in \{-1,1\}^n} \|Ax\|_\infty$. If $S \subseteq [n]$ denotes a subset of column indices and $A_{|S}$ the  matrix $A$ restricted to the columns indexed by $S$ then
\[
\herdisc_\infty(A) := \max_{S \subseteq [n]} \disc(A_{|S}).
\]
The ellipsoid mechanism $\rM$ of Nikolov, Talwar and Zhang~\cite{GeometrySparse} can be analyzed through the factorization norm framework of Matousek, Nikolov and Talwar~\cite{HereditaryDiscrepancy} to guarantee
\[
\max_{x \in \{0,1\}^n} \E[\|\rM(x)-Ax\|_\infty] = O\left(\eps^{-1} \sqrt{\log m \cdot \log(1/\delta)} \cdot \gamma_2(A)\right),
\]
where $\gamma_2(A)$ is the $\gamma_2$ factorization norm of $A$. It was further shown in~\cite{HereditaryDiscrepancy} that $\gamma_2(A) = O(\log m \cdot \herdisc_\infty(A))$. Combining it all gives
\[
\err^{\eps,\delta}_\infty(A) = O(\eps^{-1} \log^{3/2} m \cdot \sqrt{\log(1/\delta)}\cdot  \herdisc_\infty(A)).
\]
Ignoring the $\eps$ and $\delta$ dependencies, this is $\err^{\eps,\delta}_\infty(A) = O_{\eps,\delta}(\log^{3/2} m \cdot \herdisc_\infty(A))$. At the same time,~\cite{GeometrySparse} showed that $\err_\infty^{\eps,\delta}(A) = \Omega_{\eps,\delta}(\herdisc_\infty(A)/\log m)$ (with a slight twist in the lower bound requiring the definition of $\herdisc_\infty(A)$ to be only over $S \subseteq [n]$ of linear size). In summary, there is a constant $C_{\eps,\delta}>0$ (depending on $\eps$ and $\delta$) such that
\[
 \herdisc_\infty(A)/(C_{\eps,\delta} \log m) \leq \err_\infty^{\eps,\delta}(A) \leq C_{\eps,\delta}\herdisc_\infty(A) \log^{3/2} m.
\]
If $A$ denotes the prefix-sum matrix, we have $\herdisc_\infty(A) = 1$. Our lower bound in Theorem~\ref{thm:main} thus shows that the general upper bound $\err_\infty^{\eps,\delta}(A) = O_{\eps,\delta}(\herdisc_\infty(A) \log^{3/2} m)$ cannot be improved. Prior to our work, the largest separation was a factor $\log m$, obtainable from prior lower bounds for prefix sums.

\paragraph{Overview.}
We now outline the proof of~\Cref{thm:main}. To simplify the task, we first use a reduction inspired by prior work~\cite{ObliviousLinear} that allows us to assume that the mechanism $\rM$ is \emph{oblivious}, meaning that $\rM(x) = Ax + \rz$ for a noise vector $\rz$ independent of $x$, but possibly with correlations between its coordinates. We thus need to prove that $\E[\|\rz\|_\infty] = \Omega(\log^{3/2} (n)/\max\{\eps,\delta\})$. The proof has two main parts. In the first part, we express the noise vector in a basis induced by a binary tree and show that the noise necessarily accumulates along root-to-leaf paths, converting the task of lower bounding $\E[\|\rz\|_\infty]$ into lower bounding a
weighted sum of local residual noise contributions in the tree.
In the second part, we use a privacy argument: if all these residual noise contributions were too small, we could construct an adversary that distinguishes neighboring inputs by projecting the output of the mechanism onto a carefully chosen collection of vectors we call \textit{linear measurements}, contradicting differential privacy. Combining these two parts yields the lower bound of~\Cref{thm:main}. We believe the techniques developed in this proof introduce several ideas that may find further applications in lower bound proofs for differential privacy. We give a detailed presentation of our main ideas in Section~\ref{sec:overview} before proceeding to the formal proof in Section~\ref{sec:formal-proof}.

\section{Related Work}

\paragraph{Private Continual Counting.} Private continual counting was introduced independently by \cite{UpperBound1} and \cite{UpperBound2}, who proposed the binary tree mechanism and established the foundations of differentially private computation under continual observation. While the original analysis achieves $\ell_\infty$ error $O_{\eps,\delta}(\log^{5/2} n)$ for pure DP and $O_{\eps,\delta}(\log^{2}n)$ for approximate DP, a more careful analysis of the same algorithm yields improved bounds of $O_{\eps,\delta}(\log^2n)$ and $O_{\eps,\delta}(\log^{3/2}n)$ \cite{UpperBoundApprox} respectively. Additionally, \cite{l2LowerBound} introduced a mechanism achieving $\ell_2^2$ error (that is, $\max_{x\in \{0,1\}^n} \E[n^{-1}\|\rM(x)-Ax\|_2^2]$) of $O_{\eps,\delta}(\log^2 n)$ and a matching lower bound in terms of the dependence on $n$. This implies an $\Omega_{\eps,\delta}(\log n)$ lower bound for the $\sqrt{\ell_\infty^2}$ error. \cite{EfficientBinary} and \cite{ImprovedPure} proposed variants of tree mechanisms achieving better error in terms of the constant factors compared to the standard binary tree mechanism for the offline setting and the pure DP continual setting respectively. Since the binary tree mechanism is a special case of matrix factorization mechanisms, other approaches to improving the error include finding alternative matrix factorizations \cite{ConstantUB, l2LowerBound}. The best previously known lower bounds were obtained via packing arguments or by lower bounding the error of matrix factorization mechanisms \cite{UpperBound1, l2LowerBound}. Our work complements this line of research by proving a matching lower bound for approximate DP, resolving the question of whether the binary tree mechanism is asymptotically optimal for $\ell_\infty$ error.

\paragraph{The Sparse Case.} A natural variant of private continual counting assumes that the total number of events (number of $1$'s in $x$) is bounded by $n_e\ll n$. In this sparse setting, \cite{RectangleQueries} gave an upper bound of $O_{\eps,\delta}(\min\{\log n + \log^2 n_e, n_e\})$, which can significantly improve over the binary tree mechanism when $n_e$ is small. \cite{SparseLowerBound} proved a lower bound of $\Omega_{\eps,\delta}(\min\{\log n,n_e \})$, showing that this improvement is tight. In our setting, we do not assume a bound on $n_e$, so the results apply with $n_e \leq n$, recovering the suboptimal $O_{\eps,\delta}(\log^2 n)$ upper bound and $\Omega_{\eps,\delta}(\log n)$ lower bound.

\paragraph{Applications.} Private continual counting has found applications across a range of machine learning and optimization tasks. The work of \cite{DPOnline} was the first to use it for differentially private online learning via a follow-the-leader approach combined with the binary tree mechanism. Subsequent works extended this to the full-information and bandit setting \cite{DPFTRL,DPFTRL2}. These techniques were applied to federated learning \cite{PrivateLearningwDPFTRL}, enabling practical private deep learning without sampling or shuffling.  More recently, \cite{CorrelatedNoise} showed that DP-FTRL, a correlated noise mechanism based on continual counting, provably outperforms DP-SGD for differentially private learning. In the context of private stochastic convex optimization, variations of the binary tree mechanism have been used to extend the Frank Wolfe algorithm to differential privacy \cite{PrivateSCOl1, PrivateSCO}. In all these applications, improving the accuracy of continual counting translates to improved accuracy in the downstream task.

\paragraph{Related Continual Observation Problems.} Several other problems in the continual observation model are closely related to private continual counting. The problem of privately counting distinct elements in the case where items can be both inserted and deleted was studied by \cite{CountingDistinct, DistinctExtension, EfficientDistinct}, where continual counting on a difference stream serves as a subroutine. \cite{FrameworkPrivateSums} studied continual decaying sums, a generalization of continual counting where older observations are down-weighted over time. Continual counting with privacy expiration, where the privacy guarantee degrades gracefully over time, was studied by \cite{GradualPrivacy}. Graph problems under continual observation, including triangle counting and other graph statistics, were studied by \cite{ContinualGraphs, Graphs}, with continual counting appearing as a key subroutine in several of these results.

\section{Proof Overview and Main Ideas}
\label{sec:overview}
In this section, we present the high level ideas and overall structure of our lower bound proof for private continual counting. Along the way, we present the intuition behind each step and explain the barriers we overcome. Throughout, we assume that the mechanism $\rM$ is oblivious, i.e.\ $\rM(x)=Ax+\rz$ for a noise vector $\rz$ independent of $x$ (see~\Cref{sec:formal-proof} for the justification).

\paragraph{Bounding Noise via Linear Measurements.}
In order to best motivate our approach to lower bounding the noise $\E[\|\rz\|_{\infty}]$, let us start by sketching an approach for lower bounding $\E[\|\rz\|_2^2]$, and then argue how we adapt it to a local notion of noise we call \textit{residual noise}, which will ultimately yield a lower bound on $\E[\|\rz\|_{\infty}]$. 

Fix a coordinate $i \in [n]$. If $\rM(x) = Ax + \rz$ is $(\eps,\delta)$-DP for small enough constants $\eps,\delta>0$, then it must be the case that for any neighboring pair of inputs $x',x'' \in \{0,1\}^n$ that differ in the $i$'th input bit, if $\rx$ is chosen uniformly among $\{x',x''\}$, then it is not possible for an adversary to guess $\rx$ from $\rM(\rx)=A\rx + \rz$ with high probability. Importantly, this must be the case even when given knowledge of the pair $x',x''$. Since $x',x''$ differ only in the $i$'th input bit, the adversary just needs to guess the value of $\rx_i$.

Without loss of generality we consider that $x'_i=0$ and $x''_i=1$, whereas $x'_j = x''_j$ for all $j \neq i$. Let $\rx$ be uniform in $\{x',x''\}$. We consider an adversary that estimates $\rx_i$ in the following way. They first compute $Ax'$ and subtract it from $\rM(\rx)$ to obtain $A\rx - Ax' + \rz = \rx_iA e_i + \rz$. The idea is now to define a \emph{linear measurement} $\psi^i \in \R^n$ for $i$ (that is, a vector defining a linear form $y \to \langle \psi^i, y\rangle$ applied to a vector $y \in \R^n$), and let the adversary compute the estimate
\[
\hat{\rx}_i = \frac{\langle\psi^{i},  \rx_{i} Ae_{i} + \rz\rangle}{\langle \psi^{i}, A e_{i}  \rangle} = \rx_{i} + \frac{\langle\psi^{i}, \rz \rangle}{\langle \psi^{i},A e_{i} \rangle}.
\]

Since $\hat \rx_i$ is a function of $\rM(\rx)-Ax'$, it is $(\eps,\delta)$-differentially private by the postprocessing property of differential privacy.  We can apply the following lemma to $\hat \rx_i$ because it is an estimate of the binary variable $\rx_i$ with additive noise $\langle\psi^{i}, \rz \rangle/\langle \psi^{i},A e_{i} \rangle$.

\begin{lemma}[\cite{GengViswanath2016ApproxDPNoise}]
\label{lem:releaseVar}
Any $(\eps,\delta)$-DP mechanism $\rM$ releasing a private binary variable $x \in \{0,1\}$ as $\rM(x) = x + \rz$ for a noise variable $\rz$ independent of $x$ must have $\E[\rz^2] = \Omega(\min\{\eps^{-2}, \delta^{-2}\})$.
\end{lemma}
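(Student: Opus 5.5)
We prove Lemma~\ref{lem:releaseVar} by contradiction: small noise variance lets an adversary guess $x$ from $\rM(x)=x+\rz$ with high confidence, which $(\eps,\delta)$-DP forbids. Write $\sigma^2=\E[\rz^2]$ and let $\mu_0,\mu_1$ be the laws of $\rz$ and $1+\rz$. Applying the DP inequality to a set $S$ and to its complement gives the two-sided bound
\[
\Pr[\rz\in S]\le e^\eps\Pr[1+\rz\in S]+\delta .
\]
Combined with $\Pr[\rz\notin S]\le e^\eps \Pr[1+\rz\notin S]+\delta$, this bounds the total variation distance between $\mu_0$ and $\mu_1$ by $(e^\eps-1)+\delta = O(\eps+\delta)$. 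Rather than using total variation directly, we shift the noise so the argument yields a quantitative variance bound and not just $\sigma$ bounded below by a constant.

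The key step is a shifting (group privacy) argument on a fine grid. For the pure part, the natural tool is the scaling trick. Let $k=\lfloor c/\max\{\eps,\delta\}\rfloor$ for a small constant $c$. Although the mechanism only receives inputs in $\{0,1\}$, its post-processing $y\mapsto y - t$ shows that the laws of $t+\rz$ and $t+1+\rz$ are $(\eps,\delta)$-close for every real $t$. Chaining $k$ such steps shows that the laws of $\rz$ and $j+\rz$ are $(j\eps, j e^{j\eps}\delta)$-close for every integer $0\le j\le k$. With $j\eps\le c$ and $j\delta$ small, all of these laws are within total variation distance $O(c)$ of each other.

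Now take the interval $I=[-\sigma\cdot 2,\sigma\cdot 2]$. By Chebyshev, $\Pr[\rz\in I]\ge 3/4$, so each shift satisfies $\Pr[j+\rz\in I]\ge 3/4-O(c)\ge 1/2$. Suppose $\sigma< k/8$. Then among the shifts $j=0,\dots,k$, the sets $I-j$ spread out, and if $4\sigma<1$ they are disjoint. We therefore group the shifts: choose shifts spaced by $\lceil 4\sigma\rceil+1$, which makes the translates of $I$ pairwise disjoint. There are about $k/(4\sigma+2)\ge 3$ such shifts, and each places mass at least $1/2$ on its own translate. This is impossible, since the laws of the shifted $\rz$ are all close to that of $\rz$, which has total mass $1$ spread over disjoint sets. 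Hence $\sigma=\Omega(k)$, so $\E[\rz^2]=\Omega(1/\max\{\eps,\delta\}^2)=\Omega(\min\{\eps^{-2},\delta^{-2}\})$.

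The main obstacle is controlling the accumulation of $\delta$ under chaining, since group privacy gives $j e^{j\eps}\delta$. This is handled by capping $j\le c/\delta$ as well as $j\le c/\eps$, which is exactly why the bound involves the maximum of $\eps$ and $\delta$. The other bookkeeping step is making the disjointness count correct. To get it, we compare each shifted law to the single law of $\rz$ via total variation, so that the sum of the masses of $\rz$ on the disjoint translates is at least (number of shifts)$\cdot(1/2-O(c))>1$, which gives the contradiction.
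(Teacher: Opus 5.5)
Your argument is correct in substance, but it takes a different route from the paper. The paper gives no proof of this lemma. It imports the bound from Geng and Viswanath, whose result comes out of a characterization of near-optimal noise-adding mechanisms (uniform and discrete Laplace noise) in the high-privacy regime $(\eps,\delta)\to(0,0)$.

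You instead give a self-contained proof in three steps:
\begin{itemize}
\item Translation is post-processing, so $t+\rz$ and $t+1+\rz$ are $(\eps,\delta)$-close for every real $t$.
\item Group privacy can then be chained $k=\Theta(1/\max\{\eps,\delta\})$ times while keeping the total variation distance between $\rz$ and $\rz+j$ at $O(c)$.
\item A second-moment bound puts most of the mass of $\rz$ on $[-2\sigma,2\sigma]$, and its translates cannot all do the same unless $\sigma=\Omega(k)$.
\end{itemize}
This buys transparency and exactly the generality the paper needs. It uses only the unit shift and works for arbitrary real-valued noise, which is the form of $\langle\psi^i,\rr\rangle/\langle\psi^i,\mu^i\rangle$ in the application. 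It also shows plainly why $\max\{\eps,\delta\}$ governs the bound. The citation gives more (matching upper bounds, optimal mechanisms, the $\ell_1$ cost), none of which the paper uses.

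\textbf{Counting step.} The counting is slightly off. With $\sigma<k/8$, the spacing $\lceil 4\sigma\rceil+1$ can be about $k/2+1$. Then $\{0,\dots,k\}$ is only guaranteed to contain two such shifts, not three, and two translates of mass at least $1/2$ each give no contradiction. You do not need the grouping at all. If $4\sigma<k$, then $I$ and $I-k$ are disjoint, with $\Pr[\rz\in I]\ge 3/4$ and $\Pr[\rz\in I-k]=\Pr[\rz+k\in I]\ge 3/4-O(c)$. The total exceeds $1$, so $\sigma\ge k/4$ directly. The case $\sigma=0$ is trivially non-private.

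\textbf{Parameter regime.} The argument is only nontrivial when $\max\{\eps,\delta\}\le c$, so that $k\ge 1$. For constant $\eps$, the total-variation bound $e^{\eps}-1+\delta$ is useless. This matches the regime of the cited result. However, the paper applies the lemma with parameters like $6\eps$ for $\eps$ up to $1$, so add the one-line complement. Let $p=\Pr[|\rz|\ge 1/2]$ and apply the DP inequality to $S=(-1/2,1/2)$. This gives $1-p\le e^{\eps}p+\delta$, hence $\E[\rz^2]\ge p/4\ge (1-\delta)/(4(1+e^{\eps}))=\Omega(1)$ for bounded $\eps$ and $\delta$ bounded away from $1$.
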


We conclude that the error of the adversary is
\[
\frac{\E[\langle \psi^{i}, \rz \rangle^2]} {\langle \psi^{i}, Ae_{i} \rangle^2}= \Omega(\min\{\eps^{-2},\delta^{-2}\}).
\]
Intuitively, this says that the noise $\rz$, when measured in the direction of $\psi^{i}$, must be large relative to the signal $\langle \psi^i,Ae_i\rangle$. Otherwise, the adversary could distinguish $\rx_i = 0$ from $\rx_i=1$, violating privacy. Overall, it implies that for all $i \in [n]$, $\E[\langle \psi^{i}, \rz \rangle^2]=\Omega(\langle \psi^{i}, Ae_{i} \rangle^2\min\{\eps^{-2},\delta^{-2}\})$.

The idea is now to carefully choose the linear measurements $\psi^i$ as orthogonal and unit length. In this way, we have $\sum_i \E[\langle \psi^i, \rz\rangle^2] \leq \E[\|\rz\|_2^2]$. 
Therefore,
\[
\E[\|\rz\|^2_2] \geq \sum_{i}\E[\langle \psi^{i}, \rz \rangle^2] = \Omega\left(\sum_i\langle \psi^{i}, Ae_{i} \rangle^2 \cdot \min\{\eps^{-2}, \delta^{-2}\}\right).
\]
Variations of this idea can be used to prove the known $\E[\|\rz\|_2^2] = \Omega(n \log^2 n)$ lower bound by choosing $\psi^i = UV^te_i$ where $A=U\Sigma V^t$ is the singular value decomposition of $A$. Unfortunately, this only yields a lower bound of $\Omega(\log^2 n)$ on $\E[\|\rz\|_\infty^2]$.  

\paragraph{Tree Basis.} Inspired by the approach just outlined, we next attempt to adapt these ideas to lower bounding $\E[\|\rz\|_\infty]$. The main barrier is that if an adversary again computes $\rx_i Ae_i + \rz$ and takes linear measurements $\psi^i$, then we inevitably lower bound $\E[\|\rz\|_2^2]$. Our first task is thus to find a different decomposition of the mechanism output so that the adversary can instead compute vectors  of the form $\rx_i \nu^i + \raq$, where $\raq$ is a noise vector whose entries reflect a local rather than global contribution of $\rz$, and $\nu^i$ is chosen so that $\sum_i \langle  \psi^i, \nu^i \rangle^2$ is large, for some linear measurements $\psi^i$.

Recall that the proof has two main parts. We now describe the first: rewriting the noise in a basis more closely resembling the binary tree mechanism, which converts the task of lower bounding $\E[\|\rz\|_\infty]$ into lower bounding a weighted sum of residual noise contributions. The basis is constructed as follows. Assuming for simplicity that $n$ is a power of two, consider a perfect binary tree $\T$ of height $h = \log_2 n$, i.e.\ $\T$ has $n$ leaves. We think of the $n$ leaves as being associated with the coordinates of $Ax+\rz$ so that the $i$-th leaf corresponds to $(Ax+\rz)_i$.

We now define a set of vectors from $\T$ that together form a basis for $\R^n$. For every node $u \in \T$ that is
either the root or the right child of its parent, we define a vector $\chi^u$
which is the indicator vector of the leaves in the subtree rooted at $u$. 
There are exactly $n$ such vectors, and they form a basis of $\R^n$ (though not an orthonormal basis). See \Cref{fig:tree} for an example of the tree basis for $n=8$ leaves.

\begin{figure}[ht]
\centering
\scalebox{0.75}{
\begin{tikzpicture}[
    every node/.style={font=\small},
    filled/.style={circle, draw, fill=black!80, text=white,
                   minimum size=1.6em, inner sep=2pt},
    empty/.style={circle, draw, fill=white,
                  minimum size=1.6em, inner sep=2pt},
    edge/.style={-},
    lbl/.style={font=\footnotesize, align=center},
    depth/.style={font=\scriptsize, text=gray},
]

\node[depth] at (-7.0, 0)    {$d=0$};
\node[depth] at (-7.0, -1.6) {$d=1$};
\node[depth] at (-7.0, -3.2) {$d=2$};
\node[depth] at (-7.0, -4.8) {$d=3$};
\draw[dashed, gray!40, thin] (-6.6, 0.6) -- (-6.6, -5.4);

\node[lbl] at (0, 0.9)  {$\chi^{1\text{-}8}$};
\node[lbl, text=gray!70] at (0, 0.55) {\scriptsize \texttt{11111111}};
\node[filled] (r) at (0, 0) {};

\node[empty]  (l0) at (-3, -1.6) {};
\node[filled] (l1) at ( 3, -1.6) {};
\node[lbl] at (3, -0.7)  {$\chi^{5\text{-}8}$};
\node[lbl, text=gray!70] at (3, -1.05) {\scriptsize \texttt{00001111}};

\node[empty]  (l00) at (-4.5, -3.2) {};
\node[filled] (l01) at (-1.5, -3.2) {};
\node[empty]  (l10) at ( 1.5, -3.2) {};
\node[filled] (l11) at ( 4.5, -3.2) {};
\node[lbl] at (-1.5, -2.3)  {$\chi^{3\text{-}4}$};
\node[lbl, text=gray!70] at (-1.5, -2.65) {\scriptsize \texttt{00110000}};
\node[lbl] at (4.5, -2.3)   {$\chi^{7\text{-}8}$};
\node[lbl, text=gray!70] at (4.5, -2.65) {\scriptsize \texttt{00000011}};

\node[empty]  (l000) at (-5.25, -4.8) {};
\node[filled] (l001) at (-3.75, -4.8) {};
\node[empty]  (l010) at (-2.25, -4.8) {};
\node[filled] (l011) at (-0.75, -4.8) {};
\node[empty]  (l100) at ( 0.75, -4.8) {};
\node[filled] (l101) at ( 2.25, -4.8) {};
\node[empty]  (l110) at ( 3.75, -4.8) {};
\node[filled] (l111) at ( 5.25, -4.8) {};

\node[lbl] at (-3.75, -3.9) {$\chi^{2}$};
\node[lbl, text=gray!70] at (-3.75, -4.25) {\scriptsize \texttt{01000000}};
\node[lbl] at (-0.75, -3.9) {$\chi^{4}$};
\node[lbl, text=gray!70] at (-0.75, -4.25) {\scriptsize \texttt{00010000}};
\node[lbl] at ( 2.25, -3.9) {$\chi^{6}$};
\node[lbl, text=gray!70] at ( 2.25, -4.25) {\scriptsize \texttt{00000100}};
\node[lbl] at ( 5.25, -3.9) {$\chi^{8}$};
\node[lbl, text=gray!70] at ( 5.25, -4.25) {\scriptsize \texttt{00000001}};

\draw[edge] (r)   -- (l0);
\draw[edge] (r)   -- (l1);
\draw[edge] (l0)  -- (l00);
\draw[edge] (l0)  -- (l01);
\draw[edge] (l1)  -- (l10);
\draw[edge] (l1)  -- (l11);
\draw[edge] (l00) -- (l000);
\draw[edge] (l00) -- (l001);
\draw[edge] (l01) -- (l010);
\draw[edge] (l01) -- (l011);
\draw[edge] (l10) -- (l100);
\draw[edge] (l10) -- (l101);
\draw[edge] (l11) -- (l110);
\draw[edge] (l11) -- (l111);

\draw[gray!40, thin] (-5.8, -5.4) -- (5.8, -5.4);
\foreach \x/\i in {-5.25/1, -3.75/2, -2.25/3, -0.75/4,
                    0.75/5,  2.25/6,  3.75/7,  5.25/8}{
    \node[font=\footnotesize, text=gray] at (\x, -5.7) {$\i$};
}
\node[font=\scriptsize, text=gray] at (0, -6.0) {leaf index $i$};

\node[filled, minimum size=1.0em] (leg1) at (-2.5, -6.8) {};
\node[font=\footnotesize, right=0.15cm of leg1, text=black]
    {root or right child ($\chi^u$ defined)};
\node[empty, minimum size=1.0em] (leg2) at (-2.5, -7.4) {};
\node[font=\footnotesize, right=0.15cm of leg2, text=black]
    {left child (no $\chi^u$)};

\end{tikzpicture}
}
\caption{Binary tree basis for $n=8$. Filled nodes are the root or right
children and have a $\chi^u$ vector defined, shown with its binary
indicator string. Empty nodes are left children and have no $\chi^u$ vector. For clarity, each vector $\chi^u$ in this figure uses the range of leaves in the subtree rooted at $u$ to denote $u$, e.g. the root has $\chi^{1-8}$. The depth of the tree level is denoted by $d$.}
\label{fig:tree}
\end{figure}

The reason for using this basis is that every suffix vector $Ae_i$ has a
simple representation in it. More precisely, if $Ae_i$ is the vector
with ones in positions $j \geq i$ and zeros elsewhere, then
\[
    Ae_i = \sum_{u \in \I(i)} \chi^u
\]
for the set $\I(i)$ of nodes that are right children of the path to the $i$'th
leaf from the root (we refer the reader to~\Cref{sec:formal-proof} for the precise statement). Letting $\Vecd(\T) \subset \T$ be the set of nodes $u$ with a $\chi^u$ vector defined (right children and the root), we see that for every input $x$, we may write
\[
    Ax = \sum_{u \in \Vecd(\T)} a_u \chi^u,
    \qquad
    a_u = \sum_{i : u \in \I(i)} x_i.
\]
Similarly, since the $\chi^u$ vectors form a basis, the noise vector also has a unique expansion
\[
    \rz = \sum_{u \in \Vecd(\T)} \rbeta_u \chi^u.
\]
For a leaf $i$, the released noise in coordinate $i$ is therefore the sum of
the coefficients $\rbeta_u$ over the nodes $u$ on the root-to-leaf path to
$i$ for which $\chi^u$ is defined:
\[
    \rz_i = \sum_{u \in \Anc(i) \cap \Vecd(\T)} \rbeta_u,
\]
where $\Anc(i)$ are the ancestors of the $i$'th leaf (including the $i$'th leaf). Thus the $\ell_\infty$ error is controlled by these path sums.

\paragraph{Noise Accumulation.} Having established the tree basis, we now turn to the second part of the proof: constructing the adversary. To define the vectors $\nu^i$ and $\raq$, we need to understand how the noise $\rz$ is distributed across the tree. Specifically, we argue that noise accumulates up the tree when it has ``residual uncertainties'', a notion we now formalize.

Consider a node $u \in \T$, let $\T^u$ be the subtree rooted at $u$, and let $\Vecd(\T^u) = \T^u \cap \Vecd(\T)$ be the set of nodes $v$ in $\T^u$ with a $\chi^v$ vector defined. We now consider the accumulation of the noise $\rz_i$ up to node $u$ defined as
\[
\rz_i^{< u} := \sum_{v \in (\Vecd(\T^u) \setminus \{u\}) \cap \Anc(i)} \rbeta_v.
\]
Thus $\rz_i^{<u}$ gives the contribution to $\rz_i$ from nodes below $u$. Our next key idea is to lower bound $\E[\max_i \rz_i - \min_i \rz_i]$ rather than $\E[\|\rz\|_\infty]$ directly, avoiding working with absolute values. This is sufficient since $\|\rz\|_\infty\geq (\max_i \rz_i - \min_i \rz_i)/2$. If $\cL(\T^u)$ denotes the indices of the leaves of $\T^u$, we may now consider
\[
\mx^{< u}(\rz) := \max_{i \in \cL(\T^u)} \rz_i^{< u}
\]
and
\[
\mn^{< u}(\rz) := \min_{i \in \cL(\T^u)} \rz^{< u}_i.
\]
Our main inductive argument is to show that $\E[\mx^{<u}(\rz)-\mn^{<u}(\rz)]$ increases as we move up the tree towards the root. Finally, for the root $\rho$, we have 
\[
\mx^{<\rho}(\rz)-\mn^{<\rho}(\rz) = (\max_i \rz_i - \rbeta_\rho)-(\min_i \rz_i - \rbeta_\rho) = \max_i \rz_i - \min_i \rz_i,
\]
and thus a lower bound on $\E[\mx^{<\rho}(\rz)-\mn^{<\rho}(\rz) ]$ directly translates to a lower bound on $\E[\|\rz\|_\infty]$.

Consider now a non-leaf node $u$ and the range $\mx^{<u}(\rz)-\mn^{<u}(\rz)$. Let $\ell(u)$ denote the left child of $u$ and $r(u)$ the right child. We see that
\[
\mx^{<u}(\rz)-\mn^{<u}(\rz) = \max\{\mx^{<\ell(u)}(\rz), \mx^{<r(u)}(\rz) + \rbeta_{r(u)} \} - \min\{\mn^{<\ell(u)}(\rz), \mn^{<r(u)}(\rz) + \rbeta_{r(u)} \}.
\]
This holds since every $i \in \T^{r(u)}$ has $r(u)$ as an ancestor (only the right child of $u$ has a $\chi$-vector). Using that $\max\{a,b\}-\min\{c,d\} = (a-c)/2+(b-d)/2 + |a-b|/2+|c-d|/2$ we get
\begin{align}
\label{eq:recurse}
\mx^{<u}(\rz)-\mn^{<u}(\rz) &=  \frac{\mx^{<\ell(u)}(\rz)-\mn^{<\ell(u)}(\rz)}{2}+\frac{\mx^{<r(u)}(\rz)-\mn^{<r(u)}(\rz)}{2} \nonumber \\
&+\frac{\left|\mx^{<\ell(u)}(\rz)-\mx^{<r(u)}(\rz)-\rbeta_{r(u)} \right|}{2}+\frac{\left|\mn^{<\ell(u)}(\rz)-\mn^{<r(u)}(\rz) - \rbeta_{r(u)} \right|}{2}.
\end{align}
The first line is very convenient as it expresses $\mx^{<u}(\rz)-\mn^{<u}(\rz)$ recursively as the average of the same quantity in the two children's subtrees. What remains is thus to argue that the second line introduces a noticeable growth in $\max^{<u}(\rz)-\min^{<u}(\rz)$ as we move up the tree.

Taking expectation on both sides of~\eqref{eq:recurse}, we thus need to lower bound 
\begin{align}
\E\left[\left|\mx^{<\ell(u)}(\rz)-\mx^{<r(u)}(\rz)-\rbeta_{r(u)} \right|\right], \label{eq:topredict}
\end{align}
and symmetrically for $\min$. Here our main observation is that $\mx^{<\ell(u)}(\rz)$ and $\mx^{<r(u)}(\rz)$ depend only on the noise \emph{below} $\ell(u)$ and $r(u)$. We thus define a type of ``residual noise''. Let $\Delta^{<u}(\rz)$ denote the random variable giving the noise $\rbeta_v$ for every $v$ in the subtree below $u$ (excluding $u$). We now define $\sigma_{u}$ to capture how well $\rbeta_{r(u)}$ can be predicted from the noise $\Delta^{<\ell(u)}(\rz), \Delta^{<r(u)}(\rz)$:
\[
\sigma_{u}(\Delta^{<\ell(u)}(\rz), \Delta^{<r(u)}(\rz)) := \inf_m \E[|\rbeta_{r(u)}-m| \mid \Delta^{<\ell(u)}(\rz), \Delta^{<r(u)}(\rz)].
\]

Returning to~\eqref{eq:topredict} and recalling that $\mx^{<\ell(u)}(\rz)$ and $\mx^{<r(u)}(\rz)$ depend only on $\Delta^{<\ell(u)}(\rz), \Delta^{<r(u)}(\rz)$ we thus conclude that~\eqref{eq:topredict} is at least
\begin{align*}
    \sigma_{u} := \E[\sigma_{u}(\Delta^{<\ell(u)}(\rz), \Delta^{<r(u)}(\rz))].
\end{align*}
We thus get an increase in $\E[\max^{<u}(\rz)-\min^{<u}(\rz)]$ if there is still randomness left in $\rbeta_{r(u)}$ conditioned on the noise in the subtrees below $\ell(u)$ and $r(u)$. In particular, recursively unfolding~\eqref{eq:recurse} and letting $d(u)$ denote the depth of a node $u \in \T$ with the root $\rho$ at depth $0$, we get
\begin{align}
\label{eq:unfolded}
    \E[\|\rz\|_\infty] &\geq \E[\max_i \rz_i - \min_i \rz_i]/2 \nonumber\\
    &= \E[\mx^{<\rho}(\rz) - \mn^{<\rho}(\rz)]/2 \nonumber\\
    &\geq \frac{1}{4} \sum_{d=0}^{h} \sum_{u \in \T : d(u)=d} 2^{-d} \sigma_{u}.
\end{align}
Observe that $\sum_{d=0}^{h} \sum_{u \in \T : d(u)=d} 2^{-d} \sigma_{u}$ equals the expected sum of $\sigma_{u}$ values along a uniform random root-to-leaf path in $\T$.

Before proceeding to explain how we lower bound this sum of $\sigma_{u}$ values, let us compare to the classic binary tree mechanism upper bound. There, every node in the binary tree adds independent noise of variance $\Theta(\log n)$. This means that even conditioned on the noise in the subtrees below $\ell(u)$ and $r(u)$, we still have $\E[\inf_m \E[|\rbeta_{r(u)}-m| \mid \Delta^{<\ell(u)}(\rz), \Delta^{<r(u)}(\rz)]] = \Omega(\sqrt{\log n})$. Summing $\Omega(\sqrt{\log n})$ for each of the $\log_2 n$ nodes along a root-to-leaf path leads to the claimed $\Omega(\log^{3/2} n)$ lower bound.

By~\eqref{eq:unfolded}, our goal reduces to showing that no private mechanism can make do with small residual noise $\sigma_u$. 

\paragraph{Obstacles for Bounding Residual Noise via Linear Measurements.}

The main obstacle is that the adversary cannot directly compute the subtree noise $\Delta^{<\ell(u)}(\rz), \Delta^{<r(u)}(\rz)$ needed to predict $\rbeta_{r(u)}$. To see why, consider the following attempt: let the adversary compute $\rx_i Ae_i + \rz$, write it (uniquely) in the basis $\chi^u$ and thereby obtain, for every node $u \in \Vecd(\T)$, the value 
\[
\rbeta_u + \rx_i \one\{u \in \I(i)\}.
\]
 Next, for every non-leaf node $u \in \T$, if the adversary could compute $\Delta^{<\ell(u)}(\rz), \Delta^{<r(u)}(\rz)$, they could subtract the conditional median $m^u(\Delta^{<\ell(u)}(\rz), \Delta^{<r(u)}(\rz))$, the smallest minimizer of $\E[|\rbeta_{r(u)} - m | \mid \Delta^{<\ell(u)}(\rz), \Delta^{<r(u)}(\rz)]$ over $m$, from $\rbeta_{r(u)} + \rx_i \one\{r(u) \in \I(i)\}$, obtaining a residual $\raq_u := \rbeta_{r(u)}- m^u(\Delta^{<\ell(u)}(\rz), \Delta^{<r(u)}(\rz))$ satisfying $\E[|\raq_u|] = \sigma_{u}$ (see \Cref{fig:adversary-tree} for an illustration). This would produce the desired vector $\rx_i\nu^i + \raq$, where both $\nu^i$ and $\raq$ are indexed over non-leaf nodes $u \in \T$, with $\nu^i$ having entries $\one\{r(u) \in \I(i)\}$.

\begin{figure}[ht]
\centering
\scalebox{0.75}{
\begin{tikzpicture}[
    every node/.style={font=\small},
    filled/.style={circle, draw, fill=black!80, text=white,
                   minimum size=2.4em, inner sep=0pt},
    empty/.style={circle, draw, fill=white,
                  minimum size=2.4em, inner sep=0pt},
    subtree/.style={isosceles triangle, draw=black!40,
                    fill=gray!10,
                    shape border rotate=90, minimum height=2.2cm,
                    minimum width=2.6cm, inner sep=0pt,
                    isosceles triangle apex angle=50},
    edge/.style={-},
    ann/.style={->, >=Stealth, gray, thin},
]

\node[empty] (u)  {$u$};
\node[empty,  below left=1.2cm and 2.2cm of u]  (lu) {\small$\ell(u)$};
\node[filled, below right=1.2cm and 2.2cm of u] (ru) {\small$r(u)$};

\draw[edge] (u) -- (lu);
\draw[edge] (u) -- (ru);

\node[subtree, below=0.5cm of lu] (Tl) {};
\node[subtree, below=0.5cm of ru] (Tr) {};

\draw[edge] (lu.south) -- (Tl.north);
\draw[edge] (ru.south) -- (Tr.north);

\node[above=0.55cm of Tl.south, align=center]
    {\footnotesize $\Delta^{<\ell(u)}(\rz)$};
\node[above=0.55cm of Tr.south, align=center]
    {\footnotesize $\Delta^{<r(u)}(\rz)$};

\node[right=0.4cm of ru, align=left] (rulbl)
    {\footnotesize $\rbeta_{r(u)} + \rx_i \cdot \one\{r(u) \in \mathcal{I}(i)\}$};

\node[below=0.5cm of rulbl, align=left, text=gray] (ann)
    {\footnotesize subtract $m^u\!\left(\Delta^{<\ell(u)}, \Delta^{<r(u)}\right)$ \\
     \footnotesize $\Rightarrow$ residual noise $\sigma_u$};
\draw[ann] (rulbl.south) -- (ann.north);

\end{tikzpicture}
}
\caption{Adversary at node $u \in \T$ (which may or may not be in $\Vecd(\T)$): given the noise $\Delta^{<\ell(u)}(\rz)$ and
$\Delta^{<r(u)}(\rz)$ in the two subtrees, the adversary computes
$m^u$ and subtracts it from $\rbeta_{r(u)} + \rx_i \cdot \one\{r(u) \in \mathcal{I}(i)\}$
to obtain a residual $\sigma_u$.}
\label{fig:adversary-tree}
\end{figure}

Unfortunately the adversary cannot compute $\Delta^{<\ell(u)}(\rz), \Delta^{<r(u)}(\rz)$ from $\rx_i A e_i + \rz$. The obstacle is that for any node $v \in \I(i)$ in the subtrees below $\ell(u)$ or $r(u)$, the adversary observes $\rbeta_v+\rx_i$ rather than $\rbeta_v$ alone. Since $\rx_i$ is unknown, the adversary cannot extract the noise $\rbeta_v$. To remedy this, we take a step back and slightly redefine $\sigma_{u}$.

\paragraph{Noise Prediction via Grandchildren.}
Consider again a non-leaf node $u$ and assume the children of $u$ are also internal nodes. Instead of arguing that $\mx^{<u}(\rz)-\mn^{<u}(\rz)$ grows compared to $(\mx^{<\ell(u)}(\rz)-\mn^{<\ell(u)}(\rz))/2 + (\mx^{<r(u)}(\rz)-\mn^{<r(u)}(\rz))/2$, we can also relate it to the grandchildren.

In more detail, consider first the two grandchildren $\ell(\ell(u))$ and $\ell(r(u))$ who are both left children of their parent. Since $\Anc(i)$ for $i \in \cL(\T^{\ell(\ell(u))})$ contains none of $r(u), r(\ell(u))$ and $r(r(u))$ (the only nodes among $u$, the children of $u$, and the grandchildren, who have a $\chi$-vector), we get $\rz^{<u}_i = \rz^{<\ell(\ell(u))}_i$. Likewise, $\Anc(i)$ for $i \in \cL(\T^{\ell(r(u))})$ contains precisely $r(u)$ and hence, similarly to~\eqref{eq:recurse}, we get
\begin{align}
&\mx^{<u}(\rz)-\mn^{<u}(\rz) \geq \nonumber\\
&\frac{\mx^{<\ell(\ell(u))}(\rz)-\mn^{<\ell(\ell(u))}(\rz)}{2}+\frac{\mx^{<\ell(r(u))}(\rz)-\mn^{<\ell(r(u))}(\rz)}{2} +\nonumber\\
&\frac{\left|\mx^{<\ell(\ell(u))}(\rz)-\mx^{<\ell(r(u))}(\rz)-\rbeta_{r(u)} \right|}{2}+\frac{\left|\mn^{<\ell(\ell(u))}(\rz)-\mn^{<\ell(r(u))}(\rz) - \rbeta_{r(u)} \right|}{2}.\label{eq:grand1}
\end{align}
We can thus define $\sigma_{u,\ell}$ to measure the residual noise in $\rbeta_{r(u)}$ conditioned on the grandchildren $\ell(\ell(u))$ and $\ell(r(u))$ (see \Cref{fig:adversary-ell}):
\[
\sigma_{u,\ell} := \E[\inf_m \E[|\rbeta_{r(u)}-m| \mid \Delta^{<\ell(\ell(u))}(\rz), \Delta^{<\ell(r(u))}(\rz)]].
\]
The symmetric case where we consider the grandchildren $r(\ell(u))$ and $r(r(u))$ is slightly more complicated as $\Anc(i)$ for $i \in \cL(\T^{r(\ell(u))})$ contains $r(\ell(u))$ and for $i \in \cL(\T^{r(r(u))})$ contains both $r(u)$ and $r(r(u))$. This gives 
\begin{align}
&\mx^{<u}(\rz)-\mn^{<u}(\rz) \geq \nonumber\\
&\frac{\mx^{<r(\ell(u))}(\rz)-\mn^{<r(\ell(u))}(\rz)}{2}+\frac{\mx^{<r(r(u))}(\rz)-\mn^{<r(r(u))}(\rz)}{2} + \nonumber\\
&\frac{\left|\mx^{<r(\ell(u))}(\rz)-\mx^{<r(r(u))}(\rz)+\rbeta_{r(\ell(u))}-\rbeta_{r(u)}-\rbeta_{r(r(u))} \right|}{2}+\nonumber \\
&\frac{\left|\mn^{<r(\ell(u))}(\rz)-\mn^{<r(r(u))}(\rz)+\rbeta_{r(\ell(u))}-\rbeta_{r(u)}-\rbeta_{r(r(u))}\right|}{2}. \label{eq:grand2}
\end{align}
We handle this by measuring the residual noise in the sum $\rbeta_{r(\ell(u))}-\rbeta_{r(u)}-\rbeta_{r(r(u))}$ instead (see \Cref{fig:adversary-r}):
\[
\sigma_{u,r} := \E[\inf_m \E[|\rbeta_{r(\ell(u))}-\rbeta_{r(u)}-\rbeta_{r(r(u))}-m| \mid \Delta^{<r(\ell(u))}(\rz), \Delta^{<r(r(u))}(\rz)]].
\]

\begin{figure}[ht]
\centering
\scalebox{0.75}{
\begin{tikzpicture}[
    every node/.style={font=\small},
    filled/.style={circle, draw, fill=black!80, text=white,
                   minimum size=2.4em, inner sep=0pt},
    empty/.style={circle, draw, fill=white,
                  minimum size=2.4em, inner sep=0pt},
    subtree/.style={isosceles triangle, draw=black!60,
                    fill=gray!10,
                    shape border rotate=90, minimum height=1.8cm,
                    minimum width=2.2cm, inner sep=0pt,
                    isosceles triangle apex angle=50},
    subtree-unk/.style={isosceles triangle, draw=black!60, dashed,
                    fill=white,
                    shape border rotate=90, minimum height=1.8cm,
                    minimum width=2.2cm, inner sep=0pt,
                    isosceles triangle apex angle=50},
    edge/.style={-},
    ann/.style={->, >=Stealth, black!50, thin},
]

\node[empty] (u) at (0, 0) {$u$};

\node[empty,  below left=1.2cm and 3.5cm of u]  (lu) {$\ell(u)$};
\node[filled, below right=1.2cm and 3.5cm of u] (ru) {$r(u)$};
\draw[edge] (u) -- (lu);
\draw[edge] (u) -- (ru);

\node[empty,  below left=1.2cm and 1.5cm of lu]  (llu) {$\ell(\ell(u))$};
\node[filled, below right=1.2cm and 1.5cm of lu] (rlu) {$r(\ell(u))$};
\node[empty,  below left=1.2cm and 1.5cm of ru]  (lru) {$\ell(r(u))$};
\node[filled, below right=1.2cm and 1.5cm of ru] (rru) {$r(r(u))$};
\draw[edge] (lu) -- (llu);
\draw[edge] (lu) -- (rlu);
\draw[edge] (ru) -- (lru);
\draw[edge] (ru) -- (rru);

\node[subtree,     below=0.4cm of llu] (Tllu) {};
\node[subtree-unk, below=0.4cm of rlu] (Trlu) {};
\node[subtree,     below=0.4cm of lru] (Tlru) {};
\node[subtree-unk, below=0.4cm of rru] (Trru) {};

\draw[edge] (llu.south) -- (Tllu.north);
\draw[edge] (rlu.south) -- (Trlu.north);
\draw[edge] (lru.south) -- (Tlru.north);
\draw[edge] (rru.south) -- (Trru.north);

\node[above=0.25cm of Tllu.south, align=center]
    {\footnotesize $\Delta^{<\ell(\ell(u))}(\rz)$};
\node[above=0.25cm of Trlu.south, align=center]
    {\footnotesize $\Delta^{<r(\ell(u))}(\rz)$};
\node[above=0.25cm of Tlru.south, align=center]
    {\footnotesize $\Delta^{<\ell(r(u))}(\rz)$};
\node[above=0.25cm of Trru.south, align=center]
    {\footnotesize $\Delta^{<r(r(u))}(\rz)$};

\node[below=0.05cm of Tllu, font=\scriptsize] {known};
\node[below=0.05cm of Trlu, font=\scriptsize] {unknown};
\node[below=0.05cm of Tlru, font=\scriptsize] {known};
\node[below=0.05cm of Trru, font=\scriptsize] {unknown};

\node[right=0.5cm of ru, align=left] (ann)
    {\footnotesize predict $\rbeta_{r(u)}$\\
     \footnotesize with residual noise $\sigma_{u,\ell}$};
\draw[ann] (ru.east) to[out=0, in=180] (ann.west);

\end{tikzpicture}
}
\caption{Case $\ell$ at a node $u$ of even depth: given the noise $\Delta^{<\ell(\ell(u))}(\rz)$ and $\Delta^{<\ell(r(u))}(\rz)$ in the two left grandchildren's subtrees (solid), the adversary predicts $\rbeta_{r(u)}$ with residual noise $\sigma_{u,\ell}$. The right grandchildren's subtrees (dashed) need not be known.}
\label{fig:adversary-ell}
\end{figure}

\begin{figure}[ht]
\centering
\scalebox{0.75}{
\begin{tikzpicture}[
    every node/.style={font=\small},
    filled/.style={circle, draw, fill=black!80, text=white,
                   minimum size=2.4em, inner sep=0pt},
    empty/.style={circle, draw, fill=white,
                  minimum size=2.4em, inner sep=0pt},
    subtree/.style={isosceles triangle, draw=black!60,
                    fill=gray!10,
                    shape border rotate=90, minimum height=1.8cm,
                    minimum width=2.2cm, inner sep=0pt,
                    isosceles triangle apex angle=50},
    subtree-unk/.style={isosceles triangle, draw=black!60, dashed,
                    fill=white,
                    shape border rotate=90, minimum height=1.8cm,
                    minimum width=2.2cm, inner sep=0pt,
                    isosceles triangle apex angle=50},
    edge/.style={-},
    ann/.style={->, >=Stealth, black!50, thin},
]

\node[empty] (u) at (0, 0) {$u$};

\node[empty,  below left=1.2cm and 3.5cm of u]  (lu) {$\ell(u)$};
\node[filled, below right=1.2cm and 3.5cm of u] (ru) {$r(u)$};
\draw[edge] (u) -- (lu);
\draw[edge] (u) -- (ru);

\node[empty,  below left=1.2cm and 1.5cm of lu]  (llu) {$\ell(\ell(u))$};
\node[filled, below right=1.2cm and 1.5cm of lu] (rlu) {$r(\ell(u))$};
\node[empty,  below left=1.2cm and 1.5cm of ru]  (lru) {$\ell(r(u))$};
\node[filled, below right=1.2cm and 1.5cm of ru] (rru) {$r(r(u))$};
\draw[edge] (lu) -- (llu);
\draw[edge] (lu) -- (rlu);
\draw[edge] (ru) -- (lru);
\draw[edge] (ru) -- (rru);

\node[subtree-unk, below=0.4cm of llu] (Tllu) {};
\node[subtree,     below=0.4cm of rlu] (Trlu) {};
\node[subtree-unk, below=0.4cm of lru] (Tlru) {};
\node[subtree,     below=0.4cm of rru] (Trru) {};

\draw[edge] (llu.south) -- (Tllu.north);
\draw[edge] (rlu.south) -- (Trlu.north);
\draw[edge] (lru.south) -- (Tlru.north);
\draw[edge] (rru.south) -- (Trru.north);

\node[above=0.25cm of Tllu.south, align=center]
    {\footnotesize $\Delta^{<\ell(\ell(u))}(\rz)$};
\node[above=0.25cm of Trlu.south, align=center]
    {\footnotesize $\Delta^{<r(\ell(u))}(\rz)$};
\node[above=0.25cm of Tlru.south, align=center]
    {\footnotesize $\Delta^{<\ell(r(u))}(\rz)$};
\node[above=0.25cm of Trru.south, align=center]
    {\footnotesize $\Delta^{<r(r(u))}(\rz)$};

\node[below=0.05cm of Tllu, font=\scriptsize] {unknown};
\node[below=0.05cm of Trlu, font=\scriptsize] {known};
\node[below=0.05cm of Tlru, font=\scriptsize] {unknown};
\node[below=0.05cm of Trru, font=\scriptsize] {known};

\node[right=0.5cm of ru, align=left] (ann)
    {\footnotesize predict $\rbeta_{r(\ell(u))}-\rbeta_{r(u)}-\rbeta_{r(r(u))}$\\
     \footnotesize with residual noise $\sigma_{u,r}$};
\draw[ann] (ru.east) to[out=0, in=180] (ann.west);

\end{tikzpicture}
}
\caption{Case $r$ at a node $u$ of even depth: given the noise $\Delta^{<r(\ell(u))}(\rz)$ and $\Delta^{<r(r(u))}(\rz)$ in the two right grandchildren's subtrees (solid), the adversary predicts $\rbeta_{r(\ell(u))}-\rbeta_{r(u)}-\rbeta_{r(r(u))}$ with residual noise $\sigma_{u,r}$. The left grandchildren's subtrees (dashed) need not be known.}
\label{fig:adversary-r}
\end{figure}

Averaging~\eqref{eq:grand1} and~\eqref{eq:grand2}, we conclude
\begin{align*}
&\E[\mx^{<u}(\rz)-\mn^{<u}(\rz)] \geq \sigma_{u,\ell}/2 + \sigma_{u,r}/2 + \sum_{f,g \in \{\ell,r\}} \frac{\E\left[ \mx^{<f(g(u))}(\rz)-\mn^{<f(g(u))}(\rz)\right]}{4}.
\end{align*}
Starting from the root, which has depth $0$, we can recursively unfold this sum and, similarly to~\eqref{eq:unfolded}, we conclude
\begin{align}
    \E[\|\rz\|_\infty] = \Omega\left(\sum_{d=0}^{h/2} \sum_{u \in \T : d(u)=2d} 2^{-2d} (\sigma_{u,\ell} + \sigma_{u,r}) \right). \label{eq:evensum}
\end{align}
Note that the sum is again the expectation over a random root-to-leaf path, of the sum of $\sigma_{u,\ell}$ and $\sigma_{u,r}$ for nodes $u$ at even levels of $\T$.

Now, let us look carefully at what we have gained from predicting the noise from the grandchildren. The key observation is that for an adversary trying to distinguish neighboring $x',x''$ with $x'_i \neq x''_i$, it holds for any node $u$ that if $i \in \cL(\T^{r(\ell(u))}) \cup \cL(\T^{r(r(u))})$, then $v \notin \I(i)$ for all $v \in \T^{\ell(\ell(u))} \setminus \{\ell(\ell(u))\}$ and all $v \in \T^{\ell(r(u))} \setminus \{\ell(r(u))\}$. The symmetric thing holds for $i \in \cL(\T^{\ell(\ell(u))}) \cup \cL(\T^{\ell(r(u))})$. In a nutshell, if an adversary ``attacks'' a variable $\rx_i$ with $i$ in the subtree of one of the two grandchildren $r(\ell(u))$ and $r(r(u))$, then the adversary can directly compute $\Delta^{<\ell(\ell(u))}(\rz), \Delta^{<\ell(r(u))}(\rz)$ and predict $\rbeta_{r(u)}$ up to an expected absolute error of $\sigma_{u,\ell}$. A symmetric argument allows the adversary to predict $\rbeta_{r(\ell(u))}-\rbeta_{r(u)}-\rbeta_{r(r(u))}$  up to expected error $\sigma_{u,r}$ if $i$ is in the subtree of either $\ell(\ell(u))$ or $\ell(r(u))$. Let us ignore this slightly more tedious case in the proof overview.

Unfortunately, a new obstacle arises from this approach. One central idea in the linear measurements approach for bounding the noise, is to define orthogonal unit length vectors $\psi^i$ so that $\sum_i \E[\langle \psi^i, \rz \rangle^2] \leq \E[\|\rz\|_2^2]$. Now, if we follow the outlined approach and use the noise in the pair of grandchildren not containing the $i$'th leaf, then the distribution of the residual noise vector $\raq$ depends on $i$. That is, if $i$ falls in one of the subtrees rooted at the grandchildren $\ell(\ell(u)), \ell(r(u))$, we get one distribution, and if it falls in the subtrees rooted at $r(\ell(u)), r(r(u))$, we get another. If we have different residual noise vectors $\raq^i$ for different $i$, then the argument $\sum_i \E[\langle \psi^i, \raq^i \rangle^2] \leq \E[\|\raq\|_2^2]$ breaks, since $\raq$ is not a fixed vector as it depends on $i$.

\paragraph{Restricting to a Subtree.}
Our next idea is to restrict attention to a collection of leaf indices $i$ that all result in the same noise predictions across the tree $\T$. In more detail, consider a string $s \in \{\ell,r\}^{h/2}$ and let $\T^s$ be the subset of $\T$ defined as follows. Starting from the root, from a node at an even depth $2d$ we descend into both children, and from a node at an odd depth $2d-1$ we descend only to the left child if $s_d=\ell$ and only to the right child if $s_d=r$. Let $\cL(\T^s)$ be the indices of the leaves of $\T^s$.

The critical observation is that if $u \in \T^s$ is a node of even depth $2d$, then every $i \in \cL(\T^s) \cap \cL(\T^u)$ falls in one of the two subtrees $\T^{s_{d+1}(\ell(u))}$ and  $\T^{s_{d+1}(r(u))}$. This implies that an adversary trying to infer $\rx_i$ via a linear measurement can compute $\Delta^{<\widetilde{s_{d+1}}(\ell(u))}(\rz), \Delta^{<\widetilde{s_{d+1}}(r(u))}(\rz)$ regardless of which $i \in \cL(\T^s)$ the adversary attacks, where $\widetilde{\ell} = r$ and $\widetilde{r} = \ell$ denotes the opposite direction. The adversary can thus predict $\rbeta_{r(u)}$ up to expected absolute error $\sigma_{u,\ell}$ if $s_{d+1}=r$, and predict $\rbeta_{r(\ell(u))}-\rbeta_{r(u)}-\rbeta_{r(r(u))}$ up to error $\sigma_{u,r}$ if $s_{d+1}=\ell$. 

Assume for simplicity that $s = r r \cdots r$ is the all-r string, so we do not have to deal with the $\rbeta_{r(\ell(u))}-\rbeta_{r(u)}-\rbeta_{r(r(u))}$ case. For a neighboring pair $x',x''$ with $x'_i \neq x''_i$ and $\rx$ uniform in $\{x',x''\}$, an adversary can now compute 
\[
\rbeta_{r(u)}+ \rx_i \one \{r(u) \in \I(i)\} - m^u_\ell(\Delta^{<\ell(\ell(u))}(\rz), \Delta^{<\ell(r(u))}(\rz))= \raq_u + \rx_i \one \{r(u) \in \I(i)\}
\]
for every $u \in \T^s$ of even depth, where $\E[|\raq_u|] = \sigma_{u,\ell}$. Here $m^u_\ell$ is a conditional median of $\rbeta_{r(u)}$.

Letting $\nu^i$ be the vector with entries $\one\{r(u) \in \I(i)\}$ for every $u \in \T^s$ of even depth, the adversary has thus obtained a vector $\rx_i \nu^i + \raq$ with $\E[|\raq_u|] = \sigma_{u,\ell}$. Moreover, if we draw $\rs$ uniformly in $\{\ell,r\}^{h/2}$, then a uniform random root-to-leaf path in $\T^{\rs}$, is uniform random in $\T$. Thus if we can lower bound the expected sum of $\sigma_{u,\ell} + \sigma_{u,r}$ along a uniform random root-to-leaf path in \emph{every} $\T^s$, then we get the same lower bound on~\eqref{eq:evensum}. We thus fix an arbitrary $s \in \{\ell,r\}^{h/2}$ and lower bound
\begin{align}
\label{eq:ts}
\sum_{d=0}^{h/2} \sum_{u \in \T^s : d(u)=2d} 2^{-d} \sigma_{u,\widetilde{s_{d+1}}}.
\end{align}
For simplicity of the proof overview, let us merely consider $s=rr \cdots r$ and $\sum_{d=0}^{h/2} \sum_{u \in \T^s : d(u)=2d} 2^{-d} \sigma_{u,\ell}$. See \Cref{fig:Ts} for an example for $n=16$ and $s=rr$.

\begin{figure}[ht]
\centering
\scalebox{0.75}{
\begin{tikzpicture}[
    every node/.style={font=\small},
    highlighted-V/.style={circle, draw=black, fill=black!80, text=white,
                   minimum size=1.6em, inner sep=2pt},
    highlighted/.style={circle, draw=black, fill=white,
                   minimum size=1.6em, inner sep=2pt},
    normal/.style={circle, draw=black!40, dashed, fill=white,
                  minimum size=1.6em, inner sep=2pt},
    edge/.style={-, black!40, dashed},
    hedge/.style={-, thick},
    lbl/.style={font=\scriptsize, align=center},
    depth/.style={font=\scriptsize, text=gray},
]

\node[depth] at (-7.5, 0)    {$d=0$};
\node[depth] at (-7.5, -1.4) {$d=1$};
\node[depth] at (-7.5, -2.8) {$d=2$};
\node[depth] at (-7.5, -4.2) {$d=3$};
\node[depth] at (-7.5, -5.6) {$d=4$};
\draw[dashed, gray!40, thin] (-7.1, 0.5) -- (-7.1, -6.0);

\node[highlighted-V] (r) at (0, 0) {};

\node[highlighted]   (l0) at (-3, -1.4) {};
\node[highlighted-V] (l1) at ( 3, -1.4) {};

\node[normal]        (l00) at (-4.5, -2.8) {};
\node[highlighted-V] (l01) at (-1.5, -2.8) {};
\node[normal]        (l10) at ( 1.5, -2.8) {};
\node[highlighted-V] (l11) at ( 4.5, -2.8) {};

\node[normal]        (l000) at (-5.25, -4.2) {};
\node[normal]        (l001) at (-3.75, -4.2) {};
\node[highlighted]   (l010) at (-2.25, -4.2) {};
\node[highlighted-V] (l011) at (-0.75, -4.2) {};
\node[normal]        (l100) at ( 0.75, -4.2) {};
\node[normal]        (l101) at ( 2.25, -4.2) {};
\node[highlighted]   (l110) at ( 3.75, -4.2) {};
\node[highlighted-V] (l111) at ( 5.25, -4.2) {};

\node[normal]        (l0000) at (-5.625, -5.6) {};
\node[normal]        (l0001) at (-4.875, -5.6) {};
\node[normal]        (l0010) at (-4.125, -5.6) {};
\node[normal]        (l0011) at (-3.375, -5.6) {};
\node[normal]        (l0100) at (-2.625, -5.6) {};
\node[highlighted-V] (l0101) at (-1.875, -5.6) {};
\node[normal]        (l0110) at (-1.125, -5.6) {};
\node[highlighted-V] (l0111) at (-0.375, -5.6) {};
\node[normal]        (l1000) at ( 0.375, -5.6) {};
\node[normal]        (l1001) at ( 1.125, -5.6) {};
\node[normal]        (l1010) at ( 1.875, -5.6) {};
\node[normal]        (l1011) at ( 2.625, -5.6) {};
\node[normal]        (l1100) at ( 3.375, -5.6) {};
\node[highlighted-V] (l1101) at ( 4.125, -5.6) {};
\node[normal]        (l1110) at ( 4.875, -5.6) {};
\node[highlighted-V] (l1111) at ( 5.625, -5.6) {};

\draw[hedge] (r)    -- (l0);
\draw[hedge] (r)    -- (l1);
\draw[edge]  (l0)   -- (l00);
\draw[hedge] (l0)   -- (l01);
\draw[edge]  (l1)   -- (l10);
\draw[hedge] (l1)   -- (l11);
\draw[edge]  (l00)  -- (l000);
\draw[edge]  (l00)  -- (l001);
\draw[hedge] (l01)  -- (l010);
\draw[hedge] (l01)  -- (l011);
\draw[edge]  (l10)  -- (l100);
\draw[edge]  (l10)  -- (l101);
\draw[hedge] (l11)  -- (l110);
\draw[hedge] (l11)  -- (l111);
\draw[edge]  (l000) -- (l0000);
\draw[edge]  (l000) -- (l0001);
\draw[edge]  (l001) -- (l0010);
\draw[edge]  (l001) -- (l0011);
\draw[edge]  (l010) -- (l0100);
\draw[hedge] (l010) -- (l0101);
\draw[edge]  (l011) -- (l0110);
\draw[hedge] (l011) -- (l0111);
\draw[edge]  (l100) -- (l1000);
\draw[edge]  (l100) -- (l1001);
\draw[edge]  (l101) -- (l1010);
\draw[edge]  (l101) -- (l1011);
\draw[edge]  (l110) -- (l1100);
\draw[hedge] (l110) -- (l1101);
\draw[edge]  (l111) -- (l1110);
\draw[hedge] (l111) -- (l1111);

\draw[gray!40, thin] (-5.9, -6.2) -- (5.9, -6.2);
\foreach \x/\i in {
    -5.625/1, -4.875/2, -4.125/3, -3.375/4,
    -2.625/5, -1.875/6, -1.125/7, -0.375/8,
     0.375/9,  1.125/10, 1.875/11, 2.625/12,
     3.375/13, 4.125/14, 4.875/15, 5.625/16}{
    \node[font=\scriptsize, text=gray] at (\x, -6.5) {$\i$};
}
\node[font=\scriptsize, text=gray] at (0, -6.8) {leaf index $i$};

\node[highlighted-V, minimum size=1.0em] (leg1) at (-3.5, -7.5) {};
\node[font=\footnotesize, right=0.15cm of leg1, text=black]
    {node in $\mathcal{T}^s \cap \mathcal{V}(\mathcal{T})$};
\node[highlighted, minimum size=1.0em] (leg2) at (-3.5, -8.1) {};
\node[font=\footnotesize, right=0.15cm of leg2, text=black]
    {node in $\mathcal{T}^s \setminus \mathcal{V}(\mathcal{T})$};
\node[normal, minimum size=1.0em] (leg3) at (-3.5, -8.7) {};
\node[font=\footnotesize, right=0.15cm of leg3, text=black]
    {node not in $\mathcal{T}^s$};

\end{tikzpicture}
}
\caption{The subtree $\mathcal{T}^s$ for $s=rr$ in a binary tree of height $h=4$.
Nodes in $\mathcal{T}^s \cap \mathcal{V}(\mathcal{T})$ are filled black, nodes in
$\mathcal{T}^s \setminus \mathcal{V}(\mathcal{T})$ have a solid white border, and
nodes not in $\mathcal{T}^s$ have a dashed border. For each node $u \in \T^s$ at even depth ($d=0$ and $d=2$), both children of $u$ are included in $\mathcal{T}^s$. For each node $u$ at odd depth ($d=1$ and $d=3$), only the right child of $u$ is included. The highlighted leaves give 
$\mathcal{L}(\mathcal{T}^s) = \{6, 8, 14, 16\}$.}
\label{fig:Ts}
\end{figure}

\paragraph{Median-of-Three Trick.}
We now have an adversary for every $i \in \cL(\T^s)$ that can compute $\rx_i \nu^i + \raq$. Returning to the strategy of using linear measurements, the natural next step would be to define orthogonal unit length vectors $\psi^i$ for which $\sum_i \langle \psi^i, \nu^i\rangle^2$ is large. By orthogonality, we would simultaneously get $\sum_i \E[\langle \psi^i, \raq \rangle^2] \leq \E[\|\raq\|_2^2] = \sum_{u \in \T^s: d(u) \textrm{ even}} \E[\raq_u^2]$. Unfortunately, our guarantee $\E[|\raq_u|] = \sigma_{u,\ell}$ is not sufficient to give any upper bound on $\E[ \raq_u^2]$ and thereby $\E[\|\raq\|_2^2]$. To overcome this, we use a trick from Larsen, Pagh, Tetek~\cite{mediantrick}. An immediate corollary of their work is the following.
\begin{corollary}[\cite{mediantrick}]
\label{cor:median}
Let $\rZ_1,\rZ_2,\rZ_3$ be i.i.d.\ real-valued random variables. Then 
\[
\E[\median\{\rZ_1,\rZ_2,\rZ_3\}^2] \leq 3 \cdot \E[|\rZ_1|]^2.
\]
\end{corollary}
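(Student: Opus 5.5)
The plan is a tail-integration argument. Write $a := \E[|\rZ_1|]$; if $a=\infty$ there is nothing to prove, so assume $a<\infty$. For $s \geq 0$ set $p(s) := \Pr[|\rZ_1| \geq s]$, which is non-increasing, and set $F(s) := \int_0^s p(t)\,dt$. Then $F$ is non-decreasing, $F'(s) = p(s)$ almost everywhere, and $F(\infty) = \E[|\rZ_1|] = a$. Let $\mathbf{m} := \median\{\rZ_1,\rZ_2,\rZ_3\}$. We will write
\[
\E[\mathbf{m}^2] = \int_0^\infty 2s \Pr[|\mathbf{m}| \geq s]\, ds
\]
and bound the tail of $|\mathbf{m}|$ pointwise in $s$.

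\textbf{Step 1 (tail of the median).} If $\mathbf{m} \geq s > 0$, then at least two of the three values are $\geq s$. If $\mathbf{m} \leq -s$, then at least two are $\leq -s$. In either case at least two indices $j$ satisfy $|\rZ_j| \geq s$. By a union bound over the three pairs of indices and independence,
\[
\Pr[|\mathbf{m}| \geq s] \leq 3p(s)^2 .
\]
This is the only place where independence is used; identical distribution is used to write every factor as $p(s)$.

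\textbf{Step 2 (avoiding the logarithmic loss).} The naive Markov bound $p(s) \leq a/s$ applied to both factors makes $\int 2s p(s)^2\,ds$ diverge logarithmically. Applying it to only one factor gives $6a^2$, which misses the constant. The fix is to use monotonicity of $p$: for every $s$,
\[
s\,p(s) \leq \int_0^s p(t)\,dt = F(s).
\]
Hence
\[
\int_0^\infty 2s\,p(s)^2\,ds \leq \int_0^\infty 2F(s)F'(s)\,ds = F(\infty)^2 = a^2 .
\]
For rigor, $F$ is absolutely continuous and bounded, so the last equality is the chain rule for $F^2$.

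\textbf{Step 3 (combine).} Putting the pieces together,
\[
\E[\mathbf{m}^2] \leq 3\int_0^\infty 2s\,p(s)^2\,ds \leq 3a^2 = 3\,\E[|\rZ_1|]^2 .
\]

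The main obstacle is Step 2: getting a finite bound with constant exactly $3$ rather than a $\log$ factor or a $6$. The monotonicity inequality $s\,p(s) \leq F(s)$ is what resolves it. Everything else is routine measure theory, namely the layer-cake formula and Tonelli's theorem for nonnegative integrands.
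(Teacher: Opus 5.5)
Your proof is correct, and its skeleton matches the paper's. The paper first bounds $\median\{\rZ_1,\rZ_2,\rZ_3\}^2$ pointwise by the sum over the three pairs of $\min\{|\rZ_j|,|\rZ_k|\}^2$, then applies the layer-cake formula and independence. This is equivalent to your tail union bound $\Pr[|\mathbf{m}|\geq s]\leq 3p(s)^2$, and both arguments arrive at the same quantity $\int_0^\infty 2s\,p(s)^2\,ds=\int_0^\infty p(\sqrt{x})^2\,dx$.

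The real difference is how that integral is bounded by $\E[|\rZ_1|]^2$. The paper cites Lemma~\ref{lem:mediantrick} of Larsen, Pagh and Tetek with $t=2$ as a black box. Your monotonicity inequality $s\,p(s)\leq F(s)$, together with $\int_0^\infty 2F F'\,ds=F(\infty)^2$, proves exactly that case in two lines. Your version therefore makes the corollary fully self-contained at no cost. In fact the same trick, $t\,s^{t-1}p(s)^t\leq t\,F(s)^{t-1}F'(s)$, reproves the cited lemma for every positive integer $t$.
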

Now using the composition property of differential privacy, we have that the mechanism $\rM'$ releasing $(Ax+\rz^{(1)}, Ax+\rz^{(2)}, Ax+\rz^{(3)})$ for three i.i.d.\ draws of the noise $\rz$ of the mechanism $\rM$ is $(3\eps, 3\delta)$-DP. Thus even if we give an adversary access to three independent copies of the noise, privacy still guarantees that one cannot distinguish $x', x''$. Crucially, the adversary trying to infer $\rx_i$ can now compute, for every $u \in \T^s$ of even depth, the value
\begin{align*}
&\median\{\rx_i \one\{r(u) \in \I(i)\} + \raq_u^{(1)},\rx_i \one\{r(u) \in \I(i)\} + \raq_u^{(2)},\rx_i \one\{r(u) \in \I(i)\} + \raq_u^{(3)}\} =\\
& \rx_i \one\{r(u) \in \I(i)\} + \median\{ \raq_u^{(1)}, \raq_u^{(2)}, \raq_u^{(3)}\}.
\end{align*}
From Corollary~\ref{cor:median}, this results in a vector $\rx_i \nu^i + \raq$ where 
\[
\E[(\raq_u)^2] = \E[\median\{ \raq_u^{(1)}, \raq_u^{(2)}, \raq_u^{(3)}\}^2] \leq 3 \E[|\raq_u^{(1)}|]^2 = 3 \sigma_{u,\ell}^2.\]
We thus have 
\begin{align}
\E[\|\raq\|_2^2] \leq 3 \sum_{d=0}^{h/2} \sum_{u \in \T^s : d(u)=2d} \sigma^2_{u,\ell}. \label{eq:sumsquare}
\end{align}
As suggested earlier, the next step is to define the linear measurements $\psi^i$ and then use privacy to argue that $\E[\|\raq\|_2^2] = \Omega(\sum_i \langle \psi^i, \nu^i\rangle^2)$. Indeed this would lower bound the sum of $\sigma^2_{u,\ell}$. Sadly, this is not enough as we need to bound the weighted sum of $\sigma_{u,\ell}$ given in~\eqref{eq:ts}. We thus have a mismatch, both in terms of weights, but more critically,~\eqref{eq:sumsquare} is a sum of squares whereas~\eqref{eq:ts} is a sum.

Let us mention that our full proof handles the above in a slightly different manner. Rather than attacking different $\rx_i$ variables directly, we separate the noise reduction step from the linear measurement attacks, making the proof more modular. In essence, we define a derived mechanism $\widehat{\rM}$ that is $(3\eps, 3\delta)$-DP when restricted to inputs with $x_i = 0$ for $i \notin \cL(\T^s)$, and apply the linear measurements argument to $\widehat{\rM}$. Since $|\cL(\T^s)| = \sqrt{n}$ and the lower bound is polylogarithmic, the restriction to a smaller input domain is harmless.

\paragraph{From Weighted Sum to Max Bound.}
We now address the mismatch between the sum of squares in~\eqref{eq:sumsquare} and the weighted sum~\eqref{eq:ts} that we need to lower bound. We will show that for at least half of the $h/2$ choices of depth $2d$, at least half of the nodes $u \in \T^s$ must have $\sigma_{u,\ell} = \Omega(\sqrt{\log n})$. This still lower bounds~\eqref{eq:ts} by the desired $\Omega(\log^{3/2} n)$.

The way we approach it is as follows: for every even depth $2d$, consider the median
\[
m_d := \median_{u \in \T^s : d(u)=2d} \{\sigma_{u,\ell}\}.
\]
Now let $\cJ \subseteq \{0,\dots,h/2-1\}$ with $|\cJ|=h/4$ be so that the medians $m_d$ with $d \in \cJ$ are the $h/4$ smallest of the $h/2$ medians. Furthermore, for every $d \in \cJ$, let $\cJ_d$ consist of the nodes $u \in \T^s$ of depth $2d$ having the smallest $\sigma_{u,\ell}$ values. In particular, each such $\sigma_{u,\ell}$ satisfies $\sigma_{u,\ell} \leq m_d$. Let us now define 
\[
\sigma := \max_{d \in \cJ} \max_{u \in \cJ_d} \sigma_{u,\ell}.
\]
Then by definition of $\cJ$, for each of the $h/4$ values of $d$ not in $\cJ$, we have $m_d \geq \sigma$. Hence, at least half the nodes of depth $2d$ have $\sigma_{u,\ell} \geq \sigma$. Summing across the $h/4$ depths not in $\cJ$, we lower bound~\eqref{eq:ts} by
\begin{align}
\label{eq:fullsum}
    &\sum_{d=0}^{h/2} \sum_{u \in \T^s : d(u)=2d} 2^{-d} \sigma_{u,\ell} \geq 
    \sum_{d \notin \cJ} \sum_{u \in \T^s : d(u)=2d \wedge \sigma_{u,\ell} \geq m_d} 2^{-d} m_d \geq 
    \sum_{d \notin \cJ} 2^{d-1} 2^{-d} \sigma = 
    \sum_{d \notin \cJ} \sigma/2 =
    h\sigma /8.
\end{align}
With this in mind, we now set out to prove that for \emph{any} choice of $\cJ \subseteq \{0,\dots,h/2-1\}$ with $|\cJ|=h/4$ and \emph{any} choice of $\{\cJ_d\}_{d \in \cJ}$ with $\cJ_d$ containing half the nodes in $\T^s$ of depth $2d$, there is \emph{at least one} node $u \in \cup_{d \in \cJ} \cJ_d$ with $\sigma_{u,\ell} = \Omega(\sqrt{\log n})$.

What we have achieved is that we now need to handle the simpler task of showing that just one residual noise must be large, rather than showing that an intricate weighted sum is large. Furthermore, showing that just one $\sigma_{u,\ell}$ is large is also possible when given a lower bound on the sum of squared values, as in~\eqref{eq:sumsquare}, making the attacks via linear measurements amenable.

\paragraph{Privacy Attacks via Linear Measurements.} We now describe the second part of the proof: constructing the adversary using linear measurements. Having analyzed the noise structure in the tree, we have shown that the noise of mechanism $\rM$ has residual contributions $\sigma_{u,\ell}$ along the tree. It remains to show that if these residual contributions are small, then privacy is violated. Recall that for simplicity we consider $s=rr\ldots r$ and an adversary trying to distinguish neighboring inputs $x', x''$ with $x_i'\neq x_i''$ for some $i \in \cL(\T^s)$. As shown above, this adversary can compute a vector $\rx_i \nu^i + \raq$ with an entry for each non-leaf $u \in \T^s$ of even depth so that $\E[\raq_u^2] \leq 3 \sigma_{u,\ell}^2$ and $\nu^i_u = \one\{r(u) \in \I(i) \}$. We now show that if the $\sigma_{u,\ell}$ are small, the adversary can use linear measurements to distinguish $x'$ from $x''$, contradicting privacy.

Recall from above the arbitrary sets $\cJ \subseteq \{0,\dots,h/2-1\}$ with $|\cJ|=h/4$ and $\{\cJ_d\}_{d \in \cJ}$ with $\cJ_d$ containing half the nodes in $\T^s$ of depth $2d$. We need to define $\psi^i$ (specified shortly) with the goal of showing that at least one $\sigma_{u,\ell}$ for $u \in \cup_{d \in \cJ} \cJ_d$ is large. Since post-processing preserves privacy, the adversary may set all coordinates corresponding to $u \notin \cup_{d \in \cJ} \cJ_d$ to $0$, obtaining a shorter vector. It now suffices to show that $\E[\raq_u^2]$ is large for at least one remaining coordinate $u$, which by $\E[\raq_u^2]\leq 3 \sigma^2_{u,\ell}$ implies that $\sigma_{u,\ell}$ is large. Let $\sigma = \max_{d \in \cJ} \max_{u \in \cJ_d} \sigma_{u,\ell}$. Then by definition, $\E[\|\raq\|_2^2] \leq \sum_{d \in \cJ} \sum_{u \in \cJ_d} 3 \sigma^2$. Since $\T^s$ has $2^d$ nodes at even depth $2d$ and $\cJ_d$ contains half of them, we have $|\cJ_d|=2^{d-1}$, meaning deeper levels contribute more to the sum. To balance this across depths, the adversary scales each coordinate of $\rx_i \nu^i + \raq$ by $2^{-d/2}$ for nodes $u$ at even depth $2d$. Let $\rr$ be the resulting noise vector with $\rr_u = 2^{-d/2} \raq_u$ for $u \in \cup_{d \in \cJ} \cJ_d$ and let $\mu^i$ be the vector with coordinates $\mu^i_u = 2^{-d/2} \nu^i_u$. The adversary has thus computed the vector $\rx_i \mu^i  + \rr$. With $\psi^i$ as the linear measurement for leaf index $i$, the adversary computes the estimate 
\[
\hat \rx_i = \frac{\langle \psi^i, \rx_i\mu^i+\rr\rangle}{\langle \psi^i, \mu^i\rangle} = \rx_i+\frac{\langle \psi^i, \rr \rangle}{\langle \psi^i, \mu^i\rangle}.
\]

Since $\hat \rx_i$ is a private estimate of the binary variable $\rx_i$, applying~\Cref{lem:releaseVar} the error of the adversary for coordinate $i$ is 
\[
\frac{\E[\langle \psi^i, \rr \rangle^2]}{\langle \psi^i, \mu^i\rangle^2} = \Omega(\min\{\eps^{-2}, \delta^{-2}\}).
\]

If we choose orthogonal unit length vectors $\psi^i$ for $i \in \cL(\T^s)$, then $\sum_{i \in \cL(\T^s)} \E[\langle \psi^i, \rr\rangle^2 ] \leq \E[\|\rr\|_2^2]$ and
\begin{align*}
\E[\|\rr\|_2^2] &= \sum_{d \in \cJ} \sum_{u \in \cJ_d} \E[\rr_u^2]= \sum_{d \in \cJ} \sum_{u \in \cJ_d} 2^{-d} \E[\raq_u^2] \\
&\leq \sum_{d \in \cJ} \sum_{u \in \cJ_d} 2^{-d} 3 \sigma^2 = O(h \sigma^2)
\end{align*}
 since $\sum_{d\in \cJ} |\cJ_d|\cdot 2^{-d} = |\cJ|/2 = h/8$.

 The final step of the proof is now to choose appropriate vectors $\psi^i$ so that $\sum_i \langle \psi^i , \mu^i \rangle^2$ is large. The choice of $\psi^i$ is inspired by the Haar wavelet basis. Given a perfect binary tree over $n$ leaves, the Haar wavelet basis has a vector $h^u$ for each internal node $u \in \T$. If the subtree rooted at $u$ has $n_u$ leaves, then the vector $h^u$ has coordinates $-1/\sqrt{n_u}$ for leaves in the subtree rooted at $\ell(u)$ and it has coordinates $1/\sqrt{n_u}$ for leaves in the subtree rooted at $r(u)$. Finally, the basis also contains the vector with all coordinates $1/\sqrt{n}$. The Haar wavelet basis is an orthonormal basis of $\R^n$. If we now arrange the vectors $h^u$ as columns of a matrix $H$ and compute its transpose $H^t$, then the resulting matrix is also orthonormal. More precisely, we use as $\psi^i$ the columns of $H^t$, where $H$ is the matrix of the Haar wavelet basis restricted to $\T^s$.

 Concretely, we let $\psi^i$ be the vector with one coordinate for every $u \in \T^s$. For a node $u$ of depth $2d$, we let $\psi^i_u = 0$ if $i \notin \cL(\T^u)$, $\psi^i_u = -2^{-(h/2-d)/2}$ if $i \in \cL(\T^{\ell(u)})$ and $\psi^i_u = 2^{-(h/2-d)/2}$ if $i \in \cL(\T^{r(u)})$. Since $2^{h/2-d}$ is the number of leaves in $\cL(\T^s) \cap \cL(\T^u)$, this coincides with the transpose of the matrix $H$ resulting from the Haar wavelet basis on $\T^s$. We expand $\sum_i \langle \psi^i, \mu^i \rangle^2$ and show that the sum is lower bounded by $\Omega(h^2)$. 
 
 Combining the bound $\sum_i \E[\langle \psi^i, \rr\rangle^2] = O(h \sigma^2)$ from above with $\sum_i \langle \psi^i, \mu^i \rangle^2 = \Omega(h^2)$, implies that  $\sigma^2 = \Omega(h \min\{\eps^{-2},\delta^{-2}\})$. Substituting into~\eqref{eq:fullsum} finally gives us
\[
\E[\|\rz\|_\infty] = \Omega\left(\sum_{d=0}^{h/2} \sum_{u \in \T^s : d(u)=2d} 2^{-d} \sigma_{u,\ell}\right) = \Omega(h \sigma) = \Omega\left(\frac{h^{3/2}}{\max\{\eps,\delta\}}\right) = \Omega\left(\frac{\log^{3/2} n}{\max\{\eps, \delta\}}\right).
\]

\section{Formal Lower Bound Proof}
\label{sec:formal-proof}

We restate our main result for convenience.
\begin{restatethm}{\ref{thm:main}}
    For any $n^{-1+\Omega(1)}  < \eps < 1$, $0 < \delta < C$ for a sufficiently small constant $C>0$, it holds that any $(\eps,\delta)$-DP mechanism $\rM$ for continual counting has error
    \[\max_{x \in \{0,1\}^n }\E[\|\rM(x)-Ax\|_\infty] = \Omega\left(\frac{\log^{3/2} n }{\max\{\eps,\delta\}}\right),\]
    where $A$ is the $n \times n$ prefix-sum matrix. 
\end{restatethm}

We now prove~\Cref{thm:main}. Our goal is to lower bound the noise needed to release a private version of $Ax$ for a binary vector $x \in \{0,1\}^n$. In Appendix~\ref{app:oblivious} we prove the following reduction using inspiration from prior works~\cite{ObliviousLinear}.
\begin{lemma}
\label{lem:oblivious}
If there is an $(\eps,\delta)$-DP mechanism $\rM$ (possibly with data dependent noise) for prefix sums on inputs $x \in \{0,1\}^n$ with $\err_\infty(\rM) =\max_{x \in \{0,1\}^n} \E[\| \rM(x) -Ax\|_\infty]$, then for any choice of $0 < \delta' < 1$, there is a $(2\eps,e^\eps\delta+\delta')$-DP mechanism $\rM'$ with \emph{oblivious noise} for prefix sums on inputs $x \in \{0,1\}^m$ with 
\[
m = \left\lfloor \frac{n}{2 \lceil \ln(1/\delta')/\eps \rceil} \right\rfloor
\]
and $\max_{x \in \{0,1\}^m} \E[\|\rM'(x)-Ax\|_\infty] \leq \err_\infty(\rM)$. Here oblivious noise means that $\rM'(x)=Ax + \rz$ for $\rz$ independent of $x$.
\end{lemma}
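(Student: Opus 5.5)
The plan is to build $\rM'$ by running $\rM$ on a \emph{random} padded input that does not depend on $x$ at all. The secret $x$ will then enter only through how we interpret the random padding. This makes the noise oblivious by construction; privacy will come from composing the privacy of $\rM$ with the privacy of a random shift.

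\textbf{Construction.} Let $k=\lceil \ln(1/\delta')/\eps\rceil$ and split $[n]$ into $m$ consecutive blocks of length $2k$, discarding any leftover coordinates (set them to $0$). For a count vector $c\in\{0,\dots,2k\}^m$, let $E(c)\in\{0,1\}^n$ be the input whose $j$-th block begins with $c_j$ ones followed by zeros. Then $(AE(c))_{\mathrm{end}(j)}=\sum_{j'\le j}c_{j'}$, where $\mathrm{end}(j)$ is the last index of block $j$. Moreover, changing a single $c_j$ by $\pm1$ flips exactly one bit of $E(c)$, so the two inputs are neighboring for $\rM$.

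Let $D$ be a discrete Laplace distribution centered at $k$ with parameter $\eps$, truncated and renormalized to $\{1,\dots,2k-1\}$. Sample $\rc\sim D^{\otimes m}$ independently of $x$ and define
\[
\rM'(x)_j := \rM(E(\rc))_{\mathrm{end}(j)} - \sum_{j'\le j}(\rc_{j'}-x_{j'}) .
\]
Setting $\rz_j:=(\rM(E(\rc))-AE(\rc))_{\mathrm{end}(j)}$ gives $\rM'(x)=Ax+\rz$. Since $\rc$ and the randomness of $\rM$ are independent of $x$, so is $\rz$; the noise is oblivious. Also $|\rz_j|\le\|\rM(E(\rc))-AE(\rc)\|_\infty$, and conditioning on $\rc$ gives $\E\|\rz\|_\infty\le\err_\infty(\rM)$.

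\textbf{Privacy.} Substitute $\rbeta:=\rc-x$, so that
\[
\rM'(x)=g\bigl(\rbeta,\rM(E(\rbeta+x))\bigr),\qquad g(b,o)_j:=o_{\mathrm{end}(j)}-\sum_{j'\le j}b_{j'} .
\]
Thus $\rM'(x)$ is a post-processing of the pair $(\rbeta,\rM(E(\rbeta+x)))$, in which $\rbeta$ has law $D^{\otimes m}$ shifted by $-x$. Take neighbors $x,x'=x+e_j$ and bound the pair in two steps.

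First, the marginal of $\rbeta$. It is a shift by one of a truncated discrete Laplace in one coordinate, so $\Pr[\rbeta\in S]\le e^{\eps}\Pr[\rbeta'\in S]+\delta'$. The additive term comes from the mass of $D$ at the truncation boundary, which is at most $e^{-\eps k}\le\delta'$; the normalization of $D$ only improves this, but the constants need checking.

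Second, the conditional output. For each fixed $b$, the inputs $E(b+x)$ and $E(b+x')$ are neighboring, so $\rM(E(b+x))$ and $\rM(E(b+x'))$ satisfy the $(\eps,\delta)$-DP inequality.

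Adaptive composition of these two steps yields $(2\eps,\delta'+e^{\eps}\delta)$-DP for the pair. The ordering is chosen so that the $\delta'$ from the shift is not multiplied by $e^{\eps}$: for fixed $b$, apply the $\rM$ guarantee first to get $\Pr[(b,\rM(E(b+x)))\in S]\le e^\eps\Pr[(b,\rM(E(b+x')))\in S]+\delta$, then integrate over $b$ and use the shift guarantee on the resulting $[0,1]$-valued function. Post-processing by $g$ then gives the claim for $\rM'$.

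\textbf{Expected obstacle.} The delicate part is the boundary bookkeeping. We need $\rbeta+x$ to stay in $\{0,\dots,2k\}^m$, so that $E(\rbeta+x)$ is a valid input; this is why $D$ is supported on $\{1,\dots,2k-1\}$. We also need the one-sided shift inequality to hold with additive error exactly $\delta'$ in both directions of the shift. I would verify this with the explicit density ratio $e^{\pm\eps}$ in the interior and a tail bound of $e^{-\eps k}$ at the endpoints. If that falls short by a constant factor, I would enlarge $k$ by a constant, which the floor in the definition of $m$ can absorb up to constants.
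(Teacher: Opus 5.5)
Your mechanism is the paper's $\widehat{\rM}(x)=Ax+\rM'(\rz)-A\rz$ with $R=k$ and $\rc=\rz+R$. The paper merely packages your block encoding $E$ first, as an intermediate mechanism $\rM'$ on $\{-R,\dots,R\}^m$. Your privacy argument (the guarantee of $\rM$ pointwise in $b$, then a change of measure from $\rbeta$ to $\rbeta'$) is the paper's density computation phrased as composition. To get the stated additive term, you must feed the clipped bound $g(b)=\min\{1,e^{\eps}f_{x'}(b)+\delta\}$ into the shift inequality, where $f_{x'}(b)=\Pr[(b,\rM(E(b+x')))\in S]$. Shifting $f_{x'}$ itself would give $\delta+e^{\eps}\delta'$ instead.

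The step that actually fails is the truncation to $\{1,\dots,2k-1\}$.
\begin{itemize}
\item Its endpoints lie at distance $k-1$ from the center. Each carries mass $e^{-\eps(k-1)}/Z$ with $Z=1+2\sum_{t=1}^{k-1}e^{-\eps t}$, not at most $e^{-\eps k}$, and this exceeds $\delta'$ in part of the allowed range.
\item Most starkly, $k=1$ whenever $\delta'\ge e^{-\eps}$. Then $D$ is a point mass. For a constant $\rM$ (which is $(\eps,0)$-DP), your $\rM'(x)$ is $Ax$ plus a fixed vector, which is not private.
\item Even for $k=2$, $\eps=0.9$, $\delta=0$, the same constant $\rM$ violates the claimed guarantee whenever $e^{-1.8}\le\delta'<1/(e^{0.9}+2)$.
\end{itemize}
Enlarging $k$ does not rescue the lemma as stated, because the lemma fixes $m$ exactly.

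The repair is to keep the full support $\{0,\dots,2k\}$, as the paper does. Shrinking the support is unnecessary: $\rbeta+x=\rc$ is always a valid count vector, and $E(b+x')$ appears only inside the analysis. For $x'=x+e_j$, the points $b$ in the support of $\rbeta$ with $b_j+x_j=2k$ are exactly those outside the support of $\rbeta'$. Set $g(b)=1$ there. The shift inequality for $[0,1]$-valued $g$ then pays for these points with its additive term $D(2k)\le e^{-\eps k}\le\delta'$, and symmetrically with $D(0)$ in the reverse direction. This is the paper's step of peeling off the terms with $z_i=R$. With it, your argument gives $(2\eps,e^{\eps}\delta+\delta')$ for the stated $m$.
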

Note that the prior work~\cite{ObliviousLinear} gives a general reduction for obtaining oblivious mechanisms for linear queries, but only for mechanisms that support arbitrary real-valued input vectors $x \in \R^n$ and only for neighboring inputs defined as $\|x-x'\|_1 \leq 1$. We modify their reduction, using the special structure of the prefix sums problem, to give the above reduction for binary inputs.

Starting from an $(\eps,\delta)$-DP mechanism $\rM$, we may choose $\delta' = \max\{\eps,\delta\}$ in Lemma~\ref{lem:oblivious} to obtain an $(2\eps, e^\eps \delta +\max\{\eps,\delta\})$-DP oblivious mechanism $\rM_o$ for binary inputs of size $m$ (with $m$ as stated in Lemma~\ref{lem:oblivious}). For $\eps \geq n^{-1+\Omega(1)}$, we have $m = n^{\Omega(1)}$. We then prove an
\[
\Omega\left(\frac{\log^{3/2} m}{\max\{\eps,\delta\}}\right)
\]
for oblivious mechanisms. Since $m = n^{\Omega(1)}$ and $\max\{\eps,\delta\}$ changes only by a constant factor, this gives the result in Theorem~\ref{thm:main}.

With this established, we return to using $n$ as the length of the input vector $x \in \{0,1\}^n$ and prove a lower bound for an $(\eps,\delta)$-DP oblivious mechanism $\rM_o$.

\paragraph{Tree Structure.}
We assume $n$ is a power of $16$ and let $h = \log_2 n$. This only affects the lower bound by a constant factor as we can use the length $16^{\lfloor \log_{16} n \rfloor}$ prefix of $x$ to embed the hard instance. To define a basis for $\R^{n}$, consider a binary tree $\T$ of height $h$, i.e.\ $\T$ has $n$ leaves, where the $i$-th leaf corresponds to the $i$-th coordinate of $Ax+\rz$. Let $\Vecd(\T)$ be the set of nodes that are either the root or the right child of their parent. We use the notation $\ell(u)$ to denote the left child and $r(u)$ to denote the right child of a non-leaf node $u$. For each node $u \in \Vecd(\T)$, we define $\chi^u$ as the indicator vector having a $1$ in coordinates corresponding to the leaves in the subtree $\T^u$ rooted at $u$ and $0$ elsewhere. We also define $\Vecd(\T^u) := \Vecd(\T) \cap \T^u$ as the nodes $v \in \T^u$ with $\chi^v$ defined. 

We observe that the number of vectors $|\Vecd(\T)| = 1 + \sum_{d =1}^{h} 2^{d-1}  = n$. Furthermore, we claim that every \emph{suffix} vector $Ae_i \in \R^n$ having a $1$ in positions $j \geq i$ and $0$ elsewhere can be written as a linear combination of the $\chi^u$ vectors. Since the suffix vectors span $\R^n$, we conclude that the $\chi^u$ vectors span $\R^n$. Moreover, since there are $n$ of them, they form a basis (although neither orthogonal nor unit length). 

To write $Ae_i$ as a linear combination of the $\chi^u$ vectors, initialize $\hat{y}=0$ and traverse the path from the root to the $i$'th leaf. For each node $u$ visited, if $i$ is the left-most leaf in the subtree $\T^u$, add $\chi^u$ to $\hat{y}$ and terminate. Otherwise, if $u$ is an internal node and the path descends to $\ell(u)$, add $\chi^{r(u)}$ to $\hat{y}$. Upon termination, we have $\hat{y} = Ae_i$. Define $\I(i) \subseteq \Vecd(\T)$ as the nodes such that $Ae_i=\sum_{u \in \I(i)} \chi^{u}$.

We observe that 
\begin{align*}
Ax = \sum_{i=1}^n x_i Ae_ i
= \sum_{i=1}^n \sum_{u \in \I(i)} x_i\chi^{u} 
= \sum_{u \in \Vecd(\T)} \sum_{i : u \in \I(i)} x_i\chi^{u}.  
\end{align*}
Thus, if we define 
\[
a_u := \sum_{i : u \in \I(i)} x_i,
\]
then $Ax = \sum_{u \in \Vecd(\T)} a_u \chi^{u}$. We note that since the $\chi^u$ vectors form a basis, this linear combination is unique. 

Since the $\chi^u$ vectors form a basis, there is also a unique linear combination so that $\rz = \sum_{u \in \Vecd(\T)}  \rbeta_u \chi^{u}$ with $\rbeta_u \in \R$. From this it also follows that
\[
\rz_i = \sum_{u \in \Vecd(\T) }  \rbeta_u \chi^{u}_i = \sum_{u \in \Vecd(\T)  : \chi^{u}_i=1}  \rbeta_u.
\]
Note that the nodes $u$ with $\chi^{u}_i=1$ are precisely $\Anc(i) \cap \Vecd(\T)$, where $\Anc(i)$ is the set of all ancestors of the $i$'th leaf, including the leaf itself. Thus, to lower bound $\|\rz\|_\infty$ we need to show that there is an index $i$ for which $\sum_{u \in \Anc(i) \cap \Vecd(\T)} \rbeta_u$ is large in absolute value.

\paragraph{Noise Accumulation.}
We next argue that noise accumulates up the tree $\T$ in a subtle manner. First observe that $\|\rz\|_\infty \geq (\max_i \rz_i - \min_i \rz_i)/2$. We thus lower bound $\E[\max_i \rz_i - \min_i \rz_i]$ instead of directly lower bounding $\E[\|\rz\|_\infty]$, circumventing the absolute value. We do this inductively up the tree $\T$. For any node $u \in \T$, let $\T^u$ be the subtree rooted at $u$ and define $\cL(\T^u)$ as the indices of the leaves in the subtree rooted at $u$. For each $i \in \cL(\T^u)$, define
\[
\rz^{< u}_i := \sum_{v \in (\Vecd(\T^u )\setminus \{u\}) \cap \Anc(i)} \rbeta_v.
\]
That is, $\rz^{< u}_i$ sums only the contributions to $\rz_i$ from nodes strictly below $u$ on the path from the root to the $i$'th leaf. Also define 
\[
\mx^{< u}(\rz) := \max_{i \in \cL(\T^u)} \rz_i^{< u}.
\]
and
\[
\mn^{< u}(\rz) := \min_{i \in \cL(\T^u)} \rz^{< u}_i.
\]
We will show that $\mx^{< u}(\rz)-\mn^{< u}(\rz)$ grows as we move up the tree.

For any node $u \in \T$, define $\Delta^{<u}(\rz)$ as the random variable giving $\rbeta_v$ for every $v \in \Vecd(\T^u) \setminus \{u\}$. That is, $\Delta^{<u}(\rz)$ gives the contributions to the noise vector $\rz$ from $\chi^v$ vectors in the subtree rooted at $u$, excluding $u$ itself. $\Delta^{<u}(\rz)$ is thus a function of $\rz$, $\Delta^{<u}(\rz): \Vecd(\T^u) \setminus \{u\} \to \R$ with $\Delta^{<u}(\rz)(v) = \rbeta_v$ for $v \in \Vecd(\T^u) \setminus \{u\}$.

We now consider how well $\rbeta_{r(u)}$ can be predicted  from the \emph{noise} $\Delta^{<\ell(\ell(u))}(\rz), \Delta^{<\ell(r(u))}(\rz)$ for $u \in \T$ that is not a leaf:
\[
\sigma_{u,\ell}(\Delta^{<\ell(\ell(u))}(\rz), \Delta^{<\ell(r(u))}(\rz)) := \inf_m \E\left[\left|\rbeta_{r(u)} - m\right| \mid \Delta^{<\ell(\ell(u))}(\rz), \Delta^{<\ell(r(u))}(\rz)\right].
\]
Also let
\[
\sigma_{u,\ell} =  \E[\sigma_{u,\ell}(\Delta^{<\ell(\ell(u))}(\rz), \Delta^{<\ell(r(u))}(\rz))].
\]
For later use, we also define $m_\ell^u(\Delta^{<\ell(\ell(u))}(\rz), \Delta^{<\ell(r(u))}(\rz))$ as a measurable minimizer $m$ of 
\[
\inf_m \E\left[\left|\rbeta_{r(u)} - m\right| \mid \Delta^{<\ell(\ell(u))}(\rz), \Delta^{<\ell(r(u))}(\rz)\right],
\]
chosen as the smallest conditional median.

Similarly, we also consider predicting the sum $\rbeta_{r(\ell(u))} -\rbeta_{r(u)} - \rbeta_{r(r(u))}$ from $\Delta^{<r(\ell(u))}(\rz), \Delta^{<r(r(u))}(\rz)$ for internal $u \in \T$ whose children are not leaves:
\[
\sigma_{u,r}(\Delta^{<r(\ell(u))}(\rz), \Delta^{<r(r(u))}(\rz)) := \inf_m \E\left[\left|\rbeta_{r(\ell(u))} - \rbeta_{r(u)} - \rbeta_{r(r(u))} - m\right| \mid \Delta^{<r(\ell(u))}(\rz), \Delta^{<r(r(u))}(\rz)\right],
\]
and let
\[
\sigma_{u,r} =  \E[\sigma_{u,r}(\Delta^{<r(\ell(u))}(\rz), \Delta^{<r(r(u))}(\rz))].
\]
Again, we also define $m^u_r(\Delta^{<r(\ell(u))}(\rz), \Delta^{<r(r(u))}(\rz))$ as a measurable minimizer of 
\[
\inf_m \E\left[\left|\rbeta_{r(\ell(u))} - \rbeta_{r(u)} - \rbeta_{r(r(u))} - m\right| \mid \Delta^{<r(\ell(u))}(\rz), \Delta^{<r(r(u))}(\rz)\right],
\]
chosen as the smallest conditional median. Our goal is to show that $\mx^{< u}(\rz) - \mn^{< u}(\rz)$ grows proportionally to $\sigma_{u,r}$ and $\sigma_{u,\ell}$. Intuitively, this means that if there is still randomness left in $\rbeta_{r(u)}$ after revealing $\Delta^{<\ell(\ell(u))}(\rz), \Delta^{<\ell(r(u))}(\rz)$, or in $\rbeta_{r(\ell(u))} - \rbeta_{r(u)} - \rbeta_{r(r(u))}$ after revealing $\Delta^{<r(\ell(u))}(\rz), \Delta^{<r(r(u))}(\rz)$, then the noise accumulates up the tree to create a large $\mx^{< u}(\rz)-\mn^{< u}(\rz)$. This is captured in the following lemma

\begin{lemma}
\label{lem:noiseAcc}
For any internal node $u \in \T$ whose children are not leaves, we have
\begin{align*}
&\E[\mx^{< u}(\rz) - \mn^{< u}(\rz)] \geq \sigma_{u,\ell}/2 + \sigma_{u,r}/2 + \sum_{f,g \in \{\ell,r\}^2} \E[\mx^{< f(g(u))}(\rz) - \mn^{<f(g(u))}(\rz)]/4.
\end{align*}
\end{lemma}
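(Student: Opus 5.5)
The plan is to prove two separate pointwise inequalities, one for each pair of grandchildren, and average them after taking expectations. The first relates $\mx^{<u}-\mn^{<u}$ to the left grandchildren $\ell(\ell(u)),\ell(r(u))$. The second relates it to the right grandchildren $r(\ell(u)),r(r(u))$.

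\textbf{Step 1: identify the partial path sums.} For each grandchild $w=f(g(u))$ and each leaf $i\in\cL(\T^w)$, the nodes of $\Vecd(\T^u)\setminus\{u\}$ on the path to $i$ are those of $\Vecd(\T^w)\setminus\{w\}$, together with whichever of $g(u)$ and $w$ are right children. This gives the following identities, valid for $i$ in the respective subtree.
\begin{itemize}
\item For $i\in\cL(\T^{\ell(\ell(u))})$: $\rz^{<u}_i=\rz^{<\ell(\ell(u))}_i$.
\item For $i\in\cL(\T^{\ell(r(u))})$: $\rz^{<u}_i=\rz^{<\ell(r(u))}_i+\rbeta_{r(u)}$.
\item For $i\in\cL(\T^{r(\ell(u))})$: $\rz^{<u}_i=\rz^{<r(\ell(u))}_i+\rbeta_{r(\ell(u))}$.
\item For $i\in\cL(\T^{r(r(u))})$: $\rz^{<u}_i=\rz^{<r(r(u))}_i+\rbeta_{r(u)}+\rbeta_{r(r(u))}$.
\end{itemize}

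\textbf{Step 2: restrict the max and min to a grandchild pair.} Since $\mx^{<u}$ is a max over all of $\cL(\T^u)$, it is at least the max over the union of the two left grandchildren's leaves; symmetrically, $\mn^{<u}$ is at most the corresponding min. Applying $\max\{a,b\}-\min\{c,d\}=(a-c)/2+(b-d)/2+|a-b|/2+|c-d|/2$ yields \eqref{eq:grand1} pointwise. Doing the same with the right grandchildren, and shifting every term by the common constant $\rbeta_{r(\ell(u))}$ (which cancels in differences), yields \eqref{eq:grand2}.

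\textbf{Step 3: bound the absolute-value terms in expectation.} Take $X=\mx^{<\ell(\ell(u))}(\rz)-\mx^{<\ell(r(u))}(\rz)$. It is a deterministic function of $D=(\Delta^{<\ell(\ell(u))}(\rz),\Delta^{<\ell(r(u))}(\rz))$. Conditioning on $D$ gives
\[
\E[|X-\rbeta_{r(u)}|\mid D]\ge \inf_m\E[|\rbeta_{r(u)}-m|\mid D]=\sigma_{u,\ell}(D).
\]
Taking expectations gives $\E|X-\rbeta_{r(u)}|\ge\sigma_{u,\ell}$, and the min term is handled identically. So the second line of \eqref{eq:grand1} contributes at least $\sigma_{u,\ell}/2+\sigma_{u,\ell}/2=\sigma_{u,\ell}$. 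Likewise, the $\mx$ and $\mn$ differences in \eqref{eq:grand2} are functions of $(\Delta^{<r(\ell(u))}(\rz),\Delta^{<r(r(u))}(\rz))$, so that line contributes at least $\sigma_{u,r}$.

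\textbf{Step 4: average.} The two resulting expectation inequalities are
\[
\E[\mx^{<u}-\mn^{<u}]\ge \sigma_{u,\ell}+\tfrac12\sum_{g}\E[\mx^{<\ell(g(u))}-\mn^{<\ell(g(u))}]
\]
and the same with $r$ in place of $\ell$ throughout. Averaging them gives exactly the statement of the lemma.

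The main point needing care is Step 3: measurability of conditional expectations, and the requirement that each $\mx^{<w}$ depends only on $\Delta^{<w}$. The latter holds by the definition of $\rz^{<w}_i$. The former should only need standard facts, given that the $\rbeta$'s are linear functions of $\rz$ and hence integrable whenever $\E\|\rz\|_\infty<\infty$; if the error is infinite, the lemma is trivial.
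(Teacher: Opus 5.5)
Your proposal is correct and follows essentially the same route as the paper: restrict the range $\mx^{<u}(\rz)-\mn^{<u}(\rz)$ to each pair of same-side grandchildren, expand with the $\max\{a,b\}-\min\{c,d\}$ identity, lower-bound the absolute-value terms by conditioning on the relevant $\Delta$'s (of which the $\mx$/$\mn$ differences are functions), and average the two resulting inequalities. One small correction to your closing aside: the lemma is not automatically trivial when $\E[\|\rz\|_\infty]=\infty$, since its left side involves only $\mx^{<u}(\rz)-\mn^{<u}(\rz)$ and can stay finite; this is harmless, though, because every term in your pointwise inequalities is nonnegative, so the argument goes through with expectations in $[0,\infty]$, and the theorem only needs the finite-error case anyway.
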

\begin{proof}
Let $u$ be an internal node whose children are not leaves. We have
\begin{align*}
    &\mx^{< u}(\rz) - \mn^{< u}(\rz) \geq\\
    &\max\{\max_{i \in \cL(\T^{\ell(\ell(u))})} \rz_i^{< u},\max_{i \in \cL(\T^{\ell(r(u))})} \rz_i^{< u} \} - \min\{\min_{i \in \cL(\T^{\ell(\ell(u))})} \rz_i^{< u},\min_{i \in \cL(\T^{\ell(r(u))})} \rz_i^{< u} \}.
\end{align*}
Notice that $\max\{a,b\} = (a+b)/2 + |a-b|/2$ and $\min\{a,b\} = (a+b)/2-|a-b|/2$ for any $a,b$. Hence $\max\{a,b\}-\min\{c,d\} = (a+b)/2 - (c+d)/2 + |a-b|/2  + |c-d|/2 = (a-c)/2+(b-d)/2+ |a-b|/2  + |c-d|/2$. We now see that
\begin{align*}
    \max_{i \in \cL(\T^{\ell(\ell(u))})} \rz_i^{< u} - \min_{i \in \cL(\T^{\ell(\ell(u))})} \rz_i^{< u} &= \max_{i \in \cL(\T^{\ell(\ell(u))})} \rz_i^{< \ell(\ell(u))} - \min_{i \in \cL(\T^{\ell(\ell(u))})} \rz_i^{< \ell(\ell(u))} \\
    &= \mx^{<\ell(\ell(u))}(\rz)-\mn^{<\ell(\ell(u))}(\rz).
\end{align*}
and
\begin{align*}
    \max_{i \in \cL(\T^{\ell(r(u))})} \rz_i^{< u} - \min_{i \in \cL(\T^{\ell(r(u))})} \rz_i^{< u} &= \max_{i \in \cL(\T^{\ell(r(u))})} \rz_i^{< \ell(r(u))} - \min_{i \in \cL(\T^{\ell(r(u))})} \rz_i^{< \ell(r(u))}\\
    &= \mx^{<\ell(r(u))}(\rz)-\mn^{<\ell(r(u))}(\rz).
\end{align*}
Finally, we also have
\begin{align*}
    &\left|\max_{i \in \cL(\T^{\ell(\ell(u))})} \rz_i^{< u} - \max_{i \in \cL(\T^{\ell(r(u))})} \rz_i^{< u}\right| = \\
     &\left|\max_{i \in \cL(\T^{\ell(\ell(u))})} \rz_i^{< \ell(\ell(u))} - \max_{i \in \cL(\T^{\ell(r(u))})} \rz_i^{< \ell(r(u))} - \rbeta_{r(u)}\right| =\\
     &\left| \mx^{<\ell(\ell(u))}(\rz) - \mx^{<\ell(r(u))}(\rz) - \rbeta_{r(u)}\right|.
\end{align*}
and
\begin{align*}
    &\left|\min_{i \in \cL(\T^{\ell(\ell(u))})} \rz_i^{< u} - \min_{i \in \cL(\T^{\ell(r(u))})} \rz_i^{< u}\right| = 
     \left| \mn^{<\ell(\ell(u))}(\rz) - \mn^{<\ell(r(u))}(\rz) - \rbeta_{r(u)}\right|.
\end{align*}
We conclude
\begin{align*}
        &\mx^{< u}(\rz) - \mn^{< u}(\rz) \geq\\
        &\frac{\mx^{<\ell(\ell(u))}(\rz)-\mn^{<\ell(\ell(u))}(\rz)}{2} + \frac{\mx^{<\ell(r(u))}(\rz)-\mn^{<\ell(r(u))}(\rz)}{2} +\\
        &\frac{\left| \mx^{<\ell(\ell(u))}(\rz) - \mx^{<\ell(r(u))}(\rz) - \rbeta_{r(u)}\right|}{2} + \frac{\left| \mn^{<\ell(\ell(u))}(\rz) - \mn^{<\ell(r(u))}(\rz) - \rbeta_{r(u)}\right|}{2}.
\end{align*}
Now observe that $\mx^{<\ell(\ell(u))}(\rz), \mx^{<\ell(r(u))}(\rz),\mn^{<\ell(\ell(u))}(\rz)$ and $\mn^{<\ell(r(u))}(\rz)$ are all determined from $\Delta^{<\ell(\ell(u))}(\rz), \Delta^{<\ell(r(u))}(\rz)$. Hence
\begin{align*}
    &\E\left[\left| \mx^{<\ell(\ell(u))}(\rz) - \mx^{<\ell(r(u))}(\rz) - \rbeta_{r(u)}\right| \mid \Delta^{<\ell(\ell(u))}(\rz), \Delta^{<\ell(r(u))}(\rz)\right] \geq \sigma_{u,\ell}(\Delta^{<\ell(\ell(u))}(\rz), \Delta^{<\ell(r(u))}(\rz))
\end{align*}
and
\begin{align*}
    &\E\left[\left| \mn^{<\ell(\ell(u))}(\rz) - \mn^{<\ell(r(u))}(\rz) - \rbeta_{r(u)}\right| \mid \Delta^{<\ell(\ell(u))}(\rz), \Delta^{<\ell(r(u))}(\rz) \right] \geq \sigma_{u,\ell}(\Delta^{<\ell(\ell(u))}(\rz), \Delta^{<\ell(r(u))}(\rz)).
\end{align*}
Thus we have
\begin{align*}
        &\E[\mx^{< u}(\rz) - \mn^{< u}(\rz) \mid \Delta^{<\ell(\ell(u))}(\rz), \Delta^{<\ell(r(u))}(\rz)] \geq\\
        &\frac{\E[\mx^{<\ell(\ell(u))}(\rz)-\mn^{<\ell(\ell(u))}(\rz) \mid \Delta^{<\ell(\ell(u))}(\rz), \Delta^{<\ell(r(u))}(\rz)]}{2} +\\ &\frac{\E[\mx^{<\ell(r(u))}(\rz)-\mn^{<\ell(r(u))}(\rz)\mid \Delta^{<\ell(\ell(u))}(\rz), \Delta^{<\ell(r(u))}(\rz)]}{2} +\\
        &\frac{\sigma_{u,\ell}(\Delta^{<\ell(\ell(u))}(\rz), \Delta^{<\ell(r(u))}(\rz))}{2} + \frac{\sigma_{u,\ell}(\Delta^{<\ell(\ell(u))}(\rz), \Delta^{<\ell(r(u))}(\rz))}{2}.
\end{align*}
By the law of total expectation, we conclude
\begin{align*}
&\E[\mx^{< u}(\rz) - \mn^{< u}(\rz)] \geq \\
&\frac{\E[\mx^{<\ell(\ell(u))}(\rz)-\mn^{<\ell(\ell(u))}(\rz)]}{2} + \frac{\E[\mx^{<\ell(r(u))}(\rz)-\mn^{<\ell(r(u))}(\rz)]}{2} + \sigma_{u,\ell}.
\end{align*}
Similarly for the grandchildren of $u$ that are right children of their parents, namely $r(\ell(u))$ and $r(r(u))$, we have
\begin{align*}
    &\mx^{< u}(\rz) - \mn^{< u}(\rz) \geq\\
    &\max\{\max_{i \in \cL(\T^{r(\ell(u))})} \rz_i^{< u},\max_{i \in \cL(\T^{r(r(u))})} \rz_i^{< u} \} - \min\{\min_{i \in \cL(\T^{r(\ell(u))})} \rz_i^{< u},\min_{i \in \cL(\T^{r(r(u))})} \rz_i^{< u} \}.
\end{align*}
Using the same formula for $\max\{a,b\}-\min\{c,d\}$, we notice that
\begin{align*}
    &\left|\max_{i \in \cL(\T^{r(\ell(u))})} \rz_i^{< u} - \max_{i \in \cL(\T^{r(r(u))})} \rz_i^{< u}\right| = \\
     &\left|\max_{i \in \cL(\T^{r(\ell(u))})} \rz_i^{< r(\ell(u))} + \rbeta_{r(\ell(u))} - \max_{i \in \cL(\T^{r(r(u))})} \rz_i^{< r(r(u))} - \rbeta_{r(u)} - \rbeta_{r(r(u))}\right| =\\
     &\left| \mx^{<r(\ell(u))}(\rz) - \mx^{<r(r(u))}(\rz)+ \rbeta_{r(\ell(u))} - \rbeta_{r(u)} - \rbeta_{r(r(u))}\right|.
\end{align*}
and
\begin{align*}
    &\left|\min_{i \in \cL(\T^{r(\ell(u))})} \rz_i^{< u} - \min_{i \in \cL(\T^{r(r(u))})} \rz_i^{< u}\right| = 
     \left| \mn^{<r(\ell(u))}(\rz) - \mn^{<r(r(u))}(\rz)+ \rbeta_{r(\ell(u))} - \rbeta_{r(u)} - \rbeta_{r(r(u))}\right|.
\end{align*}
We notice that $\mx^{<r(\ell(u))}(\rz), \mx^{<r(r(u))}(\rz),\mn^{<r(\ell(u))}(\rz)$ and $\mn^{<r(r(u))}(\rz)$ are all determined from $\Delta^{<r(\ell(u))}(\rz), \Delta^{<r(r(u))}(\rz)$. Thus by the same arguments as above, we conclude
\begin{align*}
&\E[\mx^{< u}(\rz) - \mn^{< u}(\rz)] \geq \\
&\frac{\E[\mx^{<r(\ell(u))}(\rz)-\mn^{<r(\ell(u))}(\rz)]}{2} + \frac{\E[\mx^{<r(r(u))}(\rz)-\mn^{<r(r(u))}(\rz)]}{2} + \sigma_{u,r}.
\end{align*}
By averaging, we finally conclude
\begin{align*}
&\E[\mx^{< u}(\rz) - \mn^{< u}(\rz)] \geq \sigma_{u,\ell}/2 + \sigma_{u,r}/2 + \sum_{f,g \in \{\ell,r\}^2} \E[\mx^{< f(g(u))}(\rz) - \mn^{<f(g(u))}(\rz)]/4.
\end{align*}
\end{proof}
Recall that we assume $n$ is a power of $16$ and thus $h = \log_2 n$ is even. If we say that the root $\rho$ is at depth $0$, then the leaves are at depth $h$. If we now recursively apply Lemma~\ref{lem:noiseAcc} on the even levels of $\T$, starting from the root $\rho$ and using that $\mx^{<u}(\rz)=\mn^{<u}(\rz)=0$ when $u$ is a leaf, we conclude
\begin{align*}
\E[\mx^{<\rho}(\rz)-\mn^{<\rho}(\rz)] &\geq  
\sum_{d=0}^{h/2-1} \sum_{u \in \T : d(u)=2d} 2^{-2d-1}(\sigma_{u,\ell} + \sigma_{u,r}),
\end{align*}
where $d(u)$ is the depth of $u$. Finally, observing that the root $\rho$ satisfies $\rho \in \Anc(i)$ for all $i$ we have
\begin{align*}
    \E[\|\rz\|_\infty] &\geq \E[\mx_i \rz_i - \mn_i \rz_i]/2 \\
    &= \E[\mx^{<\rho}(\rz) + \rbeta_\rho - (\mn^{<\rho}(\rz) + \rbeta_\rho)]/2 \\
    &= \E[\mx^{<\rho}(\rz) - \mn^{<\rho}(\rz)]/2 \\
    &\geq \sum_{d=0}^{h/2-1} \sum_{u \in \T : d(u)=2d} 2^{-2d-1}(\sigma_{u,\ell} + \sigma_{u,r}).
\end{align*}
We now rewrite this into an alternative, but more convenient representation. Let $\ri$ be a uniform random leaf index. Then for a node $u$, we have $\Pr[u \in \Anc(\ri)] = 2^{-d(u)}$. Letting $\pAnc(i)$ be the \emph{proper} ancestors of the $i$'th leaf, i.e.\ $\Anc(i)$ but without the $i$'th leaf itself, we have
\begin{align}
        \E[\|\rz\|_\infty] &\geq \sum_{d=0}^{h/2-1} \sum_{u \in \T : d(u)=2d} \Pr[u \in \Anc(\ri)](\sigma_{u,\ell} + \sigma_{u,r})/2 \nonumber\\
        &= \sum_{d=0}^{h/2-1} \sum_{u \in \T : d(u)=2d}\E[\one\{u \in \Anc(\ri)\}](\sigma_{u,\ell} + \sigma_{u,r})/2  \nonumber\\
        &= \frac{1}{2} \cdot \E_{\ri \in \cL(\T)}\left[\sum_{u \in \pAnc(\ri) : d(u) \equiv 0 \bmod 2 } \sigma_{u,\ell} +\sigma_{u,r}\right].
\end{align}
What remains is thus to lower bound the expected sum of $\sigma_{u,\ell}$ and $\sigma_{u,r}$ on the even levels of a uniformly random root-to-leaf path.

\paragraph{Restricting to a Subtree.}
We now restrict attention to a subtree of $\T$. Recall $h$ is even, so there are $h/2$ odd depths $1,3,\dots,h-1$. For a string $s \in \{\ell,r\}^{h/2}$, let $\T^s$ be the subset of $\T$ containing all nodes $u$ for which the path from the root to $u$ satisfies that in every odd depth $2j-1$, the path to $u$ descends to the left child if $s_j=\ell$ and to the right child if $s_j=r$. For even levels, we allow descending into both children. Let $\cL(\T^s)$ be the indices in the leaves of $\T^s$.

Now let $\rs$ be uniform random in $\{\ell ,r\}^{h/2}$ and let $\ri$ be uniform random in $\cL(\T^\rs)$. Then the distribution of $\ri$ is uniform among all leaves of $\T$. By linearity of expectation, there must exist a choice of $s \in \{\ell,r\}^{h/2}$ such that
\begin{align*}
\E[\|\rz\|_\infty] &\geq \frac{1}{2} \cdot \E_{\ri \in \cL(\T^{s})}\left[\sum_{u \in \pAnc(\ri) : d(u) \equiv 0 \bmod 2 } \sigma_{u,\ell} +\sigma_{u,r}\right] \\
&= \frac{1}{2} \cdot \sum_{d=0}^{h/2-1} \sum_{u \in \T^{s} : d(u)=2d} \Pr_{\ri \in \cL(\T^{s})}[u \in \Anc(\ri)](\sigma_{u,\ell} + \sigma_{u,r}) \\
&=   \frac{1}{2} \cdot \sum_{d=0}^{h/2-1} \sum_{u \in \T^{s} : d(u)=2d} 2^{-d}(\sigma_{u,\ell} + \sigma_{u,r}).
\end{align*}
Fix such an $s \in \{\ell ,r \}^{h/2}$. For a character $a \in \{\ell,r\}$, let $\tilde{a}$ denote the opposite character, i.e.\ $\tilde{\ell}=r$ and $\tilde{r}=\ell$. Then, we further have
\begin{align}
\label{eq:needsum}
\E[\|\rz\|_\infty] &\geq  \frac{1}{2} \cdot \sum_{d=0}^{h/2-1} \sum_{u \in \T^{s} : d(u)=2d} 2^{-d} \cdot \sigma_{u,\widetilde{s_{d+1}}}.
\end{align}
We will thus show that this sum is large for any $s$.

\paragraph{From Weighted Sum to Max Bound.}
The bound in~\eqref{eq:needsum} requires us to bound a weighted sum of $\sigma_{u,\widetilde{s_{d+1}}}$ values. We reduce this to the simpler case of merely showing that at least one term in the sum is large.

More concretely, consider choosing $\cJ \subseteq \{0,\dots,h/2-1\}$ as a subset of $h/4$ out of the $h/2$ choices for $d$ in the outer sum in~\eqref{eq:needsum}. Next, for each $d \in \cJ$ with $d>0$, choose a subset $\cJ_d$ of $2^{d-1}$ out of the $2^d$ nodes $u \in \T^s$ with $d(u)=2d$. If $d=0$ is in $\cJ$, then that corresponds to the root and we choose $\cJ_0 = \{\rho\}$, which is the only node at depth $0$. Assume we can show that for \emph{any} such choice of $\cJ$ and $\{\cJ_d\}_{d \in \cJ}$, we have
\[
\max_{d \in \cJ} \max_{u \in \cJ_d} \sigma_{u,\widetilde{s_{d+1}} } \geq \sigma,
\]
then we claim that 
\begin{align*}
    \E[\|\rz\|_\infty] \geq \frac{h \sigma}{16}.
\end{align*}
To see this, define for each $d \in \{0,\dots,h/2-1\}$ the value 
\[
m_d := \median_{u \in \T^s : d(u)=2d}\{ \sigma_{u,\widetilde{s_{d+1}}}\},
\]
where the median, for an even number of elements $n_d$, is defined as the $n_d/2$'th smallest element and for a set of size $1$ is defined as the single element. Now let $\cJ \subseteq \{0,\dots,h/2-1\}$ be the indices $d$ with the $h/4$ smallest values of $m_d$. For each $d \in \cJ$, let $\cJ_d$ contain the $2^{d-1}$ nodes $u \in \T^s$ with $d(u)=2d$ having the smallest values of $\sigma_{u,\widetilde{s_{d+1}}}$. By assumption, there is a $d^\star \in \cJ$ and $u \in \cJ_{d^\star}$ with $\sigma_{u,\widetilde{s_{d^\star+1}}} \geq \sigma$. By definition of $\cJ_{d^\star}$, this implies $m_{d^\star} \geq \sigma$. By definition of $\cJ$, it follows that for every $d \notin \cJ$, we have $m_d \geq m_{d^\star} \geq \sigma$. Finally, for every $d \notin \cJ$, it follows that at least half the nodes $u \in \T^s$ with $d(u)=2d$ have $\sigma_{u,\widetilde{s_{d+1}}} \geq \sigma$. Since there are $2^d$ nodes $u$ in $\T^s$ with $d(u)=2d$, we conclude from~\eqref{eq:needsum} that
\begin{align}
\label{eq:lbhsigma}
\E[\|\rz\|_\infty] &\geq \frac{1}{2} \cdot \sum_{d=0}^{h/2-1} \sum_{u \in \T^{s} : d(u)=2d} 2^{-d} \cdot \sigma_{u,\widetilde{s_{d+1}}} \nonumber\\
&\geq \frac{1}{2} \cdot \sum_{d \notin \cJ} \sum_{u \in \T^{s} : d(u)=2d} 2^{-d} \cdot \sigma_{u,\widetilde{s_{d+1}}}  \nonumber\\
    &\geq \frac{1}{4} \sum_{d \notin \cJ} \sigma \nonumber\\
    &= \frac{h \sigma}{16}.
\end{align}
We can thus focus on showing that for any $\cJ \subseteq
\{0,\dots,h/2-1\}$ with $|\cJ|=h/4$ and any $\{\cJ_d\}_{d \in \cJ}$,
with $\cJ_d$ consisting of half the nodes of $\T^s$ at depth $2d$,
there is at least one $d \in \cJ$ and one $u \in \cJ_d$ such that
$\sigma_{u,\widetilde{s_{d+1}}}$ is large.

\paragraph{Noise Reduction.}
The first step in bounding $\sigma = \max_{d \in \cJ} \max_{u \in \cJ_d} \sigma_{u, \widetilde{s_{d+1}}}$ for any choice of $\cJ$ and $\{\cJ_d\}_{d \in \cJ}$,
is to design an alternative mechanism whose noise is bounded in terms
of $\sigma_{u,\widetilde{s_{d+1}}}$.

Consider an arbitrary $s \in \{\ell,r\}^{h/2}$. We
will design a mechanism $\widehat{\rM}_s$ that is
$(3\eps,3\delta)$-DP when restricted to inputs $x \in \{0,1\}^n$ with
$x_i=0$ for $i \notin \cL(\T^s)$. The mechanism $\widehat{\rM}_s$ does
not itself output a noisy version of the prefix sums $Ax$. Instead, it
outputs a vector $\widehat{\rM}_s(x)=\bq$ with one coordinate $\bq_u$ for each node $u \in \T^s$ with
an even depth $d(u) < h$.

In more detail, the mechanism $\widehat{\rM}_s$ draws three i.i.d.\ copies $\rz^{(1)}, \rz^{(2)}, \rz^{(3)}$ of the
noise $\rz$ of the mechanism $\rM_o(x) = Ax+\rz$. Let $\rbeta^{(k)}_u$
for $u \in \Vecd(\T)$ be such that $\rz^{(k)} = \sum_{u \in \Vecd(\T)}
\rbeta^{(k)}_u \chi^u$. Then, for any $d \in
\{0,\dots,h/2-1\}$ and node $u \in \T^s$ with depth $2d$, if
$s_{d+1}=r$, let
\begin{align}
  \raq_u := \median_{k=1,2,3} \{ \rbeta^{(k)}_{r(u)} -
  m^u_\ell(\Delta^{<\ell(\ell(u))}(\rz^{(k)}),
  \Delta^{<\ell(r(u))}(\rz^{(k)})) \},
  \label{eq:q_right}
\end{align}
and let the mechanism $\widehat{\rM}_s$ set
\[
  \bq_u := \raq_u + a_{r(u)}.
\]
Otherwise, if $s_{d+1}=\ell$, let
\begin{align}
  \raq_u := \median_{k=1,2,3} \{ \rbeta^{(k)}_{r(\ell(u))} -\rbeta^{(k)}_{r(u)}-\rbeta^{(k)}_{r(r(u))}-
  m^u_r(\Delta^{<r(\ell(u))}(\rz^{(k)}),
          \Delta^{<r(r(u))}(\rz^{(k)})) \} 
          \label{eq:q_left}
\end{align}
and let the mechanism set
\begin{align*}
  \bq_u &:= \raq_u + a_{r(\ell(u))} - a_{r(u)} - a_{r(r(u))}.
\end{align*}

\begin{lemma}
The mechanism $\widehat{\rM}_s$ is $(3\eps,3\delta)$-DP when restricted
to inputs $x \in \{0,1\}^n$ with
$x_i=0$ for $i \notin \cL(\T^s)$.
\end{lemma}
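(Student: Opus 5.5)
We will show that $\widehat{\rM}_s(x)$ is a post-processing of the triple $(Ax+\rz^{(1)},Ax+\rz^{(2)},Ax+\rz^{(3)})$ that is \emph{independent of $x$}. The triple is $(3\eps,3\delta)$-DP by basic composition of three independent runs of the $(\eps,\delta)$-DP mechanism $\rM_o$. This holds on all of $\{0,1\}^n$, hence also on the restricted domain.

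The first step is to work in the tree basis. Writing $Ax+\rz^{(k)}=\sum_{v\in\Vecd(\T)}(a_v+\rbeta^{(k)}_v)\chi^v$ and using that the $\chi^v$ form a basis, the coefficients $c^{(k)}_v:=a_v+\rbeta^{(k)}_v$ are a fixed linear function of the $k$-th output. So it suffices to show that each $\bq_u$ can be computed from $(c^{(k)}_v)_{v,k}$ by a map that does not depend on $x$.

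The key step is to show that, on the restricted domain, $a_v=0$ for every $v$ whose value enters the conditional median. Fix $u\in\T^s$ at depth $2d$ and suppose $s_{d+1}=r$; the other case is symmetric. The median $m^u_\ell$ uses $\rbeta_v$ for $v\in\Vecd(\T^{\ell(\ell(u))})\cup\Vecd(\T^{\ell(r(u))})$, with the roots of these subtrees excluded. Every $i\in\cL(\T^s)\cap\cL(\T^u)$ lies below $r(\ell(u))$ or $r(r(u))$. For such an $i$, the set $\I(i)$ consists of nodes that are right children hanging off the root-to-$i$ path, together with possibly the final node on that path. None of these nodes lies strictly inside the subtree of a left grandchild of $u$ that is off the path: the path never enters that subtree, and a right child hanging off the path is a sibling of a path node, so it is not strictly below $\ell(\ell(u))$ or $\ell(r(u))$.

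Leaves $i\in\cL(\T^s)$ outside $\T^u$ lie outside $\T^u$ entirely, so their $\I(i)$ avoids $\T^u\setminus\{u\}$ except through nodes on their own path. In particular they do not reach strictly into these grandchild subtrees. Since $x_i=0$ for $i\notin\cL(\T^s)$, we get $a_v=\sum_{i:v\in\I(i)}x_i=0$ for all relevant $v$. Hence $c^{(k)}_v=\rbeta^{(k)}_v$ for these $v$, and $m^u_\ell(\Delta^{<\ell(\ell(u))}(\rz^{(k)}),\Delta^{<\ell(r(u))}(\rz^{(k)}))$ is a fixed measurable function of the observed coefficients.

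Then, since the median is translation-equivariant,
\[
\bq_u=a_{r(u)}+\median_k\{\rbeta^{(k)}_{r(u)}-m^{(k)}\}=\median_k\{c^{(k)}_{r(u)}-m^{(k)}\},
\]
which is a function of the three outputs alone. For $s_{d+1}=\ell$ the same computation uses $c^{(k)}_{r(\ell(u))}-c^{(k)}_{r(u)}-c^{(k)}_{r(r(u))}$. Here the conditioning subtrees lie strictly below the right grandchildren $r(\ell(u))$ and $r(r(u))$. The relevant leaves of $\T^s$ now lie below the left grandchildren, so their $\I(i)$ again avoids the interiors of these subtrees.

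The main obstacle is this combinatorial check of $\I(i)$ against the conditioning subtrees. In particular, we must confirm that leaves of $\T^s$ outside $\T^u$, and the terminal node where the decomposition of $Ae_i$ stops, never contribute. Once that is verified, post-processing finishes the proof.
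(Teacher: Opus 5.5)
Your proof is correct and follows essentially the same route as the paper: you realize $\widehat{\rM}_s$ as post-processing of three independent copies of $\rM_o$, which are $(3\eps,3\delta)$-DP by composition, and you use the same combinatorial fact that $\I(j)$ never meets the strict interiors of the conditioning grandchild subtrees for $j \in \cL(\T^s)$. The only difference is minor: the paper subtracts the known contributions $a'_u$ of the other coordinates using a post-processing that depends on the neighboring pair, whereas you observe that $a_v=0$ on those interiors for every input in the restricted domain, which gives a single $x$-independent post-processing map and makes the paper's remark about pair-dependent post-processing unnecessary.
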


\begin{proof}
Let $x^{(0)},x^{(1)} \in \{0,1\}^n$ be an arbitrary neighboring pair
with $x^{(0)}_i=x^{(1)}_i=0$ for $i \notin \cL(\T^s)$. Let $i \in \cL(\T^s)$ be
the coordinate on which $0=x_i^{(0)} \neq x^{(1)}_i=1$. Given an $x \in \{x^{(0)},x^{(1)}\}$, we argue that we can obtain
$\widehat{\rM}_s(x)$ as post-processing of $\rM_o'(x)=(Ax+\rz^{(1)},
Ax+\rz^{(2)}, Ax+\rz^{(3)})$. By standard composition, we have
that the mechanism $\rM_o'$ is $(3\eps,3\delta)$-DP and thus the same
holds for $\widehat{\rM}_s$ when restricted to inputs $x$ with $x_i =0$
for $i \notin \cL(\T^s)$. Note that our post-processing function depends on the pair $x^{(0)}, x^{(1)}$. This is sufficient because the DP inequality is checked separately for every neighboring pair.

Now to obtain $\widehat{\rM}_s(x)$ from $\rM_o'(x)$, first compute for
every node $u \in \Vecd(\T)$ the value
\[
  a'_u = \sum_{j \neq i : u \in
    \I(j)} x^{(0)}_j = \sum_{j \neq i : u \in
    \I(j)} x_j.
\]
Next, for $k=1,2,3$, write $Ax+\rz^{(k)}$ in the basis $\chi^u$ as
$Ax+\rz^{(k)} = \sum_{u \in \Vecd(\T)} (a_u + \rbeta^{(k)}_u)\chi^u$. For
each $u$, subtract $a'_u$ from $(a_u + \rbeta^{(k)}_u)$ to obtain the value
\[
\rbeta^{(k)}_u + \one\{ u \in \I(i)\}x_i.
\]
Now consider any $d
\in \{0,\dots,h/2-1\}$ and node $u \in \T^s$ with depth $2d$.

If
$s_{d+1}=r$, then any leaf with an index $i \in \cL(\T^s)$ falls in
either the subtree rooted at $r(\ell(u))$ or $r(r(u))$. By definition
of $\I(i)$, we get that $\I(i)$ is disjoint from
$\Vecd(\T^{\ell(\ell(u))}) \cup \Vecd(\T^{\ell(r(u))}) \setminus
\{\ell(\ell(u)), \ell(r(u))\}$. Hence for $v \in \Vecd(\T^{\ell(\ell(u))}) \cup \Vecd(\T^{\ell(r(u))}) \setminus
\{\ell(\ell(u)), \ell(r(u))\}$, we have
\[
\rbeta^{(k)}_v + \one\{ v \in \I(i)\}x_i = \rbeta^{(k)}_v.
\]
This allows us to obtain $\Delta^{<\ell(\ell(u))}(\rz^{(k)})$ and
$\Delta^{<\ell(r(u))}(\rz^{(k)})$. Finally, since we have $a_v +
\rbeta^{(k)}_v$ for every $v \in \Vecd(\T)$, we can compute
\begin{align*}
&\median_{k=1,2,3} \{ a_{r(u)} +
\rbeta^{(k)}_{r(u)} -
                 m_\ell^u(\Delta^{<\ell(\ell(u))}(\rz^{(k)}),\Delta^{<\ell(r(u))}(\rz^{(k)}))\}
                 =\\
&\median_{k=1,2,3} \{
\rbeta^{(k)}_{r(u)} -
                 m_\ell^u(\Delta^{<\ell(\ell(u))}(\rz^{(k)}),\Delta^{<\ell(r(u))}(\rz^{(k)}))\}
  + a_{r(u)}
                       = \\
  &\bq_u.
\end{align*}
Symmetrically, if $s_{d+1}=\ell$, we have that $\I(i)$ is disjoint from $\Vecd(\T^{r(\ell(u))}) \cup \Vecd(\T^{r(r(u))}) \setminus
\{r(\ell(u)), r(r(u))\}$. We can thus compute $\Delta^{<r(\ell(u))}(\rz^{(k)})$ and
$\Delta^{<r(r(u))}(\rz^{(k)})$. Again using that we know $a_v +
\rbeta^{(k)}_v$ for every $v \in \Vecd(\T)$, we can compute
\begin{align*}
&\median_{k=1,2,3} \{ a_{r(\ell(u))}-a_{r(u)}-a_{r(r(u))} + \rbeta^{(k)}_{r(\ell(u))}-\rbeta^{(k)}_{r(u)}-\rbeta^{(k)}_{r(r(u))}
 -m_r^u(\Delta^{<r(\ell(u))}(\rz^{(k)}),\Delta^{<r(r(u))}(\rz^{(k)}))\}
                 =\\
&\median_{k=1,2,3} \{ \rbeta^{(k)}_{r(\ell(u))}-\rbeta^{(k)}_{r(u)}-\rbeta^{(k)}_{r(r(u))}
 -m_r^u(\Delta^{<r(\ell(u))}(\rz^{(k)}),\Delta^{<r(r(u))}(\rz^{(k)}))\}+ a_{r(\ell(u))}-a_{r(u)}-a_{r(r(u))}
                       =\\
  &\bq_u.
\end{align*}
\end{proof}

Next we argue that $\bq_u$ has noise bounded roughly by
$\sigma_{u,\widetilde{s_{d+1}}}^2$.

\begin{lemma}
\label{lem:noisebound}
For any $u \in \T^s$ at an even depth $2d < h$, we have if $s_{d+1} = r$
\[
\E[\median_{k=1,2,3} \{
\rbeta^{(k)}_{r(u)} -
                 m_\ell^u(\Delta^{<\ell(\ell(u))}(\rz^{(k)}),\Delta^{<\ell(r(u))}(\rz^{(k)}))\}^2]
                 \leq 3 \cdot \sigma_{u,\widetilde{s_{d+1}}}^2, 
\]
and if $s_{d+1} = \ell$
\[
 \E[\median_{k=1,2,3} \{ \rbeta^{(k)}_{r(\ell(u))}-\rbeta^{(k)}_{r(u)}-\rbeta^{(k)}_{r(r(u))}
 -m_r^u(\Delta^{<r(\ell(u))}(\rz^{(k)}),\Delta^{<r(r(u))}(\rz^{(k)}))\}^2]
 \leq 3 \cdot \sigma_{u,\widetilde{s_{d+1}}}^2.
\]
\end{lemma}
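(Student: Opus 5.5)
The plan is to apply Corollary~\ref{cor:median} to the three residuals and then identify $\E[|\rZ_1|]$ with $\sigma_{u,\widetilde{s_{d+1}}}$. We treat the case $s_{d+1}=r$; the case $s_{d+1}=\ell$ is identical after replacing $\rbeta_{r(u)}$ by $\rbeta_{r(\ell(u))}-\rbeta_{r(u)}-\rbeta_{r(r(u))}$, $m^u_\ell$ by $m^u_r$, and the conditioning subtrees by $\Delta^{<r(\ell(u))},\Delta^{<r(r(u))}$. For $k=1,2,3$ define
\[
\rZ_k := \rbeta^{(k)}_{r(u)} - m^u_\ell\bigl(\Delta^{<\ell(\ell(u))}(\rz^{(k)}),\Delta^{<\ell(r(u))}(\rz^{(k)})\bigr).
\]

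\textbf{Step 1: the $\rZ_k$ are i.i.d.} Each $\rZ_k$ is the same fixed measurable function of $\rz^{(k)}$. The coefficients $\rbeta^{(k)}$ are a fixed linear function of $\rz^{(k)}$, since the basis $\{\chi^u\}$ is fixed, and $m^u_\ell$ was chosen measurable. The copies $\rz^{(1)},\rz^{(2)},\rz^{(3)}$ are i.i.d.\ draws of $\rz$, so the $\rZ_k$ are i.i.d.\ copies of $\rZ := \rbeta_{r(u)} - m^u_\ell(\Delta^{<\ell(\ell(u))}(\rz),\Delta^{<\ell(r(u))}(\rz))$. Corollary~\ref{cor:median} then gives $\E[\median_k\{\rZ_k\}^2]\le 3\,\E[|\rZ|]^2$.

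\textbf{Step 2: compute $\E[|\rZ|]$.} Condition on $\Delta^{<\ell(\ell(u))}(\rz),\Delta^{<\ell(r(u))}(\rz)$. Since $m^u_\ell$ is a conditional median, it attains the infimum, so
\[
\E[|\rZ|\mid \Delta^{<\ell(\ell(u))}(\rz),\Delta^{<\ell(r(u))}(\rz)] = \sigma_{u,\ell}\bigl(\Delta^{<\ell(\ell(u))}(\rz),\Delta^{<\ell(r(u))}(\rz)\bigr).
\]
The tower property then gives $\E[|\rZ|]=\sigma_{u,\ell}=\sigma_{u,\widetilde{s_{d+1}}}$, and substituting into Step~1 finishes the proof.

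\textbf{Main obstacle: measure-theoretic care.} Two points need checking.
\begin{itemize}
\item[(i)] The conditional median must actually be attained, and its smallest version must be measurable. The cleanest route is via a regular conditional distribution, which exists because everything is real-valued. For a fixed distribution, $m\mapsto\E|X-m|$ is convex and minimized at any median. The smallest median $\inf\{m: F(m\mid\cdot)\ge 1/2\}$ is measurable in the conditioning variables.
\item[(ii)] If $\E|\rbeta_{r(u)}|=\infty$, the conditional expectations may be infinite. In that case $\sigma_{u,\ell}=\infty$ and the inequality is trivial. We may therefore assume finiteness, or note that the corollary's bound is vacuous when its right-hand side is infinite.
\end{itemize}
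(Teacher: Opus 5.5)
Your proposal is correct and follows the paper's proof exactly. You apply Corollary~\ref{cor:median} to the i.i.d.\ residuals $\rZ_1,\rZ_2,\rZ_3$, then identify $\E[|\rZ_1|]$ with $\sigma_{u,\widetilde{s_{d+1}}}$, using that the chosen conditional median attains the conditional infimum and finishing with the tower property. One small remark on your caveat (ii): $\E[|\rbeta_{r(u)}|]=\infty$ does not by itself force $\sigma_{u,\ell}=\infty$, but the caveat is not needed, since Step~2 holds verbatim for nonnegative, possibly infinite, conditional expectations.
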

Lemma~\ref{lem:noisebound} follows almost immediately from Corollary~\ref{cor:median}
\begin{proof}[Proof of Lemma~\ref{lem:noisebound}]
  Using Corollary~\ref{cor:median} we have for $u \in \T^s$ at depth $2d$ if $s_{d+1} = r$
  \begin{align*}
        &\E[\median_{k=1,2,3} \{
\rbeta^{(k)}_{r(u)} -
                 m_\ell^u(\Delta^{<\ell(\ell(u))}(\rz^{(k)}),\Delta^{<\ell(r(u))}(\rz^{(k)}))\}^2]
          \leq \\
    &3 \cdot \E[|
\rbeta^{(1)}_{r(u)} -
                 m_\ell^u(\Delta^{<\ell(\ell(u))}(\rz^{(1)}),\Delta^{<\ell(r(u))}(\rz^{(1)}))|]^2
      =\\
    &3 \cdot \sigma^2_{u,\ell} = 3 \cdot \sigma_{u,\widetilde{s_{d+1}}}^2.
  \end{align*}
  and if $s_{d+1} = \ell$
  \begin{align*}
    &\E[\median_{k=1,2,3} \{ \rbeta^{(k)}_{r(\ell(u))}-\rbeta^{(k)}_{r(u)}-\rbeta^{(k)}_{r(r(u))}
 -m_r^u(\Delta^{<r(\ell(u))}(\rz^{(k)}),\Delta^{<r(r(u))}(\rz^{(k)}))\}^2]
    \leq \\
    &3 \cdot \E[|\rbeta^{(1)}_{r(\ell(u))}-\rbeta^{(1)}_{r(u)}-\rbeta^{(1)}_{r(r(u))}
 -m_r^u(\Delta^{<r(\ell(u))}(\rz^{(1)}),\Delta^{<r(r(u))}(\rz^{(1)}))|]^2
    =\\
    &3 \cdot \sigma_{u,r}^2 = 3 \cdot \sigma_{u,\widetilde{s_{d+1}}}^2.
  \end{align*}
\end{proof}

\paragraph{Privacy Attacks on $\widehat{\rM}_s$.}
Based on our analysis of the residual noise above, we now construct an adversary that deduces the value of $\rx_i$ for a fixed $i \in \cL(\T^s)$ from the output of $\widehat{\rM}_s$ via linear measurements, when $\sigma$ is too small. Specifically, we show that for any $\cJ \subseteq \{0,\dots,h/2-1\}$ with $|\cJ|=h/4$ and any $\{\cJ_d\}_{d \in \cJ}$, with $\cJ_d$ consisting of half the nodes of $\T^s$ at depth $2d$ for $d>0$ and $\cJ_0 = \{\rho\}$ if $0\in \cJ$, there is at least one $d \in \cJ$ and one $u \in \cJ_d$ such that $\sigma_{u,\widetilde{s_{d+1}}}$ is large. Inserting this in~\eqref{eq:lbhsigma} implies that the $\ell_\infty$ error of the original mechanism $\rM$ is large. 

For fixed $s \in \{\ell, r\}^{h/2}$, $i \in \cL(\T^s)$, $\cJ \subseteq \{0,\dots,h/2-1\}$ with $|\cJ|=h/4$ and any $\{\cJ_d\}_{d \in \cJ}$ with $\cJ_d$ consisting of half the nodes of $\T^s$ at depth $2d$ for $d>0$ and $\cJ_0 = \{\rho\}$ if $0 \in \cJ$, we design an adversary that attempts to predict $\rx_i$ from $\widehat{\rM}_s(\rx)$, while knowing the values of the coordinates $\rx_j$ for $j\neq i$. Since $\widehat{\rM}_s(\rx)$ is $(6\eps,3e\delta)$-DP for $\rx$ that is zero for $j \notin \cL(\T^s)$, it must be the case that for any neighboring pair of inputs $x',x'' \in \{0,1\}^n$ with $x_i' = 0$, $x_i'' = 1$, $x_j' = x_j'' = 0$ for $j\notin \cL(\T^s)$, and $x_j' = x_j''$ for all $j\neq i$, if $\rx$ is chosen uniformly among $\{x',x''\}$, then no adversary can guess $\rx_i$ from $\widehat{\rM}_s(\rx)$ with high probability, even when given knowledge of the pair $\{x',x''\}$. Let $\rx$ be uniform in $\{x',x''\}$. We consider an adversary that estimates $\rx_i$ as in~\Cref{alg:adversary}.

\begin{algorithm}[ht]
    \caption{Adversary for $\widehat{\rM}_s$}
    \label{alg:adversary}
    \KwIn{$s \in \{\ell, r\}^{h/2}$, $i \in \cL(\T^s)$, $\cJ$, $\{\cJ_d\}_{d\in \cJ}$, $x'$, $\widehat{\rM}_s(\rx)$}
    \KwOut{Estimate $\hat{\rx}_i $}
    Initialize $\rc^i_u = \mu^i_u = \psi^i_u = 0$ for all $u \in \T^s$ with even depth $d(u) < h$, and set $\rc^i_\Lambda = \mu^i_\Lambda = 0$ and $\psi^i_{\Lambda} = 2^{-h/4}$\;
    \For{$u \in \T^s$ with even depth $2d<h$}{
        \eIf{$i \in \cL(\T^{\ell(u)})$}{
            $\psi^i_u = -2^{-(h/2-d)/2}$\;
        }{
            $\psi^i_u = 2^{-(h/2-d)/2}$\;
        }
    }
    \For{$d \in \cJ$ and $u \in \cJ_d$}{
        \eIf{$s_{d+1} = r$}{
            $\rc_u^i = 2^{-d/2}\left(\widehat{\rM}_s(\rx)_{u} - \sum_{j \neq i : r(u) \in \I(j)} x_j'\right)$\;
            $\mu^i_u = 2^{-d/2}\one \{i \in \cL(\T^{r(\ell(u))})\}$\;
        }{
            $\rc_u^i = 2^{-d/2}\left(\widehat{\rM}_s(\rx)_{u} - \sum_{j \neq i: r(\ell(u)) \in \I(j)} x_j' +\sum_{j \neq i: r(u) \in \I(j)} x_j' +\sum_{j \neq i: r(r(u)) \in \I(j)} x_j'\right)$\;
            $\mu_u^i = -2^{-d/2}\one\{i \in \cL(\T^{\ell(r(u))})\}$\;
        }
    }
    Compute $\hat{\rx}_i =  \frac{\langle \psi^i, \rc^i \rangle}{\langle \psi^i, \mu^i \rangle}$\;
    \Return $\hat{\rx}_i$\;
\end{algorithm}

In essence, the adversary performs a privacy attack via linear measurements $\psi^i$. The vectors $\psi^i$ are inspired by the Haar wavelet basis and are defined independently of $\rx$. The key step is to show that the vector $\rc^i$ computed by the adversary can be decomposed as
\[
\rc^i = \rx_i \mu^i + \rr,
\]
where $\mu^i$ is a deterministic vector depending only on $i$, $s$, $\cJ$, $\{\cJ_d\}_{d\in\cJ}$, and $\rr$ is a noise vector whose $\ell_2$ norm is controlled by $\sigma$. Given this decomposition, the adversary computes
\[
\hat{\rx}_i = \frac{\langle \psi^i, \rc^i \rangle}{\langle \psi^i, \mu^i \rangle} = \rx_i + \frac{\langle \psi^i, \rr \rangle}{\langle \psi^i, \mu^i \rangle},
\]
which is a private estimate of $\rx_i$ with additive noise $\langle \psi^i, \rr\rangle / \langle \psi^i, \mu^i\rangle$. We now establish the decomposition $\rc^i = \rx_i\mu^i + \rr$.

For each $u \in
\Vecd(\T)$, let $\ra_u = \sum_{i : u \in \I(i)} \rx_i$ so that $A\rx =
\sum_{u \in \Vecd(\T)} \ra_u \chi^u$. For every $d \in \cJ$ and every $u \in \cJ_d$, if $s_{d+1}=r$, the
adversary subtracts 
\[
   \sum_{j \neq i : r(u) \in \I(j)} \rx_j
\]
from $\widehat{\rM}_s(\rx)_u$ to obtain
\begin{align*}
  &\widehat{\rM}_s(\rx)_u - \sum_{j \neq i : r(u) \in \I(j)} \rx_j= \raq_u + \ra_{r(u)} - \sum_{j \neq i : r(u) \in \I(j)} \rx_j = \raq_u+ \one\{r(u) \in \I(i)\} \rx_i,
\end{align*}
where $\raq_u$ is defined as in~\eqref{eq:q_right}.
We observe that for $s_{d+1}=r$ we have that $i$ is in either the subtree
rooted at $r(\ell(u))$ or $r(r(u))$ and thus $\one\{r(u) \in \I(i)\}
=\one\{i \in \cL(\T^{r(\ell(u))})\}$. The adversary has thus computed
\[
  \raq_u + \one\{i \in \cL(\T^{r(\ell(u))})\}\rx_i.
\]
Now multiply this by $2^{-d/2}$ to obtain
\[
\rc_u^i = 2^{-d/2}\left(\raq_u + \one\{i \in \cL(\T^{r(\ell(u))})\}\rx_i\right).
\]
If instead $s_{d+1}=\ell$, the adversary subtracts
\[
  \sum_{j \neq i: r(\ell(u)) \in \I(j)} \rx_j - \sum_{j \neq i: r(u) \in \I(j)} \rx_j -\sum_{j \neq i: r(r(u)) \in \I(j)} \rx_j 
\]
from $\widehat{\rM}_s(\rx)_u$ to obtain
\begin{align*}
&\widehat{\rM}_s(\rx)_u - \left(
    \sum_{j \neq i: r(\ell(u)) \in \I(j)} \rx_j - \sum_{j \neq i: r(u)
    \in \I(j)} \rx_j -\sum_{j \neq i: r(r(u)) \in \I(j)} \rx_j \right)=\\
  &\raq_u + \ra_{r(\ell(u))} -\ra_{r(u)} - \ra_{r(r(u))} - \left(
    \sum_{j \neq i: r(\ell(u)) \in \I(j)} \rx_j - \sum_{j \neq i: r(u)
    \in \I(j)} \rx_j -\sum_{j \neq i: r(r(u)) \in \I(j)} \rx_j \right)
    =\\
  & \raq_u + \left(\one\{r(\ell(u)) \in \I(i)\} - \one\{r(u) \in \I(i)\}- \one\{r(r(u)) \in \I(i)\}   \right)\rx_i,
\end{align*}
where $\raq_u$ is defined in~\eqref{eq:q_left}.
Since for $s_{d+1}=\ell$ we have that $i$ is either in the
subtree rooted at $\ell(\ell(u))$ or at $\ell(r(u))$, we conclude by the
definition of $\I(i)$ that if $i \in \cL(\T^{\ell(\ell(u))})$, we have
\begin{align*}
  &\one\{r(\ell(u)) \in \I(i)\} - \one\{r(u) \in \I(i)\}-
    \one\{r(r(u)) \in \I(i)\}   = 1-1-0 = 0.
\end{align*}
and if $i \in \cL(\T^{\ell(r(u))})$ we have
\begin{align*}
  &\one\{r(\ell(u)) \in \I(i)\} - \one\{r(u) \in \I(i)\}-
    \one\{r(r(u)) \in \I(i)\}   = 0-0-1 = -1.
\end{align*}
The adversary has thus computed
\[
  \raq_u - \one\{i \in \cL(\T^{\ell(r(u))})\}\rx_i.
\]
Now multiply this by $2^{-d/2}$ to obtain
\[
\rc_u^i = 2^{-d/2}\left(\raq_u - \one\{i \in \cL(\T^{\ell(r(u))})\}\rx_i \right).
\]

Now define a vector $\rr$ with one coordinate $\rr_u$ for each $u \in \T^s$ at an even depth less than $h$ and an additional dummy coordinate $\rr_\Lambda = 0$. Define $\rr_u = 0$ for $u \notin \cup_{d \in \cJ} \cJ_d$ and define $\rr_u = 2^{-d/2} \raq_u$ otherwise.
By the above arguments, $\rc^i = \rx_i \mu^i + \rr$, establishing the promised decomposition. For $i \in \cL(\T^s)$, the adversary computes the estimate
\[
  \hat{\rx}_i = \frac{\langle \psi^i, \rc^i\rangle}{\langle \psi^i, \mu^i \rangle}=\frac{\langle \psi^i, \rr + \rx_i \mu^i\rangle}{\langle \psi^i, \mu^i \rangle} = \rx_i + \frac{\langle \psi^i, \rr\rangle}{\langle \psi^i, \mu^i \rangle}.
\]

Since $\hat{\rx}_i$ is a deterministic function of $\widehat{\rM}_s(\rx)$ and $x'$, it is $(6\eps, 3e\delta)$-differentially private by the postprocessing property of differential privacy. Moreover, $\hat{\rx}_i$ is an estimate of the binary variable $\rx_i$ with additive noise $\frac{\langle \psi^i, \rr\rangle}{\langle \psi^i, \mu^i \rangle}$, which is independent of $\rx_i$. This is because $\rr$ only depends on $s$, $\cJ$, $\{\cJ_d\}_{d\in \cJ}$, and the noise vectors $\rz^{(1)}, \rz^{(2)}, \rz^{(3)}$ (which are independent of $\rx$ because $\rM_o$ is oblivious). We observe that none of $\psi^i$, $\mu^i$, $\rr$ depend on $x'$: $\psi^i$ depends on $i$ and $s$, $\mu^i$ depends on $i, s, \cJ, \{\cJ_d\}_{d\in \cJ}$ and $\rr$ depends on $s, \cJ, \{\cJ_d\}_{d \in \cJ}$ and the noise vectors $\rz^{(1)}, \rz^{(2)}, \rz^{(3)}$. Applying Lemma~\ref{lem:releaseVar}, we conclude that for fixed $s, i, \cJ, \{\cJ_d\}_{d \in \cJ}$ as defined above

\begin{align*}
    \frac{\E[\langle \psi^i,\rr \rangle^2]}{\langle\psi^i, \mu^i \rangle^2} = \Omega( \min\{3^{-2} \eps^{-2}, 3^{-2}\delta^{-2}\}).
\end{align*}
Therefore, for fixed $s, \cJ, \{\cJ_d\}_{d \in \cJ}$
\begin{align}
\label{eq:connecting-inequality}
    \sum_{i \in \cL(\T^s)}\E[\langle \psi^i,\rr \rangle^2] = \Omega\left( \sum_{i \in \cL(\T^s)}\langle\psi^i, \mu^i \rangle^2 \min\{3^{-2} \eps^{-2}, 3^{-2}\delta^{-2}\}\right).
\end{align}

We now want to upper bound $\sum_{i \in \cL(\T^s)}\E[\langle \psi^i,\rr \rangle^2]$ and lower bound $\sum_{i \in \cL(\T^s)}\langle\psi^i, \mu^i \rangle^2$, which will give us a lower bound on $\sigma$. 

We start by upper bounding $\sum_{i \in \cL(\T^s)}\E[\langle \psi^i,\rr \rangle^2]$. Observe that the $\psi^i$ vectors are orthogonal and unit length.
\begin{lemma}
\label{lem:psiprop}
For $i \in \cL(\T^s)$ we have $\|\psi^i\|^2_2 = 1$ and 
for $i \neq j$ with $i,j \in \cL(\T^s)$, we have $\langle \psi^i, \psi^j \rangle = 0$.
\end{lemma}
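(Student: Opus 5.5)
The plan is to verify both claims by direct computation, using the structure of $\T^s$ as a perfect binary tree over its even-depth nodes. First, the relevant tree structure. The coordinates of $\psi^i$ are the nodes $u \in \T^s$ with even depth $2d < h$, plus the dummy coordinate $\Lambda$. For such a $u$ at depth $2d$, the node's two children lie in $\T^s$. Below each child, only one grandchild lies in $\T^s$, namely the one dictated by $s_{d+1}$.

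Contracting $\T^s$ to its even levels therefore gives a perfect binary tree $\T'$ of height $h/2$ whose leaves are $\cL(\T^s)$. In $\T'$, the left child of $u$ is $s_{d+1}(\ell(u))$ and the right child is $s_{d+1}(r(u))$. A node at depth $2d$ has exactly $2^{h/2-d}$ leaves of $\cL(\T^s)$ below it. The vector $\psi^i$ is nonzero at $u$ exactly when $i \in \cL(\T^u)$, i.e.\ when $u$ is a proper ancestor of $i$ in $\T'$. Its sign is $-$ or $+$ according to which side $i$ lies on, and its magnitude is $2^{-(h/2-d)/2}$. The dummy coordinate is $2^{-h/4} = 1/\sqrt{|\cL(\T^s)|}$.

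Unit length. Each $i$ has exactly one ancestor $u$ in $\T'$ at each depth $2d$, for $d = 0, \dots, h/2-1$. Summing the squared entries gives
\[
\|\psi^i\|_2^2 = 2^{-h/2} + \sum_{d=0}^{h/2-1} 2^{-(h/2-d)} = 2^{-h/2} + \left(1 - 2^{-h/2}\right) = 1.
\]

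Orthogonality. Take $i \neq j$ in $\cL(\T^s)$ and let $w$ be their lowest common ancestor in $\T'$, at depth $2d^\ast$.
\begin{itemize}
\item Nodes that are an ancestor of only one of $i, j$ contribute $0$ to $\langle \psi^i, \psi^j\rangle$.
\item At $w$, the two leaves lie on opposite sides, so $w$ contributes $-2^{-(h/2-d^\ast)}$.
\item Each strict common ancestor at depth $2d < 2d^\ast$ has $i$ and $j$ on the same side, so it contributes $+2^{-(h/2-d)}$.
\item The dummy coordinate contributes $2^{-h/2}$.
\end{itemize}
The positive contributions sum to
\[
2^{-h/2} + \sum_{d=0}^{d^\ast-1} 2^{-(h/2-d)} = 2^{-h/2}\left(1 + 2^{d^\ast} - 1\right) = 2^{-(h/2-d^\ast)}.
\]
This exactly cancels the contribution of $w$, so $\langle \psi^i, \psi^j\rangle = 0$. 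Equivalently, the $\psi^i$ are the rows of the orthonormal Haar matrix on $\T'$, since an orthonormal square matrix also has orthonormal rows.

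The only delicate step is checking that the branching in Algorithm~\ref{alg:adversary} matches the Haar structure of $\T'$. The test "$i \in \cL(\T^{\ell(u)})$" must correspond to $i$ lying in the left $\T'$-child $s_{d+1}(\ell(u))$. This holds because every leaf of $\cL(\T^s)$ below $\ell(u)$ lies below $s_{d+1}(\ell(u))$. In the algorithm's loop, entries with $i \notin \cL(\T^u)$ should be read as $0$, as specified in the overview. This is the intended reading, and with it the computation above goes through.
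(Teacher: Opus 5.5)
Your proposal is correct and follows the paper's proof. Both compute the norm as $2^{-h/2}+\sum_{d=0}^{h/2-1}2^{-(h/2-d)}=1$ and obtain orthogonality from the same lowest-common-ancestor cancellation: the dummy coordinate and the strict common ancestors sum to exactly $2^{-(h/2-d^\ast)}$, offsetting the $-2^{-(h/2-d^\ast)}$ from the lowest common ancestor. Your remark that entries at non-ancestors must be read as $0$ (as in the overview, not literally as in the algorithm's else-branch) is also the reading the paper's proof implicitly uses, since it sums only over $u$ with $i,j \in \cL(\T^u)$.
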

This lemma follows directly from the definition of $\psi^i$ and the proof has been deferred to Appendix~\ref{sec:psiprop}. It follows from orthogonality and unit length that $\sum_{i\in \cL(\T^s)} \E[\langle \psi^i ,\rr \rangle^2] = \E[\sum_{i\in \cL(\T^s)} \langle \psi^i ,\rr \rangle^2] \leq \E[\|\rr\|_2^2] $. Now recall that $\sigma = \max_{d \in \cJ} \max_{u \in \cJ_d} \sigma_{u, \widetilde{s_{d+1}}}$. We observe that
\begin{align}
\label{eq:ub}
    \E[\|\rr\|_2^2] &= \sum_{d \in \cJ} \sum_{u \in \cJ_d} \E[\rr_u^2] \nonumber\\
    &\leq \sum_{d \in \cJ} \sum_{u \in \cJ_d} 3 \cdot 2^{-d} \cdot \sigma^2_{u, \widetilde{s_{d+1}}} \nonumber\\
    &\leq 3 \cdot \sum_{d \in \cJ} \sum_{u \in \cJ_d} 2^{-d} \cdot \sigma^2 \nonumber\\
    &= 3 \cdot \sum_{d \in \cJ} |\cJ_d| 2^{-d} \cdot \sigma^2 \nonumber\\
    &= (3/2) \cdot \sum_{d \in \cJ} 2^d\cdot 2^{-d} \cdot \sigma^2 \nonumber\\
    &= (3/2) \cdot |\cJ| \cdot \sigma^2 \nonumber\\
    &= (3/8) \cdot h \cdot \sigma^2.
\end{align}

Next, we lower bound $\sum_{i \in \cL(\T^s)}\langle\psi^i, \mu^i \rangle^2$. Define for convenience the notation $\Anc(i,d) \in \T$ as the ancestor of the $i$'th leaf at depth $2d$. Let us next consider
\begin{align}
\label{eq:ipsq}
    &\sum_{i \in \cL(\T^s)} \langle \psi^i, \mu^i \rangle^2 = \nonumber \\
    &\sum_{i \in \cL(\T^s)} \left(\sum_{d=0}^{h/2-1} \mu_{\Anc(i,d)}^i 2^{-(h/2-d)/2} (-1)^{\one\{i \in \cL(\T^{\ell(\Anc(i,d))})\}}\right)^2.
\end{align}
Let $u = \Anc(i,d)$ for short. Then
\begin{align*}
    &\mu^i_{u} (-1)^{\one\{i \in \cL(\T^{\ell(u)})\}} =\\
    &2^{-d/2} (-1)^{\one\{s_{d+1}=\ell\}} \one\{i \in \cL(\T^{s_{d+1}(\widetilde{s_{d+1}}(u))})\}(-1)^{\one\{i \in \cL(\T^{\ell(u)})\}}.
\end{align*}
Now consider first the case $s_{d+1}=\ell$. Then $\one\{i \in \cL(\T^{s_{d+1}(\widetilde{s_{d+1}}(u))})\} = 0$ if $i \in \cL(\T^{\ell(u)})$. If instead $i \in \cL(\T^{r(u)})$, we get 
\[
(-1)^{\one\{s_{d+1}=\ell\}} (-1)^{\one\{i \in \cL(\T^{\ell(u)})\}}= -1 \cdot 1 = -1.
\]
Similarly, if $s_{d+1}=r$, then we need only consider $i \in \cL(\T^{\ell(u)})$. Here we have
\[
(-1)^{\one\{s_{d+1}=\ell\}} (-1)^{\one\{i \in \cL(\T^{\ell(u)})\}}= 1 \cdot (-1) = -1.
\]
Thus in all circumstances, we may simplify
\begin{align*}
    &\mu^i_{u} (-1)^{\one\{i \in \cL(\T^{\ell(u)})\}} = -2^{-d/2} \one\{i \in \cL(\T^{s_{d+1}(\widetilde{s_{d+1}}(u))})\}.
\end{align*}
Inserting this in~\eqref{eq:ipsq}, we see that
\begin{align*}
    &\sum_{i \in \cL(\T^s)} \langle \psi^i, \mu^i \rangle^2 =  \\
    &\sum_{i \in \cL(\T^s)} \left(\sum_{d=0}^{h/2-1} -2^{-h/4}\one\{i \in \cL(\T^{s_{d+1}(\widetilde{s_{d+1}}(\Anc(i,d)))})\} \right)^2 =\\
    &2^{-h/2} \sum_{i \in \cL(\T^s)}  \left(\sum_{d=0}^{h/2-1} \one\{i \in \cL(\T^{s_{d+1}(\widetilde{s_{d+1}}(\Anc(i,d)))})\} \right)^2 = \\
    &2^{-h/2} \sum_{i \in \cL(\T^s)}  \sum_{d=0}^{h/2-1} \sum_{d'=0}^{h/2-1} \one\{i \in \cL(\T^{s_{d+1}(\widetilde{s_{d+1}}(\Anc(i,d)))})\}\one\{i \in \cL(\T^{s_{d'+1}(\widetilde{s_{d'+1}}(\Anc(i,d')))})\}  = \\
    &2^{-h/2} \sum_{d=0}^{h/2-1} \sum_{d'=0}^{h/2-1} \sum_{u \in \cJ_d} \sum_{v \in \cJ_{d'}} \sum_{i \in \cL(\T^s)}\one\{i \in \cL(\T^{s_{d+1}(\widetilde{s_{d+1}}(u))})\}\one\{i \in \cL(\T^{s_{d'+1}(\widetilde{s_{d'+1}}(v))})\}.
\end{align*}
Now define vectors $\phi^u$ for each $u \in \cup_{d \in \cJ} \cJ_d$. If $u$ is in depth $2d$, we let $\phi^u$ be the indicator vector with one coordinate for each $i \in \cL(\T^s)$ taking the value $\one\{i \in \cL(\T^{s_{d+1}(\widetilde{s_{d+1}}(u))})\}$. Then by the above we have
\begin{align*}
    &\sum_{i \in \cL(\T^s)} \langle \psi^i, \mu^i \rangle^2 =  2^{-h/2} \left\|\sum_{u \in \cup_{d \in \cJ} \cJ_d} \phi^u \right\|^2_2.
\end{align*}
Applying Cauchy-Schwartz, we first see that
\begin{align}
\label{eq:step1}
    2^{-h/2} \left\|\sum_{u \in \cup_{d \in \cJ} \cJ_d} \phi^u \right\|^2_2 \geq 
    &2^{-h/2} \left\|\sum_{u \in \cup_{d \in \cJ} \cJ_d} \phi^u \right\|^2_1 \cdot |\cL(\T^s)|^{-1}.
\end{align}
Now using that the vectors $\phi^u$ are indicator vectors we have that~\eqref{eq:step1} equals
\begin{align*}
    &2^{-h} \left(\sum_{u \in \cup_{d \in \cJ} \cJ_d} \|\phi^u\|_1  \right)^2 =
    2^{-h} \left(\sum_{d \in \cJ} \sum_{u \in \cJ_d} \|\phi^u\|_1  \right)^2 =
    2^{-h} \left(\sum_{d \in \cJ} \sum_{u \in \cJ_d} 2^{h/2-d-1}  \right)^2 =\\
    &2^{-h} \left(\sum_{d \in \cJ} 2^{d-1} 2^{h/2-d-1}  \right)^2 =
    2^{-h} \left(h \cdot 2^{h/2-4}  \right)^2 =
    2^{-8} h^2.
\end{align*}
Therefore, 
\begin{align}
    \sum_{i \in \cL(\T^s)} \langle \psi^i, \mu^i \rangle^2\geq 2^{-8} h^2.
    \label{eq:lb}
\end{align}

Using the upper bound of $\E[\|\rr\|^2]$~\eqref{eq:ub} and the lower bound of $\sum_{i \in \cL(\T^s)}\langle \psi^i, \mu^i\rangle^2$~\eqref{eq:lb} in inequality~\eqref{eq:connecting-inequality}, it follows that 
\[
 \frac{3 \cdot h\cdot \sigma^2}{8} \geq \Omega(h^2 \cdot 2^{-8} \min\{6^{-2} \eps^{-2}, (3e)^{-2}\delta^{-2}\}). 
\]
This gives a lower bound of
\[
\sigma^2 = \Omega(h \min\{\eps^{-2},\delta^{-2}\}) \Rightarrow \sigma = \Omega(\sqrt{h} \min\{\eps^{-1},\delta^{-1}\}).
\]
Inserting this in~\eqref{eq:lbhsigma} we conclude
\[
\max_{x \in \{0,1\}^n }\E[\|\rM(x)-Ax\|_\infty] \geq \E[\|\rz\|_\infty] = \Omega\left(\frac{h^{3/2}}{\max\{\eps,\delta\}} \right) = \Omega\left(\frac{\log^{3/2} n}{\max\{\eps,\delta\}} \right).
\]
This completes the proof of Theorem~\ref{thm:main}.

\bibliography{refs}

@inproceedings{mediantrick,
  author       = {Kasper Green Larsen and
                  Rasmus Pagh and
                  Jakub Tetek},
  title        = {CountSketches, Feature Hashing and the Median of Three},
  booktitle    = {{ICML}},
  series       = {Proceedings of Machine Learning Research},
  volume       = {139},
  pages        = {6011--6020},
  publisher    = {{PMLR}},
  year         = {2021}
}

@article{GengViswanath2016ApproxDPNoise,
  author  = {Quan Geng and Pramod Viswanath},
  title   = {Optimal Noise Adding Mechanisms for Approximate Differential Privacy},
  journal = {IEEE Transactions on Information Theory},
  volume  = {62},
  number  = {2},
  pages   = {952--969},
  year    = {2016}
}

@inproceedings{UpperBound1,
  author       = {Cynthia Dwork and
                  Moni Naor and
                  Toniann Pitassi and
                  Guy N. Rothblum},
  editor       = {Leonard J. Schulman},
  title        = {Differential privacy under continual observation},
  booktitle    = {Proceedings of the 42nd {ACM} Symposium on Theory of Computing, {STOC}
                  2010, Cambridge, Massachusetts, USA, 5-8 June 2010},
  pages        = {715--724},
  publisher    = {{ACM}},
  year         = {2010}
}

@inproceedings{UpperBound2,
  author       = {T.{-}H. Hubert Chan and
                  Elaine Shi and
                  Dawn Song},
  editor       = {Samson Abramsky and
                  Cyril Gavoille and
                  Claude Kirchner and
                  Friedhelm Meyer auf der Heide and
                  Paul G. Spirakis},
  title        = {Private and Continual Release of Statistics},
  booktitle    = {Automata, Languages and Programming, 37th International Colloquium,
                  {ICALP} 2010, Bordeaux, France, July 6-10, 2010, Proceedings, Part
                  {II}},
  series       = {Lecture Notes in Computer Science},
  pages        = {405--417},
  publisher    = {Springer},
  year         = {2010}
}

@inproceedings{l2LowerBound,
  author       = {Monika Henzinger and
                  Jalaj Upadhyay and
                  Sarvagya Upadhyay},
  editor       = {Nikhil Bansal and
                  Viswanath Nagarajan},
  title        = {Almost Tight Error Bounds on Differentially Private Continual Counting},
  booktitle    = {Proceedings of the 2023 {ACM-SIAM} Symposium on Discrete Algorithms,
                  {SODA} 2023, Florence, Italy, January 22-25, 2023},
  pages        = {5003--5039},
  publisher    = {{SIAM}},
  year         = {2023}
}

@inproceedings{SparseLowerBound,
  author       = {Edith Cohen and
                  Xin Lyu and
                  Jelani Nelson and
                  Tam{\'{a}}s Sarl{\'{o}}s and
                  Uri Stemmer},
  editor       = {Shipra Agrawal and
                  Aaron Roth},
  title        = {Lower Bounds for Differential Privacy Under Continual Observation
                  and Online Threshold Queries},
  booktitle    = {The Thirty Seventh Annual Conference on Learning Theory, June 30 -
                  July 3, 2023, Edmonton, Canada},
  series       = {Proceedings of Machine Learning Research},
  pages        = {1200--1222},
  publisher    = {{PMLR}},
  year         = {2024}
}

@inproceedings{UpperBoundApprox,
  author       = {Palak Jain and
                  Sofya Raskhodnikova and
                  Satchit Sivakumar and
                  Adam D. Smith},
  editor       = {Andreas Krause and
                  Emma Brunskill and
                  Kyunghyun Cho and
                  Barbara Engelhardt and
                  Sivan Sabato and
                  Jonathan Scarlett},
  title        = {The Price of Differential Privacy under Continual Observation},
  booktitle    = {International Conference on Machine Learning, {ICML} 2023, 23-29 July
                  2023, Honolulu, Hawaii, {USA}},
  series       = {Proceedings of Machine Learning Research},
  pages        = {14654--14678},
  publisher    = {{PMLR}},
  year         = {2023},
  url          = {https://proceedings.mlr.press/v202/jain23b.html}
}

@inproceedings{ObliviousLinear,
author = {Bhaskara, Aditya and Dadush, Daniel and Krishnaswamy, Ravishankar and Talwar, Kunal},
title = {Unconditional differentially private mechanisms for linear queries},
year = {2012},
isbn = {9781450312455},
publisher = {Association for Computing Machinery},
address = {New York, NY, USA},
booktitle = {Proceedings of the Forty-Fourth Annual ACM Symposium on Theory of Computing},
pages = {1269–1284},
numpages = {16},
location = {New York, New York, USA},
series = {STOC '12}
}

@inproceedings{PureDPDefinition,
  author       = {Cynthia Dwork and
                  Frank McSherry and
                  Kobbi Nissim and
                  Adam D. Smith},
  editor       = {Shai Halevi and
                  Tal Rabin},
  title        = {Calibrating Noise to Sensitivity in Private Data Analysis},
  booktitle    = {Theory of Cryptography, Third Theory of Cryptography Conference, {TCC}
                  2006, New York, NY, USA, March 4-7, 2006, Proceedings},
  series       = {Lecture Notes in Computer Science},
  pages        = {265--284},
  publisher    = {Springer},
  year         = {2006}
}

@inproceedings{ConstantUB,
  author       = {Hendrik Fichtenberger and
                  Monika Henzinger and
                  Jalaj Upadhyay},
  editor       = {Andreas Krause and
                  Emma Brunskill and
                  Kyunghyun Cho and
                  Barbara Engelhardt and
                  Sivan Sabato and
                  Jonathan Scarlett},
  title        = {Constant Matters: Fine-grained Error Bound on Differentially Private
                  Continual Observation},
  booktitle    = {International Conference on Machine Learning, {ICML} 2023, 23-29 July
                  2023, Honolulu, Hawaii, {USA}},
  series       = {Proceedings of Machine Learning Research},
  pages        = {10072--10092},
  publisher    = {{PMLR}},
  year         = {2023}
}

@inproceedings{PrivateSCO,
  author       = {Yuxuan Han and
                  Zhicong Liang and
                  Zhipeng Liang and
                  Yang Wang and
                  Yuan Yao and
                  Jiheng Zhang},
  editor       = {Kamalika Chaudhuri and
                  Stefanie Jegelka and
                  Le Song and
                  Csaba Szepesv{\'{a}}ri and
                  Gang Niu and
                  Sivan Sabato},
  title        = {Private Streaming {SCO} in $\ell_p$ geometry
                  with Applications in High Dimensional Online Decision Making},
  booktitle    = {International Conference on Machine Learning, {ICML} 2022, 17-23 July
                  2022, Baltimore, Maryland, {USA}},
  series       = {Proceedings of Machine Learning Research},
  pages        = {8249--8279},
  publisher    = {{PMLR}},
  year         = {2022},
  url          = {https://proceedings.mlr.press/v162/han22d.html}
}

@inproceedings{DPFTRL,
  author       = {Abhradeep Guha Thakurta and
                  Adam D. Smith},
  editor       = {Christopher J. C. Burges and
                  L{\'{e}}on Bottou and
                  Zoubin Ghahramani and
                  Kilian Q. Weinberger},
  title        = {(Nearly) Optimal Algorithms for Private Online Learning in Full-information
                  and Bandit Settings},
  booktitle    = {Advances in Neural Information Processing Systems 26: 27th Annual
                  Conference on Neural Information Processing Systems 2013. Proceedings
                  of a meeting held December 5-8, 2013, Lake Tahoe, Nevada, United States},
  pages        = {2733--2741},
  year         = {2013},
  url          = {https://proceedings.neurips.cc/paper/2013/hash/c850371fda6892fbfd1c5a5b457e5777-Abstract.html}
}

@inproceedings{PrivateLearningwDPFTRL,
  author       = {Peter Kairouz and
                  Brendan McMahan and
                  Shuang Song and
                  Om Thakkar and
                  Abhradeep Thakurta and
                  Zheng Xu},
  editor       = {Marina Meila and
                  Tong Zhang},
  title        = {Practical and Private (Deep) Learning Without Sampling or Shuffling},
  booktitle    = {Proceedings of the 38th International Conference on Machine Learning,
                  {ICML} 2021, 18-24 July 2021, Virtual Event},
  series       = {Proceedings of Machine Learning Research},
  pages        = {5213--5225},
  publisher    = {{PMLR}},
  year         = {2021},
  url          = {http://proceedings.mlr.press/v139/kairouz21b.html}
}

@inproceedings{GradualPrivacy,
  author       = {Joel Daniel Andersson and
                  Monika Henzinger and
                  Rasmus Pagh and
                  Teresa Anna Steiner and
                  Jalaj Upadhyay},
  editor       = {Amir Globersons and
                  Lester Mackey and
                  Danielle Belgrave and
                  Angela Fan and
                  Ulrich Paquet and
                  Jakub M. Tomczak and
                  Cheng Zhang},
  title        = {Continual Counting with Gradual Privacy Expiration},
  booktitle    = {Advances in Neural Information Processing Systems 37: Annual Conference
                  on Neural Information Processing Systems 2024, NeurIPS 2024, Vancouver,
                  BC, Canada, December 10 - 15, 2024},
  year         = {2024},
  url          = {http://papers.nips.cc/paper\_files/paper/2024/hash/11faf17bf7e5412d9cded369f97db23d-Abstract-Conference.html}
}

@inproceedings{FrameworkPrivateSums,
  author       = {Monika Henzinger and
                  Jalaj Upadhyay and
                  Sarvagya Upadhyay},
  editor       = {David P. Woodruff},
  title        = {A Unifying Framework for Differentially Private Sums under Continual
                  Observation},
  booktitle    = {Proceedings of the 2024 {ACM-SIAM} Symposium on Discrete Algorithms,
                  {SODA} 2024, Alexandria, VA, USA, January 7-10, 2024},
  pages        = {995--1018},
  publisher    = {{SIAM}},
  year         = {2024}
}

@article{ImprovedPure,
  author       = {Joel Daniel Andersson and
                  Rasmus Pagh and
                  Sahel Torkamani},
  title        = {Improved Counting under Continual Observation with Pure Differential
                  Privacy},
  journal      = {CoRR},
  volume       = {abs/2408.07021},
  year         = {2024},
  url          = {https://doi.org/10.48550/arXiv.2408.07021},
  doi          = {10.48550/ARXIV.2408.07021},
  eprinttype   = {arXiv},
  eprint       = {2408.07021}
}

@inproceedings{GeometrySparse,
  author       = {Aleksandar Nikolov and
                  Kunal Talwar and
                  Li Zhang},
  editor       = {Dan Boneh and
                  Tim Roughgarden and
                  Joan Feigenbaum},
  title        = {The geometry of differential privacy: the sparse and approximate cases},
  booktitle    = {Symposium on Theory of Computing Conference, STOC'13, Palo Alto, CA,
                  USA, June 1-4, 2013},
  pages        = {351--360},
  publisher    = {{ACM}},
  year         = {2013}
}

@inproceedings{ContinualGraphs,
  author       = {Hendrik Fichtenberger and
                  Monika Henzinger and
                  Lara Ost},
  editor       = {Petra Mutzel and
                  Rasmus Pagh and
                  Grzegorz Herman},
  title        = {Differentially Private Algorithms for Graphs Under Continual Observation},
  booktitle    = {29th Annual European Symposium on Algorithms, {ESA} 2021, Lisbon,
                  Portugal (Virtual Conference), September 6-8, 2021},
  series       = {LIPIcs},
  volume       = {204},
  pages        = {42:1--42:16},
  publisher    = {Schloss Dagstuhl - Leibniz-Zentrum f{\"{u}}r Informatik},
  year         = {2021}
}

@inproceedings{CorrelatedNoise,
  author       = {Christopher A. Choquette{-}Choo and
                  Krishnamurthy Dj Dvijotham and
                  Krishna Pillutla and
                  Arun Ganesh and
                  Thomas Steinke and
                  Abhradeep Guha Thakurta},
  title        = {Correlated Noise Provably Beats Independent Noise for Differentially
                  Private Learning},
  booktitle    = {The Twelfth International Conference on Learning Representations,
                  {ICLR} 2024, Vienna, Austria, May 7-11, 2024},
  publisher    = {OpenReview.net},
  year         = {2024}
}

@article{HereditaryDiscrepancy,
    author = {Jir{\'{\i}} Matousek and
                  Aleksandar Nikolov and
                  Kunal Talwar},
    title = {Factorization Norms and Hereditary Discrepancy},
    journal = {International Mathematics Research Notices},
    volume = {2020},
    number = {3},
    pages = {751-780},
    year = {2020},
    month = {02},
    issn = {1073-7928},
    doi = {10.1093/imrn/rny033},
    url = {https://doi.org/10.1093/imrn/rny033},
    eprint = {https://academic.oup.com/imrn/article-pdf/2020/3/751/32219164/rny033.pdf},
}

@article{EfficientBinary,
  title={Efficient use of differentially private binary trees},
  author={Honaker, James},
  journal={Theory and Practice of Differential Privacy (TPDP 2015), London, UK},
  volume={2},
  pages={26--27},
  year={2015}
}

@inproceedings{RectangleQueries,
  author       = {Cynthia Dwork and
                  Moni Naor and
                  Omer Reingold and
                  Guy N. Rothblum},
  editor       = {Tetsu Iwata and
                  Jung Hee Cheon},
  title        = {Pure Differential Privacy for Rectangle Queries via Private Partitions},
  booktitle    = {Advances in Cryptology - {ASIACRYPT} 2015 - 21st International Conference
                  on the Theory and Application of Cryptology and Information Security,
                  Auckland, New Zealand, November 29 - December 3, 2015, Proceedings,
                  Part {II}},
  series       = {Lecture Notes in Computer Science},
  volume       = {9453},
  pages        = {735--751},
  publisher    = {Springer},
  year         = {2015}
}

@inproceedings{CountingDistinct,
  author       = {Palak Jain and
                  Iden Kalemaj and
                  Sofya Raskhodnikova and
                  Satchit Sivakumar and
                  Adam Smith},
  editor       = {Alice Oh and
                  Tristan Naumann and
                  Amir Globerson and
                  Kate Saenko and
                  Moritz Hardt and
                  Sergey Levine},
  title        = {Counting Distinct Elements in the Turnstile Model with Differential
                  Privacy under Continual Observation},
  booktitle    = {Advances in Neural Information Processing Systems 36: Annual Conference
                  on Neural Information Processing Systems 2023, NeurIPS 2023, New Orleans,
                  LA, USA, December 10 - 16, 2023},
  year         = {2023}
}

@inproceedings{DistinctExtension,
  author       = {Monika Henzinger and
                  A. R. Sricharan and
                  Teresa Anna Steiner},
  editor       = {Amit Kumar and
                  Noga Ron{-}Zewi},
  title        = {Private Counting of Distinct Elements in the Turnstile Model and Extensions},
  booktitle    = {Approximation, Randomization, and Combinatorial Optimization. Algorithms
                  and Techniques, {APPROX/RANDOM} 2024, London School of Economics,
                  London, UK, August 28-30, 2024},
  series       = {LIPIcs},
  volume       = {317},
  pages        = {40:1--40:21},
  publisher    = {Schloss Dagstuhl - Leibniz-Zentrum f{\"{u}}r Informatik},
  year         = {2024}
}

@inproceedings{EfficientDistinct,
  author       = {Rachel Cummings and
                  Alessandro Epasto and
                  Jieming Mao and
                  Tamalika Mukherjee and
                  Tingting Ou and
                  Peilin Zhong},
  editor       = {Aarti Singh and
                  Maryam Fazel and
                  Daniel Hsu and
                  Simon Lacoste{-}Julien and
                  Felix Berkenkamp and
                  Tegan Maharaj and
                  Kiri Wagstaff and
                  Jerry Zhu},
  title        = {Differentially Private Space-Efficient Algorithms for Counting Distinct
                  Elements in the Turnstile Model},
  booktitle    = {Forty-second International Conference on Machine Learning, {ICML}
                  2025, Vancouver, BC, Canada, July 13-19, 2025},
  series       = {Proceedings of Machine Learning Research},
  volume       = {267},
  publisher    = {{PMLR} / OpenReview.net},
  year         = {2025}
}

@article{Graphs,
  author       = {Sofya Raskhodnikova and
                  Teresa Anna Steiner},
  title        = {Fully Dynamic Algorithms for Graph Databases with Edge Differential
                  Privacy},
  journal      = {Proc. {ACM} Manag. Data},
  volume       = {3},
  number       = {2},
  pages        = {99:1--99:28},
  year         = {2025}
}

@inproceedings{DPOnline,
  author       = {Prateek Jain and
                  Pravesh Kothari and
                  Abhradeep Thakurta},
  editor       = {Shie Mannor and
                  Nathan Srebro and
                  Robert C. Williamson},
  title        = {Differentially Private Online Learning},
  booktitle    = {{COLT} 2012 - The 25th Annual Conference on Learning Theory, June
                  25-27, 2012, Edinburgh, Scotland},
  series       = {{JMLR} Proceedings},
  volume       = {23},
  pages        = {24.1--24.34},
  publisher    = {JMLR.org},
  year      = {2012}
  }

@inproceedings{DPFTRL2,
  author       = {Naman Agarwal and
                  Karan Singh},
  editor       = {Doina Precup and
                  Yee Whye Teh},
  title        = {The Price of Differential Privacy for Online Learning},
  booktitle    = {Proceedings of the 34th International Conference on Machine Learning,
                  {ICML} 2017, Sydney, NSW, Australia, 6-11 August 2017},
  series       = {Proceedings of Machine Learning Research},
  volume       = {70},
  pages        = {32--40},
  publisher    = {{PMLR}},
  year         = {2017}
}

@inproceedings{PrivateSCOl1,
  author       = {Hilal Asi and
                  Vitaly Feldman and
                  Tomer Koren and
                  Kunal Talwar},
  editor       = {Marina Meila and
                  Tong Zhang},
  title        = {Private Stochastic Convex Optimization: Optimal Rates in {L1} Geometry},
  booktitle    = {Proceedings of the 38th International Conference on Machine Learning,
                  {ICML} 2021, 18-24 July 2021, Virtual Event},
  series       = {Proceedings of Machine Learning Research},
  volume       = {139},
  pages        = {393--403},
  publisher    = {{PMLR}},
  year         = {2021},
  url          = {http://proceedings.mlr.press/v139/asi21b.html}
}
\bibliographystyle{alpha} 

\appendix
\section{Reduction to Oblivious Mechanisms}
\label{app:oblivious}
\begin{proof}[Proof of Lemma~\ref{lem:oblivious}]
Assume $\rM$ is an $(\eps,\delta)$-DP mechanism (possibly with data dependent noise) for prefix sums on inputs $x \in \{0,1\}^n$ and define 
\[
\err_\infty(\rM) = \max_{x \in \{0,1\}^n} \E[\|\rM(x)-Ax\|_\infty].
\]
Let $R \in [n]$ be an integer such that $2R$ divides $n$ and let $m = n/(2R)$ for short. To avoid confusion, for $k \in \mathbb{N}$ let $A^{(k)} \in \{0,1\}^{k\times k}$ denote the lower-triangular all-ones matrix, so $A^{(k)}z$ gives the prefix sums of $z \in \R^k$. To be consistent with the rest of the paper we write $A := A^{(n)}$. We first argue that we can construct an $(\eps,\delta)$-DP mechanism $\rM'$ for prefix sums on inputs $x \in \{-R,\dots,R\}^{m}$ with $\max_{x \in \{-R,\dots,R\}^{m}} \E[\|\rM'(x)-A^{(m)}x\|_\infty] \leq \err_\infty(\rM)$. Note that we define two inputs in $\{-R,\dots,R\}^{m}$ to be neighboring if they differ by $1$ in exactly one coordinate and are equal in all other coordinates.

The construction goes as follows. Given a vector $x \in \{-R,\dots,R\}^{m}$, map it to the vector $x' \in \{0,1\}^n$ where each $x_i$ is replaced by a consecutive block of $2R$ bits in $x'$. The block starts with $x_i+R$ many $1$'s, followed by $2R-(x_i+R) = R-x_i$ many $0$'s. Observe that two neighboring vectors in $\{-R,\dots,R\}^{m}$ map to neighboring vectors in $\{0,1\}^n$. We thus have that $\rM'$ is $(\eps,\delta)$-DP. To answer a prefix sum query for the $i$'th prefix, with $i \in [m]$, we use $\rM(x')$ and compute the $i \cdot 2R$'th prefix sum on $x'$ and subtract $i\cdot R$ from it. We can thus via post-processing of $\rM(x')$ obtain $\rM'(x)$ satisfying $\E[\|\rM'(x)-Ax\|_\infty] \leq \err_\infty(\rM)$. 

From $\rM'$, we now wish to construct an oblivious mechanism $\widehat{\rM}$ for binary inputs $x \in \{0,1\}^{m}$. For this, consider the following discrete truncated Laplace $\rz$ with range $\{-R,\dots,R\}$:
\[
\Pr[\rz = k] = \frac{\exp(-\eps |k|)}{\sum_{j=-R}^R \exp(-\eps |j|)}.
\]
Notice that for $k \in \{-R+1,\dots,R-1\}$ we have
\[
\frac{\Pr[\rz = k]}{\Pr[\rz+1 = k]} = \frac{\exp(-\eps |k|)}{\exp(-\eps |k-1|)} \leq \exp(\eps).
\]
The same holds for $\Pr[\rz+1=k]/\Pr[\rz=k]$. For binary $x \in \{0,1\}$ and mechanism $x + \rz$, the boundary values $-R$ and $R$ contribute $\exp(-\eps R)/\sum_{j=-R}^R \exp(-\eps |j|) \leq \exp(-\eps R)$. Setting $R \geq \lceil \ln(1/\delta')/\eps \rceil$ thus suffices for $(\eps,\delta')$-DP of the mechanism $x + \rz$ on binary $x \in \{0,1\}$. We can now execute a variant of the reduction to the oblivious setting by~\cite{ObliviousLinear}. On an input $x \in \{0,1\}^{m}$, the mechanism $\widehat{\rM}$ first samples $\rz_1,\dots,\rz_{m}$ as discrete truncated Laplace variables with range $\{-R,\dots,R\}$. Let $\rz$ be the vector with coordinates $\rz_i$. The mechanism then outputs $\widehat{\rM}(x) = Ax + \rM'(\rz) - A\rz$.

The mechanism $\widehat{\rM}$ is clearly oblivious as the noise $\rM'(\rz)-A\rz$ is independent of $x$. Moreover, since $\max_{x \in \{-R,\dots,R\}^m} \E[\|\rM'(x)-Ax\|_\infty] \leq \err_\infty(\rM)$, we also have $\E[\|\rM'(\rz)-A\rz\|_\infty] \leq \err_\infty(\rM)$. 

Finally, for privacy, consider two neighboring $x,x' \in \{0,1\}^m$ and assume wlog.\ that $x_i=0$ and $x'_i = 1$. Consider any possible output $y \in \R^n$ and let $\rho(\rM',x,y)$ denote the probability density function of $\rM'$ returning $y$ when the input is $x$. Define $\rho(\widehat{\rM},x,y)$ similarly but for $\widehat{\rM}$. Then
\begin{align*}
\rho(\widehat{\rM},x,y) &= \sum_{z \in \{-R,\dots,R\}^m} \prod_{j=1}^m \Pr[\rz_j=z_j] \rho(\rM', z, y-Ax+Az).
\end{align*}
At the same time, we have
\begin{align*}
\rho(\widehat{\rM},x',y) &= \sum_{z \in \{-R,\dots,R\}^m} \prod_{i=1}^m \Pr[\rz_i=z_i] \rho(\rM', z, y-Ax'+Az) \\
&= \sum_{z \in \{-R,\dots,R\}^m} \prod_{i=1}^m \Pr[\rz_i=z_i] \rho(\rM', z, y-Ax-Ae_i+Az).
\end{align*}
Now for any measurable subset $S \subseteq \R^m$ and $z_i < R$, we have by privacy of $\rM'$ and $z,z+e_i$ being neighboring that
\[
\int_{y \in S} \rho(\rM',z,y-Ax+Az) dy \leq e^{\eps} \int_{y \in S} \rho(\rM', z+e_i, y-Ax+Az) dy + \delta.
\]
We therefore have
\begin{align*}
    &\int_{y \in S} \rho(\widehat{\rM}, x,y) dy \leq\\
    &\sum_{z \in \{-R,\dots,R\}^m} \prod_{j=1}^m \Pr[\rz_j=z_j] \left(e^{\eps} \int_{y \in S} \rho(\rM', z+e_i, y-Ax+Az) dy + \delta \right)
    \leq\\ &\sum_{z \in \{-R,\dots,R\}^m : z_i \neq R} e^\eps \Pr[\rz_i=z_i+1] \prod_{j \neq i} \Pr[\rz_j=z_j] \left(e^{\eps} \int_{y \in S} \rho(\rM', z+e_i, y-Ax+Az) dy + \delta \right)+\Pr[\rz_i = R] 
    \leq\\
    & \sum_{z \in \{-R,\dots,R\}^m : z_i \neq R} e^{2\eps}  \Pr[\rz_i=z_i+1] \prod_{j \neq i} \Pr[\rz_j=z_j] \left( \int_{y \in S} \rho(\rM', z+e_i, y-Ax+Az) dy  \right) + e^\eps \delta + \delta'.
\end{align*}
Now do the substitution $\tilde{z} = z +e_i$ to conclude
\begin{align*}
&\int_{y \in S} \rho(\widehat{\rM}, x,y) dy \leq\\ 
&\sum_{\tilde{z} \in \{-R,\dots,R\}^m} e^{2\eps}  \Pr[\rz_i=\tilde{z}_i] \prod_{j \neq i} \Pr[\rz_j=\tilde{z}_j] \left( \int_{y \in S} \rho(\rM', \tilde{z}, y-Ax+A\tilde{z}-Ae_i) dy  \right) + e^\eps \delta + \delta' =\\
& e^{2\eps} \int_{y \in S} \rho(\widehat{\rM}, x',y) dy  + e^\eps \delta + \delta'.
\end{align*}
A symmetric argument with the roles of $x$ and $x'$ swapped finally concludes that $\widehat{\rM}$ is $(2 \eps, e^\eps\delta + \delta')$-DP. 
\end{proof}

\section{Median-of-Three Trick}
\label{sec:median}
Here we show how to derive Corollary~\ref{cor:median} from the work of~\cite{mediantrick}. For the proof, we use the following auxiliary result by Larsen, Pagh and Tetek:
\begin{lemma}[\cite{mediantrick}]
\label{lem:mediantrick}
Let $f: \R^+ \to \R^+$ be a non-increasing function and let $t$ be a positive integer. Then
\[
\int_{x=0}^\infty f(x^{1/t})^t dx \leq \left( \int_{x=0}^\infty f(x) dx\right)^t.
\]
\end{lemma}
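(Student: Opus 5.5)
The plan is to reduce the inequality to a pointwise comparison of derivatives. If $\int_0^\infty f(x)\,dx=\infty$ the right-hand side is infinite and there is nothing to prove, so assume $I:=\int_0^\infty f(x)\,dx<\infty$.

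\textbf{Step 1: change of variables.} Substitute $x=y^t$, so that $dx = t y^{t-1}\,dy$. The left-hand side becomes
\[
\int_{0}^\infty t\,y^{t-1} f(y)^t\,dy .
\]
This is legitimate since the integrand is nonnegative and measurable; $f$ is monotone, hence measurable.

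\textbf{Step 2: a pointwise bound from monotonicity.} Define $F(y):=\int_0^y f(x)\,dx$, which is finite for every $y$ because $I<\infty$. Since $f$ is non-increasing, $f(x)\ge f(y)$ for all $x\le y$. Hence
\[
F(y)\ge y\,f(y) \quad\text{for all } y\ge 0.
\]

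\textbf{Step 3: compare derivatives.} $F$ is absolutely continuous on every bounded interval $[0,Y]$, with $F'=f$ almost everywhere, and it is bounded by $I$. Since $s\mapsto s^t$ is Lipschitz on $[0,I]$, the composition $F^t$ is also absolutely continuous on $[0,Y]$. Its derivative, almost everywhere, satisfies
\[
\frac{d}{dy}F(y)^t = t\,F(y)^{t-1} f(y) \;\ge\; t\,(y f(y))^{t-1} f(y) = t\,y^{t-1} f(y)^t,
\]
where the inequality uses Step 2 together with $f\ge 0$ and $t-1\ge 0$.

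\textbf{Step 4: integrate.} Integrate over $[0,Y]$ and use the fundamental theorem of calculus for absolutely continuous functions, with $F(0)=0$:
\[
\int_0^Y t\,y^{t-1}f(y)^t\,dy \;\le\; F(Y)^t \;\le\; I^t.
\]
Letting $Y\to\infty$ and applying monotone convergence gives $\int_0^\infty t\,y^{t-1}f(y)^t\,dy\le I^t$. By Step 1 this is exactly the claimed inequality.

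The only delicate point is the regularity needed to justify the integral identity in Step 4. A monotone $f$ may be unbounded near $0$, but it is locally integrable whenever $I<\infty$, and the Lipschitz-composition argument in Step 3 handles this case.
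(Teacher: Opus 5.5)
Your proof is correct and complete. The substitution $x=y^t$, the monotonicity bound $F(y)\ge y f(y)$, the a.e.\ chain rule for the absolutely continuous function $F^t$, and the final monotone convergence step are all justified as you state them.

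The paper itself does not prove this lemma. It quotes it from \cite{mediantrick} and only invokes it with $t=2$ in the proof of Corollary~\ref{cor:median}, so there is no in-paper argument to compare against.

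Two small remarks follow.

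\emph{Generality.} Your argument never uses that $t$ is an integer. It only uses that $s\mapsto s^t$ is Lipschitz on $[0,I]$ and that $s\mapsto s^{t-1}$ is non-decreasing, so it proves the inequality for every real $t\ge 1$.

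\emph{An alternative for integer $t$.} There is an equally short route that avoids differentiation altogether. By Tonelli and monotonicity of $f$,
\[
\Bigl(\int_0^\infty f\Bigr)^t=\int_{[0,\infty)^t}\prod_{j=1}^t f(x_j)\,dx\;\ge\;\int_{[0,\infty)^t} f\bigl(\max_j x_j\bigr)^t\,dx .
\]
Since $[0,y]^t=\{x\in[0,\infty)^t:\max_j x_j\le y\}$ has volume $y^t$, the right-hand side equals
\[
\int_0^\infty t\,y^{t-1}f(y)^t\,dy .
\]
This is exactly the integral from your Step~1.
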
 
\begin{proof}[Proof of Corollary~\ref{cor:median}]
We have
\begin{align*}
    \E[\median\{\rz_1,\rz_2,\rz_3\}^2] 
    &\leq \E[\min\{|\rz_1|,|\rz_2|\}^2]  + \E[\min\{|\rz_1|,|\rz_3|\}^2]  +\E[\min\{|\rz_2|,|\rz_3|\}^2]  \\
    &= 3 \cdot \E[\min\{\rz_1^2,\rz_2^2\}].
\end{align*}
Let $p_x = \Pr[|\rz_1| > x]$ and recall the layer-cake formula $\E[|\rz_1|] = \int_{x=0}^\infty p_x dx$. By independence of $\rz_1,\rz_2$, we get
\begin{align*}
\E[\min\{\rz_1^2,\rz_2^2\}] &\leq \int_{x=0}^\infty \Pr[\rz_1^2 > x \wedge \rz_2^2 > x ] dx \\
&= \int_{x=0}^\infty \Pr[\rz_1^2 > x ]^2 dx \\
&= \int_{x=0}^\infty \Pr[|\rz_1| > \sqrt{x} ]^2 dx 
\end{align*}
Letting $f(x) = \Pr[|\rz_1| > x ]$, we get from Lemma~\ref{lem:mediantrick} that
\begin{align*}
\E[\median\{\rz_1,\rz_2,\rz_3\}^2] &\leq 3\cdot \left( \int_{x=0}^\infty \Pr[|\rz_1] > x] dx \right)^2 \\
&= 3 \cdot \E[|\rz_1|]^2.
\end{align*}
\end{proof}

\section{Properties of Attack Vectors}
\label{sec:psiprop}
Here we prove Lemma~\ref{lem:psiprop}, stating that the $\psi^i$ vectors are orthogonal and unit length.
\begin{proof}[Proof of Lemma~\ref{lem:psiprop}]
Let $i \neq j$ with $i,j \in \cL(\T^s)$. Then
\begin{align*}
    \langle \psi^i, \psi^j \rangle &= 2^{-h/2} + \sum_{u : i,j \in \cL(\T^u)} 2^{-(h/2-d)} (-1)^{\one\{i \in \cL(\T^{\ell(u)})\} + \one\{j \in \cL(\T^{\ell(u)})\}}
\end{align*}
If $u$ denotes the lowest common ancestor of the $i$'th and $j$'th leaf, and $u$ is at depth $2d$, then $u$ contributes $-2^{-(h/2-d)}$ to the inner product, the $k$'th ancestor of $u$ contributes $2^{-k} 2^{-(h/2-d)}$, and all other nodes contribute $0$. Finally, the dummy coordinate contributes $2^{-h/2}$. Hence the inner product precisely equals 
\[
-2^{-(h/2-d)} + 2^{-h/2} + \sum_{k=1}^d 2^{-k} 2^{-(h/2-d)} = -2^{-(h/2-d)} + 2^{-h/2} + 2^{-(h/2-d)} - 2^{-h/2} = 0.
\]
Thus the $\psi^i$ vectors are orthogonal. We also have
\begin{align*}
    \|\psi^i\|_2^2 &= 2^{-h/2} + \sum_{d=0}^{h/2-1} 2^{-(h/2-d)} = 1.
\end{align*}
\end{proof}

\end{document}